\numberwithin{equation}{section}
\theoremstyle{definition}
\newtheorem{theorem}{Theorem}[section]
\newtheorem{corollary}[theorem]{Corollary}
\newtheorem{proposition}[theorem]{Proposition}
\newtheorem{definition}[theorem]{Definition}
\newtheorem{example}[theorem]{Example}
\newtheorem{notation}[theorem]{Notation}
\newtheorem{remark}[theorem]{Remark}
\newtheorem{lemma}[theorem]{Lemma}
\newcommand{\numberset}{\mathbb}
\newcommand{\Z}{\numberset{Z}}
\newcommand{\R}{\numberset{R}}
\newcommand{\C}{\mathcal{C}}
\newcommand{\F}{\numberset{F}}
\newcommand{\fq}{\F_q}
\newcommand{\mV}{\mathcal{V}}
\newcommand{\mC}{\mathcal{C}}
\newcommand{\mD}{\mathcal{D}}
\newcommand{\wH}{\omega^{\textnormal{H}}}
\newcommand\red[1]{{{\textcolor{red}{#1}}}}
\newtheorem{claim}{Claim}
\newcommand*{\myproofname}{Proof of the claim}
\newenvironment{clproof}[1][\myproofname]{\begin{proof}[#1]}{\end{proof}}
\title{\textbf{LRCs: Duality, LP Bounds, and Field Size}}
\author{Anina Gruica,
%\thanks{A. G. is supported by the Dutch Research Council through grant OCENW.KLEIN.539.},\, 
Benjamin Jany,
%\thanks{B. J. is supported by the Dutch Research Council through grant VI.Vidi.203.045.}\, and 
Alberto Ravagnani,
%\thanks{A. R. is supported by the Dutch Research Council through grants VI.Vidi.203.045, OCENW.KLEIN.539, and by the Royal Academy of Arts and Sciences of the Netherlands.}
}
\affil{Department of Mathematics and Computer Science \\ Eindhoven University of 
	Technology, the Netherlands}
\date{}
\begin{document}

\maketitle

\begin{abstract}
We develop a duality theory of locally recoverable codes (LRCs) and apply it to establish a series of new bounds on their parameters.
We introduce and study a refined notion of weight distribution that captures the code's locality.
Using a duality result analogous to a MacWilliams identity, we then derive an LP-type bound that improves on the best known bounds in several instances. Using a dual distance bound and the theory of generalized weights, we obtain non-existence results for optimal LRCs over small fields.
In particular, we show that an optimal LRC must have both minimum distance and block length relatively small compared to the field size.
\end{abstract}

%\tableofcontents

\medskip
\section{Introduction}

In the last decade, \textit{locally recoverable codes} have been a central topic in communication and distributed storage~\cite{cadambe2015bounds, freij2018matroid,Gopalan, hao2020bounds, tamo2014family}. A code has good locality properties when each entry of each codeword can be recovered from a small set of other entries of the same codeword. This property 
is captured by a parameter of the underlying code called \textit{locality}.
Small locality 
allows, for instance, a fast recovery process when the code is used for distributed storage. 
Ideally, a code has both small locality and large minimum distance,
as the latter offers protection from 
errors and erasures.

Most of the research on locally recoverable codes focuses on bounds~\cite{agarwal2018combinatorial,cadambe2015bounds,Gopalan,guruswami19,hao2020bounds}, constructions and decoding~\cite{barg2017locally2,barg2017locally, Micheli,tamo2014family}. This paper pertains to the former line of research.

As one expects, a code cannot have all the desirable properties at the same time. In particular, it cannot have arbitrarily small locality while having large minimum distance for the given dimension and block length. The 
trade-offs among all these parameters are captured by various bounds, the best known of which is probably the \textit{Generalized Singleton Bound}~\cite{Gopalan}. Codes attaining this bound with equality are known to exist for some parameter sets over large fields. Over small fields,
it is still a wide open questions which parameters locally recoverable codes can have.

In this paper, we investigate two aspects of the theory of locally recoverable codes that have overall been neglected so far, 
namely their \textit{duality theory} and 
\textit{field size}.

We first demonstrate that duality results, besides being mathematically 
interesting, represent a very powerful tool for investigating the parameters of locally recoverable codes and establishing new bounds.
In the second part of the paper 
we derive various results that link the size of the underlying field to the other parameters of a locally recoverable code. In particular, we prove that codes attaining the 
Generalized Singleton Bound of~\cite{Gopalan} cannot exist for some parameter sets.

The rest of the introduction briefly illustrates the contributions made by this paper, pointing the reader to the relevant sections.

After recalling the basics of locally recoverable codes in Section~\ref{sec:lrc}, 
we introduce a refined notion of weight distribution of a code, which is able to capture its locality as well as the weights of the codewords; see Section~\ref{sec:finerweight}. In the same section, we then establish a duality result for the refined weight distribution, which is similar to, and yet different from,
a MacWilliams identity. The identity combined with linear programming produces a bound on the parameters of a locally recoverable code that improves the best bounds currently available for several parameter sets; see Section~\ref{sec:Applications}. We illustrate this with comparison tables.

The last part of the paper, namely Section~\ref{sec:fieldsize}, is devoted to the role played by the field size in the theory of locally recoverable codes. We propose various arguments based on the notions of dual distance and generalized weights. This leads us to establishing new bounds for the parameters of a locally recoverable code, which in turn gives us non-existence results for codes meeting the Generalized Singleton Bound over small fields.

\medskip

\paragraph*{Acknowledgements.}
We are grateful to 
Markus Grassl for discussing with us the connection between the results of this paper and the split weight enumerator of a code. A. G. is supported by the Dutch Research Council through grant OCENW.KLEIN.539. 
B. J. is supported by the Dutch Research Council through grant VI.Vidi.203.045.
A. R. is supported by the Dutch Research Council through grants VI.Vidi.203.045, 
OCENW.KLEIN.539, 
and by the Royal Academy of Arts and Sciences of the Netherlands.

\bigskip

\section{Locally Recoverable Codes} \label{sec:lrc}

We start by recalling 
concepts of classical coding theory and locally recoverable codes, and by establishing the notation. Throughout the paper, $n \ge 2$ is an integer, $q$ denotes a prime power, and $\F_q$ is the finite field of $q$ elements. We will denote the set $\{1, \dots, n\}$ by $[n]$.

\begin{definition}
A (\textbf{linear}) \textbf{code} is an $\F_q$-linear subspace $\mC \leq \F_q^n$  endowed with the Hamming metric. 
The \textbf{dual} of $\mC$ is $\mC^\perp :=\{x \in \F_q^n : x\cdot y^\top = 0 \textnormal{ for all } y \in \mC\}.$ Moreover, we say that $\mC \le \F_q^n$ is \textbf{non-degenerate} if there is no $i \in [n]$ for which each $x \in \mC$ has $x_i=0$.
The codes $\{0\}$ and $\F_q^n$ are called
\textbf{trivial}.
\end{definition}

In this document, by ``code'' we always mean ``non-trivial code'', unless otherwise specified.

\begin{definition}
The \textbf{minimum distance} of
a code $\mC$ is defined as 
\begin{align*}
    d(\mC) := \min\{\wH(x) : x \in \mC \setminus \{0\}\}, 
\end{align*}
where $\wH(x) :=|\{i \, : \, x_i \neq 0\}|$ is the (\textbf{Hamming}) \textbf{weight} of $x$.
\end{definition}

A large minimum distance guarantees the error detection and correction  capabilities of a code, so naturally one wants this parameter to be as large as possible.
However, as it is well known, there is a trade-off between the minimum distance and the dimension of a code with given length. This trade-off is expressed by the famous \emph{Singleton Bound}, which says that for a code $\mC \le \F_q^n$ we always have $d(\mC) \leq n - \dim_{\F_q}(\mC) +1$; see \cite{singleton1964maximum}. 
Codes that meet the bound with equality are called \textbf{MDS}. It remains a wide open problem in coding theory to determine when codes achieving this bound (or other bounds) exist. 

Apart from the minimum distance and the dimension, in this paper we consider an additional parameter for linear codes, called 
\emph{locality}.

\begin{definition}\label{def:locality}
    A code $\mC \leq \F_q^n$ has \textbf{locality} $r$ if for every $i \in \{1, \ldots, n\}$ there exists a set $S_i \subseteq [n]$, called a \textbf{recovery set} for (the coordinate) $i$, with the following properties:
    \begin{itemize}
        \item[1)] $i \notin S_i$,
        \item[2)] $|S_i| \leq r$,
        \item[3)] if $x, y \in \mC$ and $x_j = y_j$ for all $j \in S_i$, then $x_i = y_i$. 
    \end{itemize}
We call \textbf{LRC} (\textbf{locally recoverable code}) a code for which
the locality parameter is considered.
\end{definition}

\begin{example} \label{ex:simplex}
Let $\mC \le \F_2^7$ be the binary simplex code of dimension 3, which consists of codewords of the form
\begin{align*}
    (u_1,u_2,u_3)\begin{pmatrix} 
    1 & 0 & 0 & 1 & 0 & 1 & 1 \\
    0 & 1 & 0 & 1 & 1 & 0 & 1 \\
    0 & 0 & 1 & 0 & 1 & 1 & 1 
\end{pmatrix} =
(u_1,u_2,u_3,u_1+u_2,u_2+u_3,u_1+u_3,u_1+u_2+u_3),
\end{align*}
where $u_1,u_2,u_3 \in \F_2$.
Without loss of generality,
we focus on recovering coordinate $i=1$. It is not hard to check that recovery sets for 1 are
\begin{align*}
   \{2,4\},\, \{3,6\},\, \{5,7\},\, \{2,3,7\},\, \{4,6,7\},\, \{3,4,5\},\, \{2,5,6\}
\end{align*}
as well as any subset of $\{2,\dots,7\}$ containing one of the above.
By symmetry of the coordinates, this code has (minimum) locality 2.
\end{example}

As the name suggests, a recovery set $S_i$ for $i$ allows to recover the coordinate $x_i$ of any codeword $x \in \mC$ using only the coordinates of $x$ indexed by $S_i$. 
This is done via a \textbf{recovery function} $f_i : \pi_{S_i}(\mC) \rightarrow \F_q$ that satisfies $f_i(\pi_{S_i}(x)) = x_i$ for all $x \in \mC$, where $\pi_{S_i}$ denotes the projection map 
onto the coordinates indexed by the elements of $S_i$. Interestingly, for linear codes the recovery functions must be linear; see~\cite[Lemma 10]{agarwal2018combinatorial}.

\begin{proposition}\label{thm:linearfct}
Let $\mC \le \F_q^n$ be a code and let $S_i$ be a recovery set for the coordinate~$i$ with recovery function $f_i$. Then $f_i$ is an $\F_q$-linear map. 
\end{proposition}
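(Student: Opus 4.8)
The plan is to show that the recovery function $f_i$ is forced to be linear because its graph is essentially determined by a linear condition coming from the code. Write $m = |S_i|$ and consider the projection $\pi_{S_i} \colon \mC \to \F_q^m$. Its image $\pi_{S_i}(\mC)$ is an $\F_q$-linear subspace of $\F_q^m$, since $\mC$ is linear and $\pi_{S_i}$ is an $\F_q$-linear map. So the domain of $f_i$ is already an $\F_q$-subspace; the only thing to prove is that $f_i$ respects addition and scalar multiplication.

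The key observation is condition 3) in Definition~\ref{def:locality}: if $x, y \in \mC$ agree on all coordinates indexed by $S_i$, then $x_i = y_i$. Equivalently, $f_i$ is well defined on $\pi_{S_i}(\mC)$, i.e. $\pi_{S_i}(x) = \pi_{S_i}(y)$ forces $x_i = y_i$ for $x,y \in \mC$. First I would record the stronger reformulation: if $z \in \mC$ satisfies $\pi_{S_i}(z) = 0$, then $z_i = 0$ (apply condition 3) to $x = z$ and $y = 0$). Now take two elements $u, v \in \pi_{S_i}(\mC)$ and scalars $\lambda, \mu \in \F_q$; pick preimages $x, y \in \mC$ with $\pi_{S_i}(x) = u$ and $\pi_{S_i}(y) = v$. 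Then $\lambda x + \mu y \in \mC$ is a preimage of $\lambda u + \mu v$ under $\pi_{S_i}$, so by definition of the recovery function $f_i(\lambda u + \mu v) = (\lambda x + \mu y)_i = \lambda x_i + \mu y_i = \lambda f_i(u) + \mu f_i(v)$. The well-definedness step guarantees this value does not depend on the chosen preimages, so $f_i$ is $\F_q$-linear.

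Concretely, I would carry out the proof in three short steps: (i) note $\pi_{S_i}(\mC)$ is an $\F_q$-subspace; (ii) observe that condition 3) says $f_i \circ \pi_{S_i} = \mathrm{ev}_i|_{\mC}$, where $\mathrm{ev}_i \colon \F_q^n \to \F_q$ is the (linear) evaluation of the $i$-th coordinate, and that this determines $f_i$ uniquely on the image; (iii) for $u,v$ in the image and $\lambda,\mu \in \F_q$, lift to $x,y \in \mC$ and compute $f_i(\lambda u + \mu v)$ using the linearity of $\mathrm{ev}_i$ and of $\mC$.

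There is no real obstacle here; the statement is essentially a formal consequence of linearity plus the defining property of the recovery set. The only point requiring a moment's care is making sure the computation in step (iii) is legitimate, namely that $f_i$ is genuinely a function on $\pi_{S_i}(\mC)$ (so that $f_i(\lambda u + \mu v)$ is unambiguous) — but this is exactly what condition 3) provides, and it is worth stating explicitly rather than glossing over. One could alternatively phrase the whole argument as: $f_i$ is the unique map making $f_i \circ \pi_{S_i} = \mathrm{ev}_i|_{\mC}$ commute, and a map that becomes linear after precomposition with a surjective linear map onto its domain is itself linear.
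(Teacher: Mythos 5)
Your argument is correct. The paper itself does not prove Proposition~\ref{thm:linearfct}; it simply cites \cite[Lemma 10]{agarwal2018combinatorial}, so there is no in-text proof to compare against, but your self-contained argument is exactly the standard one and is complete: $\pi_{S_i}(\mC)$ is a subspace, condition~3) of Definition~\ref{def:locality} makes $f_i$ well-defined on it, and lifting $u,v$ to $x,y\in\mC$ and using linearity of $\mC$ and of the $i$-th coordinate map gives $f_i(\lambda u+\mu v)=\lambda f_i(u)+\mu f_i(v)$. One tiny wording nitpick: you call the implication ``$\pi_{S_i}(z)=0\Rightarrow z_i=0$'' a \emph{stronger} reformulation of condition~3), but over a linear code the two are equivalent, and in fact your main computation uses well-definedness (condition~3) as stated) rather than this special case, so the remark is harmless but not load-bearing.
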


Similarly to classical codes, it is important to understand the trade-offs between the locality parameter and other parameters of the code, such as its minimum distance and dimension.
In~\cite{Gopalan}, the authors showed how locality impacts the dimension of an LRC by establishing a generalization of the Singleton Bound. 

\begin{theorem}[Generalized Singleton Bound] \label{thm:single}
Let $\mC \le \F_q^n$ be a code with locality $r$, dimension $k$, and minimum distance~$d$. Then 
\begin{equation}\label{eqt:single}
    k + \left \lceil \frac{k}{r} \right \rceil \leq n - d+2.
\end{equation}
\end{theorem}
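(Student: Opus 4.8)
The plan is to prove the Generalized Singleton Bound by an extremal argument: build a large subcode that is supported on few coordinates, and then apply the ordinary Singleton Bound to a suitable puncturing. Concretely, I would construct a set $T \subseteq [n]$ of coordinates together with a subcode $\mD \le \mC$ such that every codeword of $\mD$ is zero outside $T$, and such that $\dim_{\F_q}(\mD)$ is large while $|T|$ is not too large. The key quantitative relation I am aiming for is: there exists $T$ with $|T| \le k + \lceil k/r\rceil - 2$ and a subcode $\mD$ with $\dim(\mD) \ge k - |T| + (\text{something})$; more cleanly, I want to produce $T$ with $\dim(\pi_T(\mC)) \le |T| - \lceil k/r \rceil + \dots$ — the precise bookkeeping is the part to get right.

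The cleanest route I know: greedily grow a coordinate set $T$ using recovery sets. Start with $T = \emptyset$. As long as $\dim(\pi_T(\mC)) < k$ (equivalently, as long as $T$ does not already ``determine'' all of $\mC$), pick a coordinate $i \notin T$ that is not yet determined by $T$, and adjoin to $T$ both $\{i\}$ and a recovery set $S_i$ for $i$ (or rather the part of $S_i$ not already in $T$). Each such step adds at most $r$ new coordinates to $T$ but, crucially, increases $\dim(\pi_T(\mC))$ by at most $r-1$ on the steps where the recovery relation is ``fresh'' — because the value at $i$ is a linear function (by Proposition~\ref{thm:linearfct}) of the coordinates in $S_i$, so adding $\{i\} \cup S_i$ grows the projected dimension by at most $|S_i| \le r$, and once we have collected enough such sets the dimension saturates at $k$. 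Counting: to reach projected dimension $k$ we need roughly $\lceil k/r \rceil$ ``rounds'', each contributing about $r$ coordinates but only $r-1$ to an independent-coordinate count minus the recovery savings, which yields a $T$ with $\dim(\pi_T(\mC)) = k$ and $|T| \le k + \lceil k/r\rceil - 1$, hence $|T| - 1 \le k + \lceil k/r \rceil - 2$.

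Given such a $T$ with $\dim(\pi_T(\mC)) = k$, I restrict attention to a coordinate $j \notin T$ (which exists provided $|T| < n$; the case $|T| = n$ has to be handled separately and is where the ``$+2$'' versus ``$+1$'' comes from). Consider the shortened code $\mD = \{x \in \mC : x_j = 0 \text{ for all } j \in [n]\setminus T'\}$ for an appropriate $T' \supseteq T$, or dually the punctured code $\pi_{[n]\setminus T}(\mC)$; I then apply the ordinary Singleton Bound to the punctured code on $[n] \setminus (T \setminus \{t_0\})$ for a carefully chosen $t_0$, using that puncturing a code at $s$ coordinates lowers the minimum distance by at most $s$ while keeping the dimension equal to $k$ (since $\dim(\pi_T(\mC)) = k$ means no information is lost when we keep $T$). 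This gives $d - |T| + 1 \le d(\text{punctured}) \le n - |T| - k + \dots + 1$, which after substituting the bound on $|T|$ rearranges to \eqref{eqt:single}.

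The main obstacle I anticipate is the careful accounting in the greedy construction: making precise that each round genuinely costs at most $r-1$ toward the relevant dimension count (the subtlety is coordinates shared between different recovery sets, and ensuring the process terminates in exactly $\lceil k/r\rceil$ effective rounds rather than more), and correctly treating the boundary case where the greedily constructed $T$ exhausts all of $[n]$ so that there is no spare coordinate $j$ to puncture — this edge case is precisely what separates $n-d+1$ from $n-d+2$ and must be argued by showing that in that situation a stronger relation among $k$, $r$, $n$ holds directly. I would also double-check that the recovery-set properties (1)–(3) in Definition~\ref{def:locality}, together with linearity from Proposition~\ref{thm:linearfct}, are exactly what licenses the dimension drop of $r-1$ rather than $r$ per round.
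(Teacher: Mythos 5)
The paper does not actually give a direct proof of Theorem~\ref{thm:single}: it is cited from \cite{Gopalan}, re-derived from the shortening bound in Remark~\ref{rem:compdraw}, and also recovered as the $i=1$ case of Corollary~\ref{thm:genweibound} via the parameters $\mu_i$ in Section~\ref{sec:fieldsize}. Your greedy construction with recovery sets follows the original Gopalan--Huang--Simitci--Yekhanin route, which is a legitimate independent argument, but as written the logic is inverted and the closing step would collapse to the plain Singleton bound.

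The central problem is your target for $T$. You aim for $\dim(\pi_T(\mC)) = k$ with $|T|$ small, and then puncture to (near) $T$ and apply Singleton. That cannot give anything beyond the classical bound: if $\dim(\pi_T(\mC)) = k$ then $\pi_T(\mC)$ has length $|T|$, dimension $k$, and minimum distance at least $d-(n-|T|)$, and Singleton yields $k \le |T| - \bigl(d-(n-|T|)\bigr) + 1 = n-d+1$ regardless of $|T|$. What the argument actually needs is a set $T$ with $\dim(\pi_T(\mC)) \le k-1$ and $|T|$ \emph{large}, namely $|T| \ge k + \lceil k/r\rceil - 2$. Then $\ker(\pi_T|_{\mC}) \ne 0$, so there is a nonzero $x \in \mC$ with $\sigma(x)\subseteq T^c$, hence $d \le \wH(x) \le n-|T| \le n-k-\lceil k/r\rceil+2$, which is exactly \eqref{eqt:single}. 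The bookkeeping needs to be fixed to match: each round adjoins $\{i\}\cup(S_i\setminus T)$, so $|T|$ grows by at most $r+1$ (not $r$), while $\dim(\pi_T(\mC))$ grows by at most $r$ (not $r-1$), since $x_i$ is a linear function of $\pi_{S_i}(x)$ by Proposition~\ref{thm:linearfct}. Thus each round increases $|T|-\dim(\pi_T(\mC))$ by at least one, at least $\lceil(k-1)/r\rceil \ge \lceil k/r\rceil-1$ rounds are needed to push the rank up to $k-1$ (stopping partway through the last round if adding all of $S_i$ would overshoot), and this yields the required \emph{lower} bound on $|T|$, not the upper bound you wrote down.
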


Note that the bound of Theorem~\ref{thm:single} coincides with the classical Singleton Bound if $k=r$. Codes whose parameters meet the bound (\ref{eqt:single}) with equality are called \textbf{optimal LRC}. Note that the simplex code in Example~\ref{ex:simplex} is optimal. 

In~\cite{tamo2014family}, a construction of optimal LRC was given for $q \geq n$, $r \mid k$, and $r+1 \mid n$. 
However,  if these divisibility constraints are not satisfied or if $q < n$, optimal LRC do not always exist or have not been found yet. In order to establish or exclude the existence of optimal LRCs for small~$q$, it 
is natural 
to establish bounds
linking the locality 
to the size of the underlying field. The following is a shortening bound established in~\cite{cadambe2015bounds}, that improves the Singleton-type bound in Theorem~\ref{thm:single}. In the minimum we include the (trivial) case $t=0$, even though the original statement does not. Note that for some parameters the minimum is indeed attained by $t=0$.

\begin{theorem}[Shortening Bound]\label{thm:kopt}
Let $\mC \le \F_q^n$ be a code with locality $r$, dimension $k$ and minimum distance~$d$. We have
\begin{equation}\label{eqt:kopt}
    k \le \min_{t \in \Z_{\ge 0}}\left\{rt+k_{\textnormal{opt}}^{(q)}(n-t(r+1),d)\right\},
\end{equation}
where $\smash{k_{\textnormal{opt}}^{(q)}(n,d)}$ is the largest possible dimension of a code of length $n$ and minimum distance~$d$ over $\F_q$.
\end{theorem}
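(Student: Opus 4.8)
The plan is to establish the inequality separately for each $t\in\Z_{\ge 0}$ with $t(r+1)\le n$ (for larger $t$ the term $k_{\textnormal{opt}}^{(q)}(n-t(r+1),d)$ is read as $+\infty$ and so is irrelevant to the minimum; the case $t=0$ is just the trivial remark that $\mC$ itself has length $n$ and minimum distance $d$, giving $k\le k_{\textnormal{opt}}^{(q)}(n,d)$). The engine of the argument is \emph{shortening}: for $S\subseteq[n]$ put $\mC_S:=\{x\in\mC:x_j=0\text{ for all }j\in S\}$ and, after deleting the coordinates indexed by $S$, view $\mC_S$ as a code of length $n-|S|$. Two standard facts will be used: $\dim\mC_S=\dim\mC-\dim\pi_S(\mC)$, because $\mC_S$ is the kernel of $\pi_S$ restricted to $\mC$; and $d(\mC_S)\ge d(\mC)$, because the deleted coordinates were already zero, so the weights of the surviving codewords are unchanged. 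Together these give $\dim\mC_S\le k_{\textnormal{opt}}^{(q)}(n-|S|,d)$, that is, $k\le\dim\pi_S(\mC)+k_{\textnormal{opt}}^{(q)}(n-|S|,d)$ for every $S\subseteq[n]$.

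A naive induction on $t$ (shorten at one local repair group, recurse) does not work, since the shortened code need not retain locality $r$; instead I would construct the shortening set $S$ globally and greedily, so as to achieve \emph{both} $|S|\le t(r+1)$ \emph{and} $\dim\pi_S(\mC)\le|S|-t$. Set $S^{(0)}=\emptyset$ and, for $\ell=1,\dots,t$, choose any $i_\ell\in[n]\setminus S^{(\ell-1)}$ (possible since $|S^{(\ell-1)}|\le(\ell-1)(r+1)<t(r+1)\le n$), pick a recovery set $R_\ell$ for $i_\ell$ with $|R_\ell|\le r$, and set $S^{(\ell)}:=S^{(\ell-1)}\cup R_\ell\cup\{i_\ell\}$. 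Since $i_\ell\notin R_\ell$ (property~1 of Definition~\ref{def:locality}) and $i_\ell\notin S^{(\ell-1)}$, the set $S^{(\ell)}$ has exactly $|R_\ell\setminus S^{(\ell-1)}|+1$ more elements than $S^{(\ell-1)}$; on the other hand $x_{i_\ell}$ is determined by the coordinates in $R_\ell\subseteq S^{(\ell)}$ (property~3 of Definition~\ref{def:locality}, or Proposition~\ref{thm:linearfct}), so adjoining $i_\ell$ does not raise the projection dimension, whence $\dim\pi_{S^{(\ell)}}(\mC)\le\dim\pi_{S^{(\ell-1)}\cup R_\ell}(\mC)\le\dim\pi_{S^{(\ell-1)}}(\mC)+|R_\ell\setminus S^{(\ell-1)}|$. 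Subtracting, $|S^{(\ell)}|-\dim\pi_{S^{(\ell)}}(\mC)\ge|S^{(\ell-1)}|-\dim\pi_{S^{(\ell-1)}}(\mC)+1$, so by induction $|S^{(t)}|-\dim\pi_{S^{(t)}}(\mC)\ge t$; and $|S^{(t)}|\le t(r+1)$ because each round adds at most $|R_\ell|+1\le r+1$ coordinates. Take $S:=S^{(t)}$.

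Feeding this into the shortening inequality yields $k\le\dim\pi_S(\mC)+k_{\textnormal{opt}}^{(q)}(n-|S|,d)\le(|S|-t)+k_{\textnormal{opt}}^{(q)}(n-|S|,d)$, and it remains to push $|S|$ up to $t(r+1)$. For this I would verify that $g(m):=(m-t)+k_{\textnormal{opt}}^{(q)}(n-m,d)$ is non-decreasing on $\{0,1,\dots,n\}$, which reduces to the elementary estimate $k_{\textnormal{opt}}^{(q)}(m-1,d)\ge k_{\textnormal{opt}}^{(q)}(m,d)-1$ (shorten a length-$m$ optimal code at one coordinate). Since $|S|\le t(r+1)\le n$, this gives $k\le g(|S|)\le g(t(r+1))=tr+k_{\textnormal{opt}}^{(q)}(n-t(r+1),d)$, as claimed.

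The point that needs care, and the place where the argument could go wrong, is the bound $\dim\pi_S(\mC)\le|S|-t$: the weaker $\dim\pi_S(\mC)\le rt$ together with $|S|\le t(r+1)$ is \emph{not} sufficient, because $k_{\textnormal{opt}}^{(q)}(n-|S|,d)$ moves the wrong way when $|S|$ is small. What rescues the argument is that each round of the greedy process contributes one ``free'' index, the recovered coordinate $i_\ell$, which grows $S$ without growing $\pi_S(\mC)$; this is exactly what makes $|S|-\dim\pi_S(\mC)$ gain $t$ over the $t$ rounds, and the monotonicity of $g$ then absorbs any surplus $t(r+1)-|S|$ for free. Finally, the degenerate situations ($\mC_S=\{0\}$, or $n-|S|<d$, where the relevant quantity is $0$) are covered by the inequalities above once $k_{\textnormal{opt}}^{(q)}(\cdot,d)$ is read as $0$ there.
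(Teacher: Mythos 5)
The paper does not prove this theorem; it imports it verbatim from \cite{cadambe2015bounds}, so there is no internal proof to compare your argument against. Judged on its own, your proof is correct and is the standard shortening argument. The two points that genuinely require care are both handled properly. First, your greedy construction of $S$ secures the sharper deficit $\dim\pi_S(\mC)\le|S|-t$ rather than the useless $\dim\pi_S(\mC)\le rt$; the key step, which you make explicit, is that each round adjoins a coordinate $i_\ell\notin S^{(\ell-1)}\cup R_\ell$ that enlarges $S$ by exactly one without enlarging $\pi_S(\mC)$, because the recovery function for $i_\ell$ is $\F_q$-linear (Proposition~\ref{thm:linearfct}), so the projection onto $S^{(\ell)}$ factors through the projection onto $S^{(\ell-1)}\cup R_\ell$. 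Second, since overlapping recovery sets may leave $|S|<t(r+1)$, you need the monotonicity of $m\mapsto(m-t)+k_{\textnormal{opt}}^{(q)}(n-m,d)$, which you correctly reduce to $k_{\textnormal{opt}}^{(q)}(m-1,d)\ge k_{\textnormal{opt}}^{(q)}(m,d)-1$ by shortening an optimal code at one coordinate. Your handling of the degenerate cases ($\mC_S$ trivial, $n-|S|<d$, $t(r+1)>n$) is consistent with the only sensible reading of $k_{\textnormal{opt}}^{(q)}$ in the statement.
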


\begin{remark} \label{rem:compdraw}
\begin{itemize}
    \item[(i)] It was observed in \cite{hao2020bounds} that \eqref{eqt:kopt} yields a series of bounds on codes with locality, by applying known bounds on  \smash{$k_{\textnormal{opt}}^{(q)}(n-t(r+1),d)$}. For example, 
    by letting \smash{$t = \left \lceil k/r \right \rceil$} and using the classical Singleton Bound on \smash{$k_{\textnormal{opt}}^{(q)}(n-t(r+1),d)$}, we recover Theorem~\ref{thm:single}.
    \item[(ii)] Even though
    the bound of Theorem~\ref{thm:kopt} is a refinement of Theorem~\ref{thm:single},
    there is a ``computational'' drawback in evaluating it.
    Indeed, determining the value of~$k_{\textnormal{opt}}^{(q)}(n,d)$ for given $d, n, q$ is a wide open problem in classical coding theory, which means that no closed formula is known for the
    RHS of \eqref{eqt:kopt}. Establishing 
    bounds whose evaluation is
    computationally feasible
    has therefore become a crucial research problem in the study of LRCs. This is one of the problems this paper addresses.
\end{itemize}
\end{remark}

The concept of locally recoverable codes was further generalized in~\cite{prakash2012optimal}, where
the authors introduce the notion of $(r, \delta)$-LRC.  The additional parameter $\delta$  provides extra information regarding the number of recovery sets for coordinate~$i$ of size less than or equal to $r$. In the  applications, $(r, \delta)$-LRCs facilitate the local recovery of a failed node in the event that other nodes of the network fail as well.

\begin{definition} \label{def:rdeltaloc}
A non-degenerate code $\mC \leq \F_q^n$ has \textbf{locality} $(r, \delta)$ (or is \textbf{$(r,\delta)$-LRC}) if for all $i \in [n]$ there exists a set $S_i \subseteq [n]$ such that:
\begin{itemize}
    \item[1)] $i \notin S_i$,
    \item[2)] $|S_i| \leq r+\delta $,
    \item[3)] $d(\pi_{S_i \cup \{i\}}(\mC)) \geq \delta$.
\end{itemize}
We then call $S_i$ an \textbf{$(r,\delta)$-recovery set} for $i$.
\end{definition}

Note that for an $(r,\delta)$-LRC $\mC \le \F_q^n$, an $(r,\delta)$-recovery set $S_i$ for $i \in [n]$, and $T \subseteq S_i \cup \{i\}$ with $|T|=\delta-1$, the coordinates of any $x \in \mC$ indexed by $T$ can be recovered from the coordinates of $x$ indexed by $\left(S_i\cup \{i\} \right) \setminus T$. This follows from Definition~\ref{def:rdeltaloc}, part 3). Moreover, it can easily be seen that the notions of $(r,\delta)$-LRC and LRC with locality~$r$ coincide when $\delta=2$.

A refinement of 
Theorem~\ref{thm:single} taking the parameter $\delta$ into account was established in \cite{prakash2012optimal} and it reads as follows:
\begin{equation}\label{eqt:Singlrdelta}
    d \leq n - k +1 - \left( \left \lceil \frac{k}{r} \right \rceil -1\right)(\delta -1).
\end{equation}

Furthermore, the following is a generalization of \cref{thm:kopt} which was established in~\cite{grezet2019alphabet,rawat2015cooperative}:
\begin{equation}\label{eq:rdeltakopt}
    k \leq \min_{t \in \Z_{\geq 0}} \{tr + k_{\textnormal{opt}}^{(q)}(n-t(r+\delta-1),d)\}. 
\end{equation}

Similarly to \cref{thm:kopt}, the bound of \eqref{eq:rdeltakopt} has a computational drawback, as already explained in Remark~\ref{rem:compdraw}.

Note that both \eqref{eqt:single} and \eqref{eqt:kopt} can be recovered by the two previous bounds when $\delta = 2$.

In this paper, we focus mostly on classical LRCs (i.e. $\delta=2)$. However, in some instances we consider the broader case of $(r, \delta)$-LRCs for any $\delta$, when the techniques we develop are applicable.

\section{A Finer Weight Distribution and Duality}\label{sec:finerweight}

In this section, we introduce and work with a refinement of the classical \textit{weight distribution} of a code. Recall that the \textbf{weight distribution} of $\mC \le \F_q^n$ is the tuple $(W_0(\mC), \dots, W_n(\mC))$, where
\begin{align*}
    W_i(\mC) := |\{x \in \mC : \wH(x) = i\}|.
\end{align*}

Before we introduce the aforementioned refinement of the weight distribution, we show how the locality of a code $\mC \leq \F_q^n$ can be captured by looking at the support of codewords in the dual code $\mC^{\perp}$. Recall that for a vector $x \in \F_q^n$, its \textbf{support} is defined to be the set $\sigma(x) := \{ i \mid x_i \neq 0\}$. We now have the following equivalent characterization of the locality parameter, which will be used extensively in this paper.

\begin{lemma}\label{local/sup}
Let $r \ge 1$ be an integer. A linear code $\mC \le \F_q^n$ has locality $r$ if and only if for any $i \in [n]$ there exists $x \in \mC^{\perp}$ with $i \in \sigma(x)$ and $\wH(x) \le r+1$.
\end{lemma}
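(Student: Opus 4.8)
The plan is to translate the recovery condition in Definition \ref{def:locality} into a statement about dual codewords, using the linearity of recovery functions from Proposition \ref{thm:linearfct}. The key observation is that a recovery set $S_i$ for coordinate $i$ gives rise to a linear dependency among the columns of a generator matrix of $\mC$ — equivalently, a codeword in $\mC^\perp$ supported on $S_i \cup \{i\}$ and nonzero in position $i$.

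First, suppose $\mC$ has locality $r$, and fix $i \in [n]$. Let $S_i$ be a recovery set for $i$, so $i \notin S_i$, $|S_i| \le r$, and there is a recovery function $f_i : \pi_{S_i}(\mC) \to \F_q$ with $f_i(\pi_{S_i}(x)) = x_i$ for all $x \in \mC$. By Proposition \ref{thm:linearfct}, $f_i$ is $\F_q$-linear, hence it extends to a linear functional $\F_q^n \to \F_q$ given by $y \mapsto \sum_{j \in S_i} \lambda_j y_j$ for suitable $\lambda_j \in \F_q$. Then for every $x \in \mC$ we have $x_i = \sum_{j \in S_i} \lambda_j x_j$, i.e. $x_i - \sum_{j \in S_i} \lambda_j x_j = 0$. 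Setting $z \in \F_q^n$ with $z_i = 1$, $z_j = -\lambda_j$ for $j \in S_i$, and $z_\ell = 0$ otherwise, this says $x \cdot z^\top = 0$ for all $x \in \mC$, so $z \in \mC^\perp$. Moreover $i \in \sigma(z)$ since $z_i = 1 \neq 0$, and $\sigma(z) \subseteq S_i \cup \{i\}$, so $\wH(z) \le |S_i| + 1 \le r+1$.

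Conversely, suppose that for every $i \in [n]$ there is $z \in \mC^\perp$ with $i \in \sigma(z)$ and $\wH(z) \le r+1$. Define $S_i := \sigma(z) \setminus \{i\}$. Then $i \notin S_i$ and $|S_i| = \wH(z) - 1 \le r$. It remains to check property 3) of Definition \ref{def:locality}: if $x, y \in \mC$ agree on $S_i$, then $x_i = y_i$. Since $z \in \mC^\perp$, we have $\sum_{j} z_j x_j = 0$ and $\sum_j z_j y_j = 0$; subtracting and using that $x_j = y_j$ for $j \in S_i = \sigma(z) \setminus \{i\}$, all terms cancel except the $j = i$ term, giving $z_i(x_i - y_i) = 0$. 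Since $z_i \neq 0$, we conclude $x_i = y_i$, as desired. This establishes that $\mC$ has locality $r$, completing the proof.

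I do not expect a genuine obstacle here; the only point requiring care is that Definition \ref{def:locality} asks for a recovery function a priori defined only on $\pi_{S_i}(\mC)$, so one must invoke Proposition \ref{thm:linearfct} and then extend the linear map from the subspace $\pi_{S_i}(\mC)$ to all of $\F_q^{|S_i|}$ (which is always possible, though the extension is not canonical) before reading off the coefficients $\lambda_j$. Everything else is a direct manipulation of the orthogonality relation $x \cdot z^\top = 0$.
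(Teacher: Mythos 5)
Your proof is correct and follows exactly the route the paper itself indicates: the paper does not reproduce a proof, instead citing \cite[Lemma 5]{guruswami19} and remarking only that the argument hinges on the linearity of recovery functions (Proposition~\ref{thm:linearfct}), which is precisely the ingredient you use. Both directions — encoding the linear recovery map as a parity check $z \in \mC^\perp$ supported in $S_i \cup \{i\}$ with $z_i \ne 0$, and conversely reading a recovery set off the support of such a parity check — are handled correctly, including the small but necessary step of extending $f_i$ from the subspace $\pi_{S_i}(\mC)$ to all of $\F_q^{|S_i|}$ before extracting the coefficients $\lambda_j$.
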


Lemma~\ref{local/sup} was first established in  \cite[Lemma 5]{guruswami19} and heavily relies on the fact that recovery functions for linear codes are always linear; see Proposition~\ref{thm:linearfct}. The characterization of Lemma~\ref{local/sup} inspires the following definition, where instead of caring only about how many codewords of a certain weight there are (which is captured in the classical weight distribution), we want to know the number of codewords of a certain weight having a certain subset of $[n]$ in their support.

\begin{definition} \label{def:wis}
For a code $\mC \le \F_q^n$, a set $S \subseteq [n]$, and $0 \leq i \leq n$, we let 
\begin{align*}
    W_i^S(\mC) := |\{x \in \mC : \wH(x) = i, \, S \subseteq \sigma(x)\}|.
\end{align*}
\end{definition}

Note that if we set $S=\emptyset$, then the refined weight distribution introduced in Definition~\ref{def:wis} recovers the classical weight distribution. The main result of this section  is a MacWilliams-type identity for the refined weight distribution. In more detail, we will show that the refined weight distribution of the dual code fully determines the refined weight distribution of the original code. 

We start with the following result, which relates values of the refined weight distributions to one another.  

\begin{lemma}\label{lem:countS}
Let $\mC \leq \F_q^n$ and $A \subseteq [n]$. Then for all $1 \leq i \leq n$ and $|A| \leq t \leq i$ we have  
$$W_i^A(\mC) = \frac{1}{{\binom{i-|A|}{t- |A|}}}\sum_{\substack{A \subseteq S \subseteq [n] \\ |S| = t}^{\,}} W_i^S(\mC).$$
\end{lemma}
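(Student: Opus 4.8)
The plan is to prove the identity by a double-counting argument on pairs. Fix $\mC \le \F_q^n$, a subset $A \subseteq [n]$, and integers $1 \le i \le n$ with $|A| \le t \le i$. Consider the set of pairs
$$\Omega := \{(x, S) : x \in \mC,\ \wH(x) = i,\ A \subseteq S \subseteq \sigma(x),\ |S| = t\}.$$
I would count $|\Omega|$ in two ways. Summing over $S$ first: for each $S$ with $A \subseteq S \subseteq [n]$ and $|S| = t$, the number of $x \in \mC$ with $\wH(x) = i$ and $S \subseteq \sigma(x)$ is exactly $W_i^S(\mC)$ by Definition~\ref{def:wis}; hence $|\Omega| = \sum_{A \subseteq S \subseteq [n],\ |S| = t} W_i^S(\mC)$ (note the condition $S \subseteq \sigma(x)$ forces $\sigma(x)$ to have size $\ge t$, which is automatic since $\wH(x) = i \ge t$).

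The other way is to sum over $x$ first. Fix $x \in \mC$ with $\wH(x) = i$ and $A \subseteq \sigma(x)$; if $A \not\subseteq \sigma(x)$ the inner count is $0$ and such $x$ contribute nothing, and indeed they also contribute nothing to $W_i^A(\mC)$ by definition. For a fixed such $x$, the number of sets $S$ with $A \subseteq S \subseteq \sigma(x)$ and $|S| = t$ equals the number of ways to choose the remaining $t - |A|$ elements of $S$ from $\sigma(x) \setminus A$, a set of size $i - |A|$; that is $\binom{i - |A|}{t - |A|}$. Therefore $|\Omega| = \binom{i-|A|}{t-|A|} \cdot |\{x \in \mC : \wH(x) = i,\ A \subseteq \sigma(x)\}| = \binom{i-|A|}{t-|A|} W_i^A(\mC)$.

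Equating the two expressions gives $\binom{i-|A|}{t-|A|} W_i^A(\mC) = \sum_{A \subseteq S \subseteq [n],\ |S|=t} W_i^S(\mC)$, and dividing by the nonzero binomial coefficient $\binom{i-|A|}{t-|A|}$ (nonzero precisely because $|A| \le t \le i$) yields the claimed formula. There is no real obstacle here; the only points needing a word of care are checking that the edge cases ($x$ with $A \not\subseteq \sigma(x)$, and the boundary values $t = |A|$ or $t = i$) behave correctly, and confirming the binomial coefficient is invertible in the stated range — both of which are immediate. I would write this as a short, self-contained combinatorial proof with the pair set $\Omega$ made explicit.
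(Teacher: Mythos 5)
Your proof is correct and uses precisely the same double-counting argument as the paper: you define the pair set $\Omega$ (the paper calls it $\mV$) and count it once by summing over $S$ and once by summing over $x$, arriving at the same identity. The only difference is that you spell out the edge cases a bit more explicitly, which the paper leaves implicit.
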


\begin{proof}
Let $\mV = \{ (x, S) \mid x \in \mC, \, \wH(x) = i, \, A \subseteq S \subseteq \sigma(x), \, |S| = t\}$. We count the elements of $\mV$ in two ways. On one hand, we have  
\begin{align} \label{eq:lem331}
    |\mV|= \sum_{\substack{A \subseteq S \subseteq [n] \\ |S| = t}^{\,}} W_i^S(\mC).
\end{align}
On the other hand, 
\begin{align} \label{eq:lem332}
    |\mV| = \sum_{\substack{x \in \mC \\ \wH(x) = i \\ A \subseteq \sigma(x)}}| \{S \, : \, A \subseteq S \subseteq \sigma(x) \textup{ and } |S| = t\}|
        ={i -|A| \choose t - |A|} W_i^A(\mC). 
\end{align}
Combining~\eqref{eq:lem331} and~\eqref{eq:lem332} concludes the proof. 
\end{proof}

In order to obtain the MacWilliams-type identity for the refined weight distribution from Definition~\ref{def:wis}, we need to introduce auxiliary definitions and results.

\begin{definition} \label{def:cst}
For a code $\mC \le \F_q^n$ and subsets $S \subseteq T \subseteq [n]$, let
\begin{align*}
    \mC(S,T):=\{x \in \mC : S \subseteq \sigma(x), \, x \in \mC(T)\},
\end{align*}
where $\mC(T)$ denotes the \textbf{shortening}
of $\mC$ by the set $T$, i.e.,
$\mC(T)=\{x \in \mC : \sigma(x) \subseteq T\}$.
\end{definition}

We can now express the cardinality of a subcode introduced in Definition~\ref{def:cst} in terms of the cardinality of subcodes of the dual code.

\begin{proposition} \label{prop:cst} 
Let $\mC \le \F_q^n$. For all $S \subseteq T \subseteq [n]$ we have
\begin{align*}
    |\mC(S,T)| = |\mC| \sum_{A \subseteq S} (-1)^{|A|}\displaystyle\frac{|\mC^\perp(T^c \cup A)|}{q^{n-|T|+|A|}}.
\end{align*}
\end{proposition}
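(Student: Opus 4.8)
The plan is to express $|\mC(S,T)|$ via inclusion–exclusion on the support constraint $S\subseteq\sigma(x)$, and then to evaluate each resulting term using a standard character‑sum / duality identity for shortened codes. First I would write, by inclusion–exclusion over the ``bad events'' $\{x_i=0\}$ for $i\in S$,
\[
|\mC(S,T)| \;=\; \sum_{A\subseteq S} (-1)^{|A|}\,\bigl|\{x\in\mC(T): x_i=0 \text{ for all } i\in A\}\bigr|
\;=\; \sum_{A\subseteq S} (-1)^{|A|}\,|\mC(T\setminus A)|,
\]
since imposing $\sigma(x)\subseteq T$ together with $x_i=0$ for $i\in A$ is exactly the condition $\sigma(x)\subseteq T\setminus A$ (here $A\subseteq S\subseteq T$). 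So the whole statement reduces to proving the single‑set identity
\[
|\mC(U)| \;=\; |\mC|\,\frac{|\mC^\perp(U^c)|}{q^{n-|U|}}
\qquad\text{for every } U\subseteq[n],
\]
applied with $U = T\setminus A$, for which $U^c = T^c\cup A$ and $|U| = |T|-|A|$, giving exactly the claimed denominator $q^{n-|T|+|A|}$ and numerator $|\mC^\perp(T^c\cup A)|$.

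Next I would prove this single‑set identity. The shortening $\mC(U)$ is $\mC\cap V_U$, where $V_U=\{x\in\F_q^n : \sigma(x)\subseteq U\}$ is the coordinate subspace supported on $U$, of dimension $|U|$. Its dual is $V_U^\perp = V_{U^c}$. Using the standard formula for the size of an intersection of a code with a subspace via additive characters — equivalently, the identity $|\mC\cap V| = \tfrac{|\mC|\,|V|}{q^n}\,|\,\mC^\perp + V^\perp\,| / |V^\perp|$, or more cleanly the basic fact $\dim(\mC\cap V) = \dim\mC + \dim V - n + \dim(\mathrm{hull\ complement})$ made exact through characters — one gets $|\mC\cap V_U| = \tfrac{|\mC|\cdot q^{|U|}}{q^n}\cdot\text{(correction)}$. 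The clean way: for any two subspaces $\mC,V\le\F_q^n$,
\[
|\mC\cap V| \;=\; \frac{1}{q^n}\sum_{x\in\mC}\sum_{y\in V^\perp}\chi(x\cdot y^\top)\;=\;\frac{|\mC|\,|V^\perp|}{q^n}\cdot\frac{|\mC^\perp\cap V^\perp|\cdot q^n}{|\mC|\,|V^\perp|}\;\cdots
\]
— rather than fuss with the character manipulation in the sketch, I would simply cite/recall the well‑known shortening–puncturing duality $\mC(U)^\perp \cong \mC^\perp / \mC^\perp(U^c)^\perp$-type relation: $\dim \mC(U) = \dim\mC - |U^c| + \dim\mC^\perp(U^c)$, i.e. $|\mC(U)| = |\mC|\,q^{-(n-|U|)}\,|\mC^\perp(U^c)|$, which is exactly what is needed. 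If the paper has not stated this, I would include a two‑line character‑sum proof of it.

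Finally I would substitute back: plugging $U=T\setminus A$ into the single‑set identity and summing against the inclusion–exclusion coefficients $(-1)^{|A|}$ over $A\subseteq S$ yields the displayed formula verbatim. The main obstacle, such as it is, is making sure the inclusion–exclusion step is set up with the right index set: one must check that the events being excluded are exactly $\{x_i=0\}$ for $i\in S$ — not for $i\in T$ — so that the sum ranges over $A\subseteq S$, and that $A\subseteq S\subseteq T$ guarantees $T\setminus A$ is still a superset of $S\setminus A$ so nothing degenerates. Everything else is the standard shortening‑duality identity, which is routine; I would present that identity either as a cited fact or with a short self‑contained character‑sum argument, and keep the write‑up to these two steps.
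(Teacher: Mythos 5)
Your proposal is correct and follows essentially the same route as the paper: inclusion–exclusion over subsets $A\subseteq S$ reduces the statement to $|\mC(T\setminus A)| = |\mC|\,q^{-(n-|T|+|A|)}\,|\mC^\perp(T^c\cup A)|$, which is exactly the standard shortening–duality identity the paper invokes (your phrasing of the inclusion–exclusion as running directly over the events $\{x_i=0\}$, $i\in S$, is just a notational variant of the paper's formulation via a union of $\mC(T\setminus A)$'s). The only blemish is that the displayed character-sum sketch for $|\mC\cap V|$ has the wrong normalizing constant (it should be $1/|V^\perp|$, not $1/q^n$), but since you fall back on citing the dimension identity $\dim\mC(U) = \dim\mC - (n-|U|) + \dim\mC^\perp(U^c)$ this does not affect correctness.
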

\begin{proof}
We have
\begin{align} \label{eq:lemcst1}
    |\mC(S,T)| &= |\mC(T)|- |\{x \in \mC : \sigma(x) \subseteq T, \, \sigma(x) \subseteq [n]\setminus A \textnormal{ for some $\emptyset \ne A \subseteq S$}\}| \nonumber \\
    &= |\mC(T)|- |\{x \in \mC : \sigma(x) \subseteq T \cap ([n]\setminus A) \textnormal{ for some $\emptyset \ne A \subseteq S$}\}|.
\end{align}
Since $A \subseteq S \subseteq T$, we can rewrite~\eqref{eq:lemcst1} as
\begin{align*}
    |\mC(T)|- |\{x \in \mC : \sigma(x) \subseteq T\setminus A \textnormal{ for some $\emptyset \ne A \subseteq S$}\}| &= |\mC(T)|- \left| \bigcup_{\emptyset \ne A \subseteq S} \mC(T \setminus A)\right| \\
    &= \sum_{A \subseteq S} (-1)^{|A|}\left|\mC(T \setminus A)\right|,
\end{align*}
where the latter equality follows
from the Inclusion-Exclusion principle. Finally, we have
\begin{align*}
    |\mC(T \setminus A)| = \frac{|\mC|}{q^{n-|T|+|A|}}|\mC^\perp(T^c \cup A)|,
\end{align*}
from which the statement of the lemma follows.
\end{proof}

With the aid of Proposition~\ref{prop:cst} we can prove the following result, which can be seen as an analogue of a \textit{MacWilliams binomial moment} identity. 

\begin{proposition}\label{cor:sumw_i^S}
Let $\mC \le \F_q^n$, $S \subseteq [n]$ and $|S| \le t \le n$. We have
\begin{multline*}
    \sum_{i=0}^n \binom{n-i}{t-i} W_i^S(\mC) = \\ q^{k-n+t-|S|}(q-1)^{|S|}\sum_{i=0}^n  \sum_{D \subseteq S}\sum_{B \subseteq D} (-1)^{|D|-|B|} (1 - q)^{-|B|} \binom{n-|S|-i+|B|}{t-|S|}W_i^D(\mC^{\perp}).
\end{multline*}
\end{proposition}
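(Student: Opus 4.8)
The plan is to unwind the left-hand side combinatorially, feed it into Proposition~\ref{prop:cst}, and then reorganize the resulting multiple sum, passing through an auxiliary ``exact intersection'' count of dual codewords.

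\emph{Step 1: rewriting the left-hand side.} For $x\in\mC$ with $\wH(x)=i$, the number $\binom{n-i}{t-i}$ is exactly the number of sets $U\subseteq[n]$ with $|U|=t$ and $\sigma(x)\subseteq U$; and when $S\subseteq\sigma(x)$, every such $U$ contains $S$. Hence
\[
\sum_{i=0}^n \binom{n-i}{t-i}W_i^S(\mC) = \bigl|\{(x,U): x\in\mC,\ |U|=t,\ S\subseteq\sigma(x)\subseteq U\}\bigr| = \sum_{\substack{S\subseteq U\subseteq[n]\\ |U|=t}} |\mC(S,U)|,
\]
the last equality being Definition~\ref{def:cst}. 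I then substitute the formula of Proposition~\ref{prop:cst} (with $T=U$, so $n-|T|=n-t$) into each summand and interchange the finite sums over $U$ and over $A\subseteq S$, obtaining
\[
\sum_{i=0}^n \binom{n-i}{t-i}W_i^S(\mC) = |\mC| \sum_{A\subseteq S} \frac{(-1)^{|A|}}{q^{\,n-t+|A|}} \sum_{\substack{S\subseteq U\subseteq[n]\\ |U|=t}} |\mC^\perp(U^c\cup A)|.
\]

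\emph{Step 2: evaluating the inner sum.} Define $N_i^D := \bigl|\{y\in\mC^\perp : \wH(y)=i,\ \sigma(y)\cap S = D\}\bigr|$ for $D\subseteq S$. Writing $V=U^c$, the inner sum runs over $V\subseteq S^c$ with $|V|=n-t$, and $V,A$ are disjoint. Counting pairs $(y,V)$ with $y\in\mC^\perp$ and $\sigma(y)\subseteq V\cup A$ by fixing $y$ first, one sees that admissible $V$ exist only when $\sigma(y)\cap S\subseteq A$; in that case $\sigma(y)\setminus A=\sigma(y)\setminus S$, and the number of such $V$ equals $\binom{n-|S|-i+|D|}{t-|S|}$, where $i=\wH(y)$, $D=\sigma(y)\cap S$, and $m:=n-|S|-i+|D|\ge 0$ (here one uses $\binom{m}{a}=\binom{m}{m-a}$). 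Grouping the $y$'s by $i$ and $D$ gives
\[
\sum_{\substack{S\subseteq U\subseteq[n]\\ |U|=t}} |\mC^\perp(U^c\cup A)| = \sum_{i=0}^n \sum_{D\subseteq A} \binom{n-|S|-i+|D|}{t-|S|}\, N_i^D.
\]

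\emph{Step 3: the sum over $A$ and a final inclusion--exclusion.} Substituting Step~2 into Step~1 and interchanging sums, the dependence on $A$ is isolated in $\sum_{D\subseteq A\subseteq S}(-1/q)^{|A|} = (-1/q)^{|D|}\bigl((q-1)/q\bigr)^{|S|-|D|}$. Collecting the factors (and using $|\mC|=q^k$, $(1-q)^{-|D|}=(-1)^{|D|}(q-1)^{-|D|}$) yields
\[
\sum_{i=0}^n \binom{n-i}{t-i}W_i^S(\mC) = q^{\,k-n+t-|S|}(q-1)^{|S|} \sum_{i=0}^n \sum_{D\subseteq S} (1-q)^{-|D|}\binom{n-|S|-i+|D|}{t-|S|}\, N_i^D.
\]
Finally, since $W_i^E(\mC^\perp)=\sum_{E\subseteq F\subseteq S} N_i^F$ for $E\subseteq S$, Möbius inversion on the Boolean lattice of $S$ gives $N_i^D=\sum_{D\subseteq E\subseteq S}(-1)^{|E|-|D|}W_i^E(\mC^\perp)$. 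Plugging this in, renaming the index $D$ above as $B$ and the index $E$ as $D$, and swapping the two inner sums produces exactly the claimed right-hand side. The main obstacle is Step~2 --- pinning down the condition $\sigma(y)\cap S\subseteq A$ and the correct binomial index; the remainder is bookkeeping of four interlocking index sets plus a routine check that the signs and the powers of $q$ and $q-1$ combine into $q^{k-n+t-|S|}(q-1)^{|S|}$ with inner coefficient $(-1)^{|D|-|B|}(1-q)^{-|B|}$. Note that in the $N_i^D$-form every binomial that multiplies a nonzero term has $n-|S|-i+|D|\ge 0$, so the rearrangement is unambiguous under the usual convention $\binom{m}{a}=0$ for $a<0$.
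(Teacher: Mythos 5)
Your proof is correct, and it takes a genuinely different route from the paper's, differing in how the dual-side sum is reorganized. Both proofs share Step 1 (double-counting to get $\sum_i\binom{n-i}{t-i}W_i^S(\mC)=\sum_{S\subseteq T,|T|=t}|\mC(S,T)|$, then substituting Proposition~\ref{prop:cst}) and both use the Binomial Theorem to collapse the sum over $A\subseteq S$. Where you diverge is in evaluating $\sum_{T}|\mC^\perp(T^c\cup A)|$: the paper decomposes the support of each dual codeword into a part $D\subseteq S^c$ and a part $B\subseteq A$, expresses things via $W_i^{D\cup B}(\mC^\perp)$, and then needs a separate Claim (built on Lemma~\ref{lem:countS} and an additional inclusion--exclusion argument) to convert $\sum_{D\subseteq S^c,|D|=j-|B|}W_j^{D\cup B}$ into $\sum_{B\subseteq D\subseteq S}(-1)^{|D|-|B|}W_j^D$. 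You instead introduce the auxiliary ``exact trace'' count $N_i^D=|\{y:\wH(y)=i,\ \sigma(y)\cap S=D\}|$, which collapses the two-index bookkeeping into a single subset of $S$; the admissibility condition $\sigma(y)\cap S\subseteq A$ and the binomial $\binom{n-|S|-i+|D|}{t-|S|}$ then fall out immediately, and the passage back to the $W_i^D(\mC^\perp)$'s is a single Möbius inversion on the Boolean lattice of $S$. The net effect is the same identity, but your version avoids the paper's Claim~A entirely and makes the role of inclusion--exclusion cleaner, at the small cost of introducing the intermediate quantities $N_i^D$.
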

\begin{proof}
We prove the statement by first fixing a subset $S \subseteq [n]$ and summing both sides of the expression in Proposition~\ref{prop:cst}  over all $T \subseteq [n]$ with $S \subseteq T$ and $|T|=t$. The LHS of the equality in Proposition~\ref{prop:cst} gives
\begin{align*}
  \sum_{\substack{S \subseteq T \subseteq [n] \\ |T|=t}} |\mC(S,T)|  &= \sum_{\substack{x \in \mC, \\ \sigma(x) \supseteq S}} |\{T \subseteq [n] : |T|=t,  \, \sigma(x) \subseteq T, \, S \subseteq T\}| \\
  &= \sum_{i=0}^n\sum_{\substack{x \in \mC \\ \sigma(x) \supseteq S \\ \wH(x) = i}} |\{T \subseteq [n] : |T|=t, \, S \subseteq \sigma(x) \subseteq T\}| \\
  &= \sum_{i=0}^n\sum_{\substack{x \in \mC \\ \sigma(x) \supseteq S \\ \wH(x) = i}} \binom{n-i}{t-i} = \sum_{i=0}^n \binom{n -i}{t-i} W_i^S(\mC),
\end{align*}
which is the LHS of the corollary.
For the RHS we compute 
\begin{align} \label{eq:corwis2}
   \sum_{\substack{S \subseteq T \subseteq [n]\\ |T|=t}}|\mC| \sum_{A \subseteq S} (-1)^{|A|}\displaystyle\frac{|\mC^\perp(T^c \cup A)|}{q^{n-t+|A|}} = |\mC| \sum_{A \subseteq S} \displaystyle\frac{(-1)^{|A|}}{q^{n-t+|A|}} \sum_{\substack{S \subseteq T \subseteq [n]\\ |T|=t}}|\mC^\perp(T^c \cup A)|.
\end{align}
We now 
write the last sum in~\eqref{eq:corwis2} differently. For a fixed~$A \subseteq S$ we have
\begin{align*} 
    \sum_{\substack{S \subseteq T \subseteq [n] \\ |T|=t}}|\mC^\perp(T^c\cup A)| &= \sum_{i=0}^n \sum_{\substack{x \in \mC^\perp \\ \wH(x) = i}} |\{T \subseteq [n] : |T|  =t, \sigma(x) \subseteq T^c\cup A, S \subseteq T\}| \\ 
    &= \sum_{i=0}^n \sum_{\substack{x \in \mC^\perp \\ \wH(x) = i}} |\{T \subseteq [n] : |T|  =n-t, \sigma(x) \subseteq T\cup A, T \subseteq S^c\}|\\
    %&= \sum_{i=0}^n \sum_{\substack{x \in \mC^\perp, \\\wH(x) = i}} |\{T \subseteq S^c : |T|  =n-t, \sigma(x) \subseteq T\cup A\}|\\
    &= \sum_{i=0}^n \sum_{\substack{x \in \mC^\perp \\ \wH(x) = i \\ \sigma(x) \subseteq S^c \cup A}} |\{T \subseteq S^c : |T|  =n-t, \sigma(x) \subseteq T\cup A\}|\\
    &= \sum_{i=0}^n \sum_{j=0}^{|A|} \sum_{\substack{x \in \mC^\perp, \\\wH(x) = i \\ |\sigma(x) \cap A|=j \\ |\sigma(x) \cap S^c|=i-j }} \binom{n-|S| -i+j}{n-t-i+j} \\
    %&= \sum_{i=0}^n \sum_{j=0}^{|A|} \sum_{\substack{B \subseteq A, \\ |B| = j}}\sum_{\substack{D \subseteq S^c, \\ |D| = i-j}}\sum_{\substack{x \in \mC^\perp, \\\wH(x) = i, \\ \sigma(x) \cap A=B, \\ \sigma(x) \cap S^c=D}} \binom{n-|S| -i+j}{n-t-i+j} \\
    &=\sum_{i=0}^n \sum_{j=0}^{|A|} \sum_{\substack{B \subseteq A \\ |B| = j}}\, \sum_{\substack{D \subseteq S^c, \\ |D| = i-j}}\, \sum_{\substack{x \in \mC^\perp\\ \sigma(x)=D \cup B }} \binom{n-|S| -i+j}{n-t-i+j}\\
    &= \sum_{i=0}^n  \sum_{B \subseteq A}\sum_{\substack{D \subseteq S^c, \\ |D| = i-|B|}} \binom{n-|S| -|D|}{n-t-|D|} W_i^{D \cup B}(\mC^\perp),
\end{align*}
where the last step follows from the fact that $|\{x \in \mC^\perp: \sigma(x) = B \cup D\}| = W_i^{D \cup B}(\mC^\perp)$ for $B \subseteq A$ with $|B|=j$ and $D \subseteq S^c$ with $|D|=i-j$. After substituting this expression into~\eqref{eq:corwis2}, we simplify further by reordering the summands and applying the Binomial Theorem as follows:

\begin{align*}
 &|\mC| \sum_{A \subseteq S} \displaystyle\frac{(-1)^{|A|}}{q^{n-t+|A|}} \sum_{i=0}^n  \sum_{B \subseteq A}\sum_{\substack{D \subseteq S^c \\ |D| = i-|B|}} \binom{n-|S| -|D|}{n-t-|D|} W_i^{D \cup B}(\mC^\perp) \\
  &= q^{k-n+t}\sum_{i=0}^n \sum_{B \subseteq S}  \sum_{\substack{D \subseteq S^c\\ |D| = i-|B|}} \binom{n-|S| -|D|}{n-t-|D|}  W_i^{D \cup B}(\mC^\perp) \sum_{B \subseteq A \subseteq S} \displaystyle\frac{(-1)^{|A|}}{q^{|A|}}\\
&= q^{k-n+t}\sum_{i=0}^n \sum_{B \subseteq S}  \sum_{\substack{D \subseteq S^c\\ |D| = i-|B|}} \binom{n-|S| -|D|}{n-t-|D|}  W_i^{D \cup B}(\mC^\perp)\sum_{j=|B|}^{|S|} (-1)^j \binom{|S| - |B|}{j- |B|} q^{-j}\\
&=q^{k-n+t}\sum_{i=0}^n \sum_{B \subseteq S}  \sum_{\substack{D \subseteq S^c\\ |D| = i-|B|}} \binom{n-|S| -|D|}{n-t-|D|}  W_i^{D \cup B}(\mC^\perp) (-1)^{|S|} q^{-|B|}  \\*
& \hspace{8cm}\cdot\sum_{j=0}^{|S|-|B|} (-1)^{|S|- |B| - j} \binom{|S| - |B|}{j} (q^{-1})^j\\
&=q^{k-n+t}\sum_{i=0}^n \sum_{B \subseteq S}  \sum_{\substack{D \subseteq S^c\\ |D| = i-|B|}} \binom{n-|S| -|D|}{n-t-|D|}  W_i^{D \cup B}(\mC^\perp) (-1)^{|S|} q^{-|B|}  (q^{-1}-1)^{|S|-|B|}\\
&= q^{k-n+t-|S|}(q-1)^{|S|}\sum_{i=0}^n  \sum_{B \subseteq S} (1 - q)^{-|B|} \sum_{\substack{D \subseteq S^c\\ |D| = i-|B|}} \binom{n-|S| -i +|B|}{t-|S|}  W_i^{D \cup B}(\mC^{\perp}).
\end{align*}
To conclude the proof we will need the following claim.
\begin{claim}\label{cl:simplificationequation} 
Let $\mC \leq \F_q^n$, $S \subseteq [n]$ with $|S| = t$, $B \subseteq S$, and $1 \leq j \leq n$. Then 
$$\sum_{\substack{D \subseteq S^c \\ |D| = j- |B|}} W_j^{D \cup B}(\mC^{\perp}) = \sum_{B \subseteq D \subseteq S}(-1)^{|D| - |B|} W_j^D(\mC^{\perp}).$$
\end{claim}
\begin{clproof}
For all $D \subseteq S^c$ with $|D| = j- |B|$ we have $$W_j^{D \cup B}(\mC^{\perp}) =  \left| \{v \in \mC^{\perp} \, : \, \sigma(v) = D \cup B\} \right|.$$ Hence 
\begin{align*}
\sum_{\substack{D \subseteq S^c \\ |D| = j-|B|}}W_j^{D \cup B}(\mC^{\perp}) &= \left| \{v \in \mC^{\perp} \, : \, D \subseteq S^c, \,  |D| = j-|B|, \, \sigma(v) = D \cup B \} \right|\\ 
&= \left | \{v \in \mC^{\perp} \, :  \, B \subseteq D \subseteq [n]\setminus (S \setminus B), \, |D| = j, \, \sigma(v) = D \} \right| \\
&= \left | \{ v \in \mC^{\perp} \, : \,   B \subseteq D \subseteq [n], \, |D|=j, \, \sigma(v) = D\} \right | - \\
&\left| \bigcup_{\emptyset \subsetneq A \subseteq (S\setminus B)} \{v \in \mC^{\perp} \, : \,  B\cup A \subseteq D \subseteq [n], \, |D|=j, \, \sigma(v) = D\} \right|.
\end{align*}
Using the Inclusion-Exclusion principle followed by \cref{lem:countS}, we get
\begin{align*}\sum_{\substack{D \subseteq S^c \\ |D| = j-|B|}}W_j^{D \cup B}(\mC^{\perp}) 
&= \sum_{\emptyset \subseteq A \subseteq S\setminus B}(-1)^{|A|} \sum_{\substack{A \cup B \subseteq D \subseteq [n]\\ |D| = j}} W_j^D(\mC^{\perp})\\
&= \sum_{\emptyset \subseteq A \subseteq S \setminus B}(-1)^{|A|} W_j^{A \cup B}(\mC^{\perp})\\
&= \sum_{B \subseteq D \subseteq S}(-1)^{|D|-|B|}W_j^D(\mC^{\perp}),
\end{align*}
proving the desired claim.
\end{clproof}
Finally, from Claim~\ref{cl:simplificationequation} it follows that
\begin{multline*}
q^{k-n+t-|S|}(q-1)^{|S|}\sum_{i=0}^n  \sum_{B \subseteq S} (1 - q)^{-|B|} \sum_{\substack{D \subseteq S^c\\ |D| = i-|B|}} \binom{n-|S| -i +|B|}{t-|S|}  W_i^{D \cup B}(\mC^{\perp})
=\\
q^{k-n+t-|S|}(q-1)^{|S|}\sum_{i=0}^n  \sum_{D \subseteq S}\sum_{B \subseteq D} (-1)^{|D|-|B|} (1 - q)^{-|B|} \binom{n-|S|-i+|B|}{t-|S|}W_i^D(\mC^{\perp}),
\end{multline*}
concluding the proof.
\end{proof}

In order to ``isolate'' the refined weight distribution of a code $\mC \le \F_q^n$ in the sum of \cref{cor:sumw_i^S}, we will use the following lemma.

\begin{lemma}\label{lem:bomb}
Let $$\alpha_t:=\sum_{i=0}^n \binom{n-i}{t-i} \beta_i \quad \textnormal{for $0 \le t \le n.$}$$ Then $$\beta_i = \displaystyle\sum_{t=0}^n (-1)^{(i-t)} \binom{n-t}{i-t} \alpha_t \quad \textnormal{for all $0 \le i \le n.$}$$
\end{lemma}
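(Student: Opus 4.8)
The plan is to prove the inversion directly. Substituting the hypothesis $\alpha_t = \sum_{j=0}^n \binom{n-j}{t-j}\beta_j$ into the right-hand side of the claimed formula and exchanging the two finite sums gives
$$\sum_{t=0}^n (-1)^{i-t} \binom{n-t}{i-t} \alpha_t = \sum_{j=0}^n \beta_j \left( \sum_{t=0}^n (-1)^{i-t} \binom{n-t}{i-t} \binom{n-j}{t-j} \right).$$
It then suffices to show that the inner sum over $t$ equals $1$ when $j = i$ and $0$ otherwise; the statement follows at once. Observe that $\binom{n-t}{i-t}$ vanishes unless $t \le i$ and $\binom{n-j}{t-j}$ vanishes unless $j \le t$, so there is no issue with the summation ranges and the inner sum is effectively taken over $j \le t \le i$.

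The key algebraic input is the trinomial-revision identity
$$\binom{n-t}{i-t} \binom{n-j}{t-j} = \binom{n-j}{i-j} \binom{i-j}{t-j}, \qquad j \le t \le i \le n,$$
which one verifies by expanding both sides into factorials (both equal $(n-j)!\big/\big((t-j)!\,(i-t)!\,(n-i)!\big)$). Substituting this and pulling the factor $\binom{n-j}{i-j}$, which does not depend on $t$, out of the inner sum, that sum becomes $\binom{n-j}{i-j} \sum_{t=j}^{i} (-1)^{i-t} \binom{i-j}{t-j}$. Reindexing with $s = t-j$ gives $\binom{n-j}{i-j} \sum_{s=0}^{i-j} (-1)^{i-j-s} \binom{i-j}{s}$, and by the Binomial Theorem $\sum_{s=0}^{m} (-1)^{m-s} \binom{m}{s} = (1-1)^m$, which is $0$ for $m \ge 1$ and $1$ for $m = 0$. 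Hence the inner sum equals $\binom{n-j}{i-j} \cdot [\,i = j\,] = [\,i = j\,]$, as required.

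There is no real obstacle here; the only point deserving care is the bookkeeping of the binomial-coefficient conventions (that $\binom{a}{b} = 0$ for $b < 0$ or $b > a \ge 0$), which is what lets us freely restrict or extend the ranges of $t$ and $j$ to $\{0, \dots, n\}$ without affecting any sum. As an alternative one could substitute $u = n-t$, $v = n-j$ to rewrite the hypothesis as $\alpha_{n-u} = \sum_{v \ge u} \binom{v}{u} \beta_{n-v}$ and then invoke the standard binomial inversion formula for the upper-index binomial transform, but the direct computation above is short enough to carry out in full.
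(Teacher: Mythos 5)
Your proof is correct and is essentially identical to the paper's: both substitute the definition of $\alpha_t$, exchange the order of summation, apply the trinomial-revision identity $\binom{n-t}{i-t}\binom{n-j}{t-j}=\binom{n-j}{i-j}\binom{i-j}{t-j}$, and finish with the alternating Binomial Theorem. The only cosmetic difference is that the paper writes the product as $\binom{i-j}{i-t}\binom{n-j}{n-i}$, which is the same thing up to the symmetry $\binom{a}{b}=\binom{a}{a-b}$.
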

\begin{proof}
Let $0 \le i \le n$ be fixed. We have
\begin{align*}
    \sum_{t=0}^n (-1)^{(i-t)} \binom{n-t}{i-t} \alpha_t &= \sum_{t=0}^n (-1)^{(i-t)} \binom{n-t}{i-t}  \sum_{j=0}^n \binom{n-j}{t-j} \beta_j \\
    %&=\sum_{t=0}^n (-1)^{(i-t)} \binom{n-t}{n-i}  \sum_{j=0}^n \binom{n-j}{n-t} \beta_j \\
    &=\sum_{t=0}^n (-1)^{(i-t)}  \sum_{j=0}^n \binom{n-t}{n-i} \binom{n-j}{n-t} \beta_j \\
    &=\sum_{t=0}^n (-1)^{(i-t)}  \sum_{j=0}^n \binom{i-j}{i-t} \binom{n-j}{n-i} \beta_j \\
    %&=  \sum_{j=0}^n  \binom{n-j}{n-i} \beta_j \sum_{t=0}^n (-1)^{(i-t)}\binom{i-j}{i-t} \\
    &=  \sum_{j=0}^n  \binom{n-j}{n-i} \beta_j \sum_{t=0}^i (-1)^{(i-t)}\binom{i-j}{i-t} \\
    &=  \sum_{j=0}^n  \binom{n-j}{n-i} \beta_j \sum_{s=0}^{i-j} (-1)^{s}\binom{i-j}{s} \\
    &=  \beta_i,
\end{align*}
because $$\displaystyle\sum_{s=0}^{i-j} (-1)^{s}\binom{i-j}{s} =0$$ unless $i=j$, by the Binomial Theorem.
\end{proof}

We can finally established 
the MacWilliams-type identity for the refined weight distribution of a code. 

\begin{theorem}\label{thm:w_i^S}
   Let $\mC \le \F_q^n$ and fix $S \subseteq [n]$. For all $0 \le i \le n$ we have
\begin{multline}\label{eqt:MacWilliams} W_i^S(\mC) = q^{k-n - |S|} (q-1)^{|S|} \sum_{t=|S|}^n \sum_{j=0}^n \sum_{D \subseteq S} \sum_{B \subseteq D}  (-1)^{i-t+|D|} q^t\\ (q-1)^{-|B|}\binom{n-t}{i-t} \binom{n-|S|-j+|B|}{t - |S|}W_j^D(\mC^{\perp}).\end{multline}
\end{theorem}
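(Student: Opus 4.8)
The plan is to read \eqref{eqt:MacWilliams} as the Möbius-type inversion of the binomial moment identity from \cref{cor:sumw_i^S}, carried out via \cref{lem:bomb}. Fix $S \subseteq [n]$ and put $\beta_i := W_i^S(\mC)$ for $0 \le i \le n$, together with the binomial moments $\alpha_t := \sum_{i=0}^n \binom{n-i}{t-i}\beta_i$ for $0 \le t \le n$. For $t \ge |S|$, \cref{cor:sumw_i^S} gives the explicit evaluation
\[
\alpha_t = q^{k-n+t-|S|}(q-1)^{|S|}\sum_{i=0}^n\sum_{D\subseteq S}\sum_{B\subseteq D}(-1)^{|D|-|B|}(1-q)^{-|B|}\binom{n-|S|-i+|B|}{t-|S|}W_i^D(\mC^{\perp}).
\]

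First I would dispose of the missing range $t < |S|$, which is needed to apply \cref{lem:bomb}. Since $W_i^S(\mC) = 0$ whenever $i < |S|$ (no vector of weight $i<|S|$ can contain $S$ in its support) and $\binom{n-i}{t-i} = 0$ whenever $i > t$, for $t < |S|$ every summand in the definition of $\alpha_t$ vanishes, so $\alpha_t = 0$. Hence \cref{lem:bomb} yields
\[
W_i^S(\mC) = \beta_i = \sum_{t=0}^n (-1)^{i-t}\binom{n-t}{i-t}\alpha_t = \sum_{t=|S|}^n (-1)^{i-t}\binom{n-t}{i-t}\alpha_t,
\]
which already produces the outer summation $\sum_{t=|S|}^n$ appearing in \eqref{eqt:MacWilliams}.

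What remains is purely cosmetic bookkeeping: substitute the explicit formula for $\alpha_t$ (renaming its internal index $i$ to $j$ to avoid the clash with the external $i$), pull $(-1)^{i-t}\binom{n-t}{i-t}$ inside the sums, and rewrite $q^{k-n+t-|S|} = q^{k-n-|S|}q^t$ and $(1-q)^{-|B|} = (-1)^{|B|}(q-1)^{-|B|}$. The signs then collapse, since $(-1)^{i-t}(-1)^{|D|-|B|}(-1)^{|B|} = (-1)^{i-t+|D|}$, and the resulting expression is exactly \eqref{eqt:MacWilliams}. I expect the only delicate points to be this sign/exponent matching and the observation that $\alpha_t$ vanishes for $t < |S|$; there is no substantive obstacle beyond that, as the real work has already been done in \cref{cor:sumw_i^S} and \cref{lem:bomb}.
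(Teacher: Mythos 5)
Your proof is correct and follows essentially the same route as the paper: invert the binomial-moment identity of Proposition~\ref{cor:sumw_i^S} via Lemma~\ref{lem:bomb}, after setting $\beta_i = W_i^S(\mC)$ and reading off $\alpha_t$. Your explicit observation that $\alpha_t = 0$ for $t < |S|$ (since $W_i^S(\mC)=0$ for $i<|S|$ while $\binom{n-i}{t-i}=0$ for $i>t$) is a useful detail the paper leaves implicit, where it is absorbed by the convention $\binom{n-|S|-j+|B|}{t-|S|}=0$ whenever $t<|S|$.
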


\begin{proof}
The statement follows directly from \cref{cor:sumw_i^S} and \cref{lem:bomb}, by setting $\beta_i = W_i^S(\mC)$ and $$\alpha_t = q^{k-n+t-|S|}(q-1)^{|S|}\sum_{j=0}^n  \sum_{D \subseteq S}\sum_{B \subseteq D} (-1)^{|D|-|B|} (1 - q)^{-|B|} \binom{n-|S|-j+|B|}{t-|S|}W_j^D(\mC^{\perp}). \qedhere$$
\end{proof}

\begin{remark}
Although we call the above a MacWilliams-type identity, enumerators we consider do not naturally fit in the framework of the MacWilliams identities, as they do not represent the cardinalities of the blocks of a partition of the underlying code; see~\cite{gluesing2015fourier}. However,
one can see the above as a generalization of the classical MacWilliams identity. In fact, with tedious but straightfowrad computations one can show that  \eqref{eqt:MacWilliams} reduces to the famous MacWilliams identity~\cite{macwilliams1963theorem} when $S = \emptyset$.
\end{remark}

When applying MacWilliams-type identities, it is crucial to understand the dependencies between variables 
and which groups of these determine each other,
possibly under extra assumptions.
The next result shows
that $W_i^S(\mC)$ can be rewritten in terms of $W_j^T(\mC^{\perp})$ for $1 \leq j \leq n$ and all $T \subseteq [n]$ with $|T| = |S|$, provided that the cardinality of $S$ does not exceed the minimum distance of $\mC^\perp$. This result will become especially useful for the linear programming bounds presented in \cref{sec:Applications}.
%on $(r, \delta)$-LRC codes in . 

\begin{corollary}\label{cor:w_i^Sfuldeter}
    Let $\mC \leq \F_q^n$ be a code,  $d^{\perp}$ the minimum distance of $\mC^{\perp}$, and  $S \subseteq [n]$. If $|S| \leq d^{\perp}$, then 
    \begin{multline*}W_i^S(\mC) = \binom{n-|S|}{i-|S|}q^{k-n}(q-1)^{i} + q^{k-n-|S|}(q-1)^{|S|}\sum_{t=|S|}^n \sum_{j=d^{\perp}}^n \sum_{\substack{T \subseteq [n]\\|T|=|S|}} \sum_{D \subseteq S\cap T}\sum_{B \subseteq D}(-1)^{i-t+|D|}\\ q^t(q-1)^{-|B|} \binom{n-t}{i-t}\binom{n-|S|-j+|B|}{t-|S|}\binom{j-|D|}{|S|-|D|}^{-1}W_j^T(\mC^{\perp}).
    \end{multline*}
\end{corollary}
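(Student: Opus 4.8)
The plan is to derive the corollary directly from the MacWilliams-type identity of \cref{thm:w_i^S}, by splitting the innermost sum over $j$ in \eqref{eqt:MacWilliams} into the two ranges $0 \le j < d^{\perp}$ and $d^{\perp} \le j \le n$, and handling each separately.

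For the low-weight range I would use that $W_j^D(\mC^{\perp}) = 0$ whenever $1 \le j < d^{\perp}$, since $\mC^{\perp}$ has no nonzero codeword of weight below its minimum distance, whereas $W_0^D(\mC^{\perp})$ vanishes unless $D = \emptyset$, in which case $W_0^{\emptyset}(\mC^{\perp}) = 1$. Consequently the whole portion of \eqref{eqt:MacWilliams} with $0 \le j < d^{\perp}$ collapses to the single term $j = 0$, $D = B = \emptyset$, namely
\[
q^{k-n-|S|}(q-1)^{|S|}\sum_{t=|S|}^{n}(-1)^{i-t}q^{t}\binom{n-t}{i-t}\binom{n-|S|}{t-|S|}.
\]
To see that this is exactly the leading term $\binom{n-|S|}{i-|S|}q^{k-n}(q-1)^{i}$ of the claimed formula, I would use the identity $\binom{n-t}{i-t}\binom{n-|S|}{t-|S|} = \binom{n-|S|}{i-|S|}\binom{i-|S|}{t-|S|}$ to pull $\binom{n-|S|}{i-|S|}$ out of the sum, then reindex by $u = t - |S|$ and apply the Binomial Theorem, giving $\sum_{u}(-1)^{i-|S|-u}q^{u+|S|}\binom{i-|S|}{u} = q^{|S|}(q-1)^{i-|S|}$; combined with the prefactor this produces $\binom{n-|S|}{i-|S|}q^{k-n}(q-1)^{i}$.

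For the high-weight range, the key step is to apply \cref{lem:countS} to the dual code $\mC^{\perp}$, using $j$ in place of $i$, $D$ in place of $A$, and $|S|$ in place of $t$. Its hypotheses hold: $D \subseteq S$ forces $|D| \le |S|$, and the standing assumption $|S| \le d^{\perp}$ together with $j \ge d^{\perp} \ge 1$ gives $|D| \le |S| \le j$. The lemma then yields
\[
W_j^D(\mC^{\perp}) = \binom{j-|D|}{|S|-|D|}^{-1}\sum_{\substack{D \subseteq T \subseteq [n]\\ |T|=|S|}}W_j^T(\mC^{\perp}).
\]
Substituting this into the $d^{\perp} \le j \le n$ part of \eqref{eqt:MacWilliams} and interchanging the order of summation so that the sum over $T$ precedes the sum over $D$, the two constraints $D \subseteq S$ and $D \subseteq T$ merge into $D \subseteq S \cap T$, and what is left is precisely the multi-indexed sum appearing as the second term of the corollary. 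Adding the two contributions finishes the proof.

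The argument is essentially bookkeeping around \eqref{eqt:MacWilliams}, and I do not expect a genuine obstacle. The two points that need care are: (i) correctly isolating the $j = 0$, $D = \emptyset$ term in the low-weight range and checking that it collapses to the stated leading term; and (ii) verifying that the hypothesis $|S| \le d^{\perp}$ is exactly what licenses the application of \cref{lem:countS} with target cardinality $|T| = |S|$, since that lemma requires the target size to be at most the common weight $j$ of the dual codewords being counted.
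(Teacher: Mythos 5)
Your proposal is correct and follows essentially the same route as the paper's proof: isolating the $j=0$ term (using $W_0^D(\mC^\perp)=1$ iff $D=\emptyset$), observing that $W_j^D(\mC^\perp)=0$ for $1\le j<d^\perp$, simplifying the $j=0$ contribution via a Vandermonde-type binomial identity and the Binomial Theorem to obtain $\binom{n-|S|}{i-|S|}q^{k-n}(q-1)^i$, and then invoking \cref{lem:countS} (with the roles you describe, licensed exactly by $|S|\le d^\perp\le j$) to convert $W_j^D(\mC^\perp)$ into the sum over $T$ with $|T|=|S|$ before merging $D\subseteq S$ and $D\subseteq T$ into $D\subseteq S\cap T$. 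The only cosmetic difference is the way you write the binomial identity (you use $\binom{n-t}{i-t}\binom{n-|S|}{t-|S|}=\binom{n-|S|}{i-|S|}\binom{i-|S|}{t-|S|}$, the paper uses the symmetric form $\binom{n-s}{n-i}\binom{i-s}{i-t}$); these are the same identity.
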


\begin{proof}
Let $s:= |S|$ and consider the identity from Theorem~\ref{thm:w_i^S}. We first isolate the case $j=0$. Note that $W_0^D(\mC^{\perp}) =1$ if $D = \emptyset$ and $W_0^D(\mC^{\perp}) = 0$ otherwise. Hence the summand of Theorem~\ref{thm:w_i^S} corresponding to $j=0$ is:
\begin{align*}
 \sum_{t = s}^n (-1)^{i-t}q^t\binom{n-t}{i-t}\binom{n-s}{n-t}
=& \sum_{t = s}^n (-1)^{i-t}q^t\binom{n-s}{n-i}\binom{i-s}{i-t}\\
=& \binom{n-s}{n-i}\sum_{t=s}^{i}(-1)^{i-t}q^t\binom{i-s}{i-t}\\
%&= \binom{n-s}{n-i}\sum_{t=0}^{i-s}(-1)^{i-s-t}q^{t+s}\binom{i-s}{i-s-t}\\
=& \binom{n-s}{n-i}q^s \sum_{t=0}^{i-s}(-1)^{i-s-t}q^t\binom{i-s}{t}\\
=& \binom{n-s}{n-i}q^s (q-1)^{i-s},
\end{align*}
where 
the former equality follows from
the identity $$\binom{a}{b+c}\binom{b+c}{b} = \binom{a}{c}\binom{a-c}{b},$$ and the latter one follows from the Binomial Theorem. 

Now assume $j \geq 1$. If $j \leq d^{\perp}$ then $W_j^D(\mC^{\perp}) = 0$. Thus it suffices to consider the summands for $d^{\perp} \leq j \leq n$. By assumption $s \leq d^{\perp}$. Hence for all $D \subseteq S$ we can apply Lemma~\ref{lem:countS} and get 
$$W_j^D(\mC^{\perp}) = \binom{j-|D|}{s-|D|}^{-1}\sum_{\substack{D \subseteq T \subseteq [n]\\|T|=s}}W_j^T(\mC^{\perp}).$$
Therefore for $j \geq d^{\perp}$, \eqref{eqt:MacWilliams} can be rewritten as
\begin{small}
\begin{align*}
&\sum_{j=d^{\perp}}^n \sum_{t=s}^n \sum_{D \subseteq S} \sum_{B \subseteq D} (-1)^{i-t+|D|}q^t(q-1)^{-|B|}\binom{n-t}{i-t}\binom{n-s-j+|B|}{t-s} \binom{j-|D|}{s-|D|}^{-1}\sum_{\substack{D \subseteq T \subseteq [n]\\|T|=s}}W_j^T(\mC^{\perp})\\
=&\sum_{j=d^{\perp}}^n \sum_{t=s}^n \sum_{\substack{T \subseteq [n]\\|T|=s}} \sum_{D \subseteq S\cap T}\sum_{B \subseteq D}(-1)^{i-t+|D|}q^t(q-1)^{-|B|}\binom{n-t}{i-t}\binom{n-s-j+|B|}{t-s}\binom{j-|D|}{s-|D|}^{-1}W_j^T(\mC^{\perp}).
\end{align*}
\end{small}
Putting everything together proves the desired statement. 
\end{proof}

Note that in the above proof
the assumption $|S| \leq d^{\perp}$ was needed to apply \cref{lem:countS}. A natural question is whether $W_i^S(\mC)$ or not can be fully expressed in terms of $W_j^T(\mC^{\perp})$, for $0 \leq j \leq n$ and $|T| = |S|$ when $|S| > d^{\perp}$.
At the time of writing this paper we are unable to answer this question.

\section{Applications and Bounds}\label{sec:Applications}

Delsarte’s linear programming (LP) bound~\cite{delsarte1973algebraic} is 
a powerful
tool to estimate the size of a code with a certain length and minimum distance.
It combines the classical Macwilliams identities with, as the name suggests, linear programming. In this section we study a new~LP bound for codes with locality.
We use our main duality result,
Theorem~\ref{thm:w_i^S},
to build a linear program that gives a bound on the size of $(r,\delta)$-LRCs.

\begin{notation}
In the sequel, for a code $\mC \le \F_q^n$, $0 \le i \le n$ and $1 \le j \le n$, we will write $\smash{W_i^j(\mC)}$ instead of $\smash{W_i^{\{j\}}(\mC)}$.
\end{notation}

The next result shows that
it is possible
to bound from below, in any
$(r, \delta)$-LRC,
the number of codewords with coordinate $i$ in their support and weight at most $r+\delta-1$.

\begin{proposition} \label{prop:rdellrc}
    Let $\delta \geq 2$ and let $\mC \leq \F_q^n$ be an $(r, \delta)$-LRC, with dimension $k$. Then for all $j \in [n]$ we have
    $$\sum_{i=0}^{r+\delta -1} W^j_i(\mC^{\perp}) \geq q^{\delta-1} - q^{\delta -2}.$$
\end{proposition}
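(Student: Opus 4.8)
The plan is to reduce the claim, for a fixed coordinate $j$, to a counting statement about the dual of the short code living on an $(r,\delta)$-recovery set of $j$.

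First I would fix $j \in [n]$, pick an $(r,\delta)$-recovery set $S_j$ for $j$, and put $T := S_j \cup \{j\}$, so that $j \in T$, $m := |T| \le r+\delta-1$, and the restricted code $\mD := \pi_T(\mC) \le \F_q^T$ satisfies $d(\mD) \ge \delta$ by Definition~\ref{def:rdeltaloc}. Note that $\mD \ne \{0\}$, since $\mC$ is non-degenerate and $j \in T$. The next step is the transfer to $\mC^\perp$: for $y \in \mD^\perp \le \F_q^T$, let $\tilde{y} \in \F_q^n$ be the extension of $y$ by zeros on $[n]\setminus T$. Then $\tilde{y}\cdot x^\top = y \cdot \pi_T(x)^\top = 0$ for every $x \in \mC$, so $\tilde{y} \in \mC^\perp$; moreover $\wH(\tilde{y}) = \wH(y) \le m \le r+\delta-1$, and $j \in \sigma(\tilde{y})$ precisely when $y_j \neq 0$. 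Since $y \mapsto \tilde{y}$ is injective, it is enough to show that $\mD^\perp$ contains at least $q^{\delta-1}-q^{\delta-2}$ elements $y$ with $y_j \neq 0$.

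For this I would invoke two facts about $\mD^\perp$. First, the Singleton bound applied to $\mD$ gives $\dim_{\F_q}\mD \le m - \delta + 1$, hence $\dim_{\F_q}\mD^\perp = m - \dim_{\F_q}\mD \ge \delta - 1$ and $|\mD^\perp| \ge q^{\delta-1}$. Second, the coordinate functional $y \mapsto y_j$ is not identically zero on $\mD^\perp$: otherwise the indicator vector $e_j \in \F_q^T$ of the coordinate $j$ would lie in $(\mD^\perp)^\perp = \mD$, producing a nonzero codeword of $\mD$ of weight $1$ and contradicting $d(\mD)\ge\delta\ge 2$. Hence $\{y \in \mD^\perp : y_j = 0\}$ has codimension $1$ in $\mD^\perp$, so the number of $y \in \mD^\perp$ with $y_j \neq 0$ equals $|\mD^\perp|\,(1-q^{-1}) \ge q^{\delta-1}(1-q^{-1}) = q^{\delta-1}-q^{\delta-2}$. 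Combining this with the previous paragraph gives $\sum_{i=0}^{r+\delta-1} W_i^j(\mC^\perp) \ge q^{\delta-1}-q^{\delta-2}$.

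The one step that is not pure bookkeeping is the second fact above: one must rule out that the $j$-th coordinate vanishes on all of $\mD^\perp$, and this is exactly where the hypotheses $d(\pi_T(\mC))\ge\delta$ and $\delta\ge 2$ enter. Without it the dimension count only bounds the number of codewords of $\mC^\perp$ supported inside $T$, rather than the number of those actually containing $j$ in their support, which is what the statement requires; everything else is a direct unwinding of the definitions of $\mD^\perp$, $\wH$, and $W_i^j$.
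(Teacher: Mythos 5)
Your argument is correct and follows essentially the same route as the paper's: restrict $\mC$ to (the closure of) a recovery set for $j$, use the Singleton bound to get $\dim\mD^\perp \geq \delta-1$, observe that the $j$-th coordinate functional is nonzero on $\mD^\perp$ because $d(\mD)\ge\delta\ge 2$ forbids a weight-one codeword $e_j\in\mD$, and then count the complement of the resulting hyperplane. The paper phrases the nonvanishing step in terms of $\mC^\perp(S_j)$ being degenerate, but the underlying reasoning is the same, and your version spells it out slightly more explicitly.
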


\begin{proof}
    Let $j \in [n]$ and $S_j \subseteq [n]$ be minimal with respect to inclusion
    such that $|S_j| \leq r +\delta -1$, $j \in S_j$, and $d(\pi_{S_j}(\mC)) \geq \delta$. Let $k_j := \dim(\pi_{S_j}(\mC))$, $d_j := d(\pi_{S_j}(\mC))$, and $k_j^{\perp} = |S_j| - k_j$. From the Singleton Bound we have $|S_j|-k_j+1\ge \delta$, which means $k_j^\perp \ge \delta-1$. Furthermore there exist $v \in \mC^{\perp}(S_j)$ such that $j \in \sigma(v)$, since otherwise we arrive to the contradiction that $\mC^{\perp}(S_j)$ would be degenerate and $d(\pi_{S_j}(\mC)) =1 < \delta$. 
    These two facts together imply that there are at least $q^{\delta-1} - q^{\delta - 2}$ codewords in  $\mC^{\perp}(S_j)$ containing $j$ in their support and of weight at most $|S_j| \leq r+ \delta -1$. 
    Therefore $\sum_{i=0}^{r+\delta-1} W^j_i(\mC^{\perp}) \geq q^{\delta-1} - q^{\delta - 2}$.
\end{proof}

Unfortunately, the converse of this statement is not true, unless $\delta=2$ (leading to the statement of Lemma~\ref{lem:charloc} below). Consider the following example. 

\begin{example}
Let $G$ and $H$ be matrices over $\F_2$ defined as follows:
$$G := \begin{pmatrix}1 & 1 & 0 & 1& 0 \\ 0 & 1 & 1 & 0 &0 \\ 0 &0 & 0&1 & 1 \end{pmatrix},
\quad H := \begin{pmatrix} 1 & 1 & 1 & 0 & 0 \\ 1 & 0 &0 & 1 & 1 \end{pmatrix}.$$
Let $\mC=\{xG : x \in \F_2^3\}$ which then implies $\mC^\perp=\{xH : x \in \F_2^2\}$. If we fix $r:=2$ and $\delta:=3$, then $r + \delta -1 = 4$ and $q^{\delta-1} - q^{\delta-2}  =2$. One can easily check that for all $j \in [5]$ we have 
$$\sum_{i=0}^{r+ \delta -1}W_i^j(\mC^{\perp}) =\sum_{i=0}^{4}W_i^j(\mC^{\perp}) = 2.$$
However, for the coordinate $\{1\}$ there exists no set $S_1 \subseteq [5]$ such that $1 \in S_1$, $|S_1| \leq 4$, and $d(\pi_{S_1}(\mC)) \geq 3$. In fact any projection of $\mC$ onto a set of coordinate of size less or equal to $4$ will have minimum distance $1$ or $2$. Hence the code $\mC$ is not an $(2, 3)$-LRC. 
\end{example}

As already mentioned, for $\delta=2$ the converse of Proposition~\ref{prop:rdellrc} is also true. As a byproduct, the result also gives a characterization of codes with locality.

\begin{lemma} \label{lem:charloc}
A non-degenerate code $\mC \le \F_q^n$ has locality $r$ if and only if $\sum_{i=2}^{r+1}W_{i}^j(\mC^\perp) \ge  q-1$ for all $1 \le j \le n$.
\end{lemma}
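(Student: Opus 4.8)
The plan is to reduce the statement to the support characterization of locality, namely Lemma~\ref{local/sup}, after reinterpreting the sum $\sum_{i=2}^{r+1} W_i^j(\mC^\perp)$ as a cardinality. First I would observe that, because $\mC$ is non-degenerate, $\mC^\perp$ contains no nonzero scalar multiple of the $j$-th standard basis vector (otherwise every codeword of $\mC$ would have a zero in coordinate $j$), so $W_1^j(\mC^\perp) = 0$. Since any vector $x$ with $j \in \sigma(x)$ is automatically nonzero, this yields
\[
\sum_{i=2}^{r+1} W_i^j(\mC^\perp) \;=\; \sum_{i=1}^{r+1} W_i^j(\mC^\perp) \;=\; \bigl|\{x \in \mC^\perp : j \in \sigma(x),\ \wH(x) \le r+1\}\bigr|.
\]
Writing $N_j$ for this set, I would note that $N_j$ is invariant under multiplication by $\F_q^\ast$, that this action is free on $N_j$ (as $N_j$ contains only nonzero vectors), and hence that $|N_j|$ is a multiple of $q-1$; in particular $|N_j| \ge q-1$ as soon as $N_j \neq \emptyset$.

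For the ``only if'' direction I would assume $\mC$ has locality $r$. By Lemma~\ref{local/sup}, for each $j \in [n]$ there exists $x \in \mC^\perp$ with $j \in \sigma(x)$ and $\wH(x) \le r+1$, i.e.\ $N_j \neq \emptyset$; by the observation above, $|N_j| \ge q-1$, which is precisely $\sum_{i=2}^{r+1} W_i^j(\mC^\perp) \ge q-1$.

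For the ``if'' direction I would assume $\sum_{i=2}^{r+1} W_i^j(\mC^\perp) \ge q-1$ for all $j$. Since $q \ge 2$, this forces $N_j \neq \emptyset$ for every $j$, so there exists $x \in \mC^\perp$ with $j \in \sigma(x)$ and $\wH(x) \le r+1$, and Lemma~\ref{local/sup} gives that $\mC$ has locality $r$.

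There is no serious obstacle here beyond bookkeeping: the two points requiring care are the use of non-degeneracy to discard weight-one dual codewords (equivalently, to legitimately start the sum at $i = 2$), and the observation that the free $\F_q^\ast$-action supplies the exact constant $q-1$ needed in the ``only if'' direction — the ``if'' direction in fact only uses that the sum is positive.
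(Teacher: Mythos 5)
Your proof is correct, and it takes essentially the route the paper implies: the paper states the lemma without an explicit proof, pointing instead to Proposition~\ref{prop:rdellrc} (whose $\delta=2$ case gives the forward direction) and to the remark that the converse holds for $\delta=2$; your argument via Lemma~\ref{local/sup} together with the scalar-orbit observation that $|N_j|$ is a positive multiple of $q-1$ is precisely the $\delta=2$ specialization of that reasoning, made self-contained. The one step that genuinely needs care — using non-degeneracy to justify starting the sum at $i=2$ by showing $W_1^j(\mC^\perp)=0$ — you handle correctly.
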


Proposition~\ref{prop:rdellrc} and Lemma~\ref{lem:charloc} indicate
that the MacWilliams-type identity of Corollary~\ref{cor:w_i^Sfuldeter} becomes especially useful in the context of LRCs when considering the weight distribution $W_i^{j}(\mC^{\perp})$, where $1 \leq j \leq n$ and $0 \leq i \leq n$. By applying \cref{cor:w_i^Sfuldeter}, we get the following corollary. 

\begin{corollary}\label{cor:w_i^ssingleelem}
    Let $\mC \leq \F_q^n$ be a code and  $l \in [n]$. Then 
    \begin{multline*}W_i^l(\mC) = \binom{n-1}{i-1}q^{k-n}(q-1)^{i} + q^{k-n-1}(q-1) \sum_{j=d^{\perp}}^n \sum_{s=1}^n \sum_{t=1}^i (-1)^{i-t}q^t\binom{n-t}{i-t}\\ \left(\frac{1}{j}\binom{n-1-j}{t-1}\right)^{1-\delta(l,s)} \left( \frac{1-j}{j}\binom{n-1-j}{t-1} - (q-1)^{-1}\binom{n-j}{t-1} \right)^{\delta(l,s)}W_j^s(\mC^{\perp}),
    \end{multline*}
    where $\delta(l,s) =1$ if $l=s$ and $0$ otherwise. 
\end{corollary}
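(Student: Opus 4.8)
The plan is to specialize Corollary~\ref{cor:w_i^Sfuldeter} to the case $S = \{l\}$, so that $|S| = 1$, and then carry out the resulting simplifications. First I would set $s := |S| = 1$ in the statement of Corollary~\ref{cor:w_i^Sfuldeter}. The leading term $\binom{n-|S|}{i-|S|}q^{k-n}(q-1)^i$ immediately becomes $\binom{n-1}{i-1}q^{k-n}(q-1)^i$, matching the first summand in the claim. For the main double-triple-sum, with $S = \{l\}$ the inner set $T$ ranges over all singletons $\{s\}$ with $s \in [n]$ (here I am overloading notation, using $s$ for the element as in the statement to be proven), and the sets $D \subseteq S \cap T$ and $B \subseteq D$ are heavily constrained: $S \cap T$ is either $\{l\}$ (when $s = l$) or $\emptyset$ (when $s \neq l$), and in either case $D$ and $B$ can only be $\emptyset$ or (possibly) $\{l\}$.

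The key step is therefore a careful case split on whether $s = l$ or $s \neq l$, which is exactly what the Kronecker-delta notation $\delta(l,s)$ in the statement is designed to encode. When $s \neq l$, the only choice is $D = B = \emptyset$, so the summand over $D, B$ collapses to a single term with $|D| = |B| = 0$; the factor $\binom{j-|D|}{|S|-|D|}^{-1} = \binom{j}{1}^{-1} = 1/j$, and the product $q^t(q-1)^{-|B|}\binom{n-t}{i-t}\binom{n-|S|-j+|B|}{t-|S|}$ becomes $q^t \binom{n-t}{i-t}\binom{n-1-j}{t-1}$, with sign $(-1)^{i-t+|D|} = (-1)^{i-t}$. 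This gives the $\delta(l,s) = 0$ branch, where the parenthesized factor is $\frac{1}{j}\binom{n-1-j}{t-1}$ raised to the power $1 - \delta(l,s) = 1$. When $s = l$, we get two terms, one for $D = B = \emptyset$ and one for $D = B = \{l\}$ (the mixed choice $D = \{l\}, B = \emptyset$ also contributes): I would sum the three contributions. The $D = B = \emptyset$ term contributes $\frac{1}{j}\binom{n-1-j}{t-1}$ as before; the $D = \{l\}, B = \emptyset$ term contributes $(-1)\cdot\binom{j-1}{0}^{-1}\binom{n-j}{t-1} = -\binom{n-j}{t-1}$ (with the overall $(q-1)$ prefactor giving the $-(q-1)^{-1}\binom{n-j}{t-1}$ after accounting for the $q^{k-n-1}(q-1)$ outside); the $D = B = \{l\}$ term contributes $(-1)^{i-t+1}q^t(q-1)^{-1}\binom{n-t}{i-t}\binom{n-j}{t-1}\binom{j-1}{0}^{-1}$. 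Collecting these and factoring out $(-1)^{i-t}q^t\binom{n-t}{i-t}$ should produce precisely the expression $\frac{1-j}{j}\binom{n-1-j}{t-1} - (q-1)^{-1}\binom{n-j}{t-1}$ that appears raised to the power $\delta(l,s) = 1$ in the claim — though one has to check the algebra combining $-\binom{n-j}{t-1}$ with $\frac{1}{j}\binom{n-1-j}{t-1}$, using $\binom{n-1-j}{t-1} + \binom{n-1-j}{t-2}$-type identities or directly $\binom{n-j}{t-1} = \binom{n-j}{t-1}$ against $\frac{n-j}{j}\binom{n-1-j}{t-1} = \binom{n-j}{t-1} - \frac{1}{j}\binom{n-1-j}{t-1}\cdot(\text{something})$; this is the one place needing genuine care.

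The main obstacle I anticipate is purely bookkeeping: correctly enumerating the $O(1)$-many pairs $(D, B)$ in each case and verifying that the binomial-coefficient manipulations collapse the $s = l$ contributions into the stated closed form $\frac{1-j}{j}\binom{n-1-j}{t-1} - (q-1)^{-1}\binom{n-j}{t-1}$. Everything else — the leading term, the ranges of $j$ (from $d^\perp$, though the claim writes $d^{\perp}$ and terms with $1 \le j < d^{\perp}$ vanish anyway since $W_j^s(\mC^{\perp}) = 0$), $t$ (from $|S| = 1$ to $n$, but restricted to $t \le i$ in the claim because $\binom{n-t}{i-t}$ forces $t \le i$), and $s$ (all of $[n]$) — follows directly by substitution. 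So after the case split and the short binomial identity, I would simply assemble the two branches using the exponent trick $(\text{branch}_0)^{1-\delta(l,s)}(\text{branch}_1)^{\delta(l,s)}$ to write both cases uniformly, which completes the proof.
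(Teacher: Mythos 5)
Your overall approach is exactly the paper's: substitute $S=\{l\}$ into Corollary~\ref{cor:w_i^Sfuldeter}, observe that $T$ ranges over singletons $\{s\}$, and split on whether $s=l$ (so that $S\cap T=\{l\}$, giving three $(D,B)$ pairs) or $s\neq l$ (so that $S\cap T=\emptyset$, giving only $D=B=\emptyset$). The leading term, the index bookkeeping, and the treatment of the $s\neq l$ branch are all correct.

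However, you have a concrete computational error in the $s=l$ branch. For the pair $D=\{l\}$, $B=\emptyset$ you have $|D|=1$ and $|B|=0$, so the binomial factor $\binom{n-|S|-j+|B|}{t-|S|}$ evaluates to $\binom{n-1-j}{t-1}$, \emph{not} $\binom{n-j}{t-1}$ as you wrote. (The factor $\binom{n-j}{t-1}$ arises only when $|B|=1$, i.e.\ for the pair $D=B=\{l\}$, which you handled correctly.) Once this is fixed, the three contributions to the inner sum (after factoring out the common $(-1)^{i-t}q^t\binom{n-t}{i-t}$) are
\[
\frac{1}{j}\binom{n-1-j}{t-1}\;-\;\binom{n-1-j}{t-1}\;-\;(q-1)^{-1}\binom{n-j}{t-1},
\]
and the first two combine trivially via $\tfrac{1}{j}-1=\tfrac{1-j}{j}$ to give $\tfrac{1-j}{j}\binom{n-1-j}{t-1}$, matching the stated closed form. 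No Pascal-type binomial identity is needed, so the place you flagged as ``needing genuine care'' was in fact an artifact of the miscomputed term; with the correct binomial coefficient the assembly is immediate and your proof goes through.
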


\begin{proof}
Using Corollary~\ref{cor:w_i^Sfuldeter} for $S = \{l\}$  we immediately get 
  \begin{multline*}W_i^{l}(\mC) = \binom{n-1}{i-1}q^{k-n}(q-1)^i + q^{k-n-1}(q-1)\sum_{t=1}^n \sum_{j=d^{\perp}}^n \sum_{\substack{T \subseteq [n]\\|T|=1}} \sum_{D \subseteq \{l\}\cap T}\sum_{B \subseteq D}(-1)^{i-t+|D|}q^t\\(q-1)^{-|B|}\binom{n-t}{i-t}\binom{n-1-j+|B|}{t-1}\binom{j-|D|}{1-|D|}^{-1}W_j^T(\mC^{\perp}).
    \end{multline*}
We now compute 
\begin{equation}\label{equ:w_i^l}\sum_{D \subseteq \{l\}\cap T}\sum_{B \subseteq D}(-1)^{|D|}(q-1)^{-|B|}\binom{n-1-j+|B|}{t-1}\binom{j-|D|}{1-|D|}^{-1}, \end{equation}
for both $T = \{l\}$ and $T = \{s\}$ where $s \neq l$. In the former case, \eqref{equ:w_i^l} is equal to
\begin{align*}
    &\frac{1}{j}\binom{n-1-j}{t-1} - \left[(q-1)^{-1}\binom{n-j}{t-1} + \binom{n-1-j}{t-1}\right].
\end{align*}
In the case $T= \{s\}$ where $s \neq l$, \eqref{equ:w_i^l} is equal to $\frac{1}{j}\binom{n-1-j}{t-1}$. The results then follows. 
\end{proof}

\begin{remark}
    There exist other methods to derive the MacWilliams-type identity of \cref{cor:w_i^ssingleelem}. A first approach was established in \cite{gruica2022duality}, where
    the parameter $W_i^l(\mC)$ is rewritten in terms of the $W_j(\mC)$'s and  $W_j(\mC([n]\setminus\{l\}))$'s for $1 \leq j, \, l \leq n$.  The classical MacWilliams identity is then applied to the latter terms. In addition, it was brought to our attention by M. Grassl that our refined weight distribution seems related to the notion of \textit{split weight enumerator}; see \cite{simonis1995macwilliams} for more details. We were able to show that this is in fact the case and are able 
    to derive 
    \cref{cor:w_i^ssingleelem} also as a corollary
    of~\cite[Proposition 1]{simonis1995macwilliams}. 
\end{remark}

The last result we need to apply a linear program is the following, which is a specific case of \cref{lem:countS} for $A = \emptyset$ and $|S| = 1$.

\begin{lemma} \label{lem:helpdual}
Let $\mC \le \F_q^n$ be non-degenerate. For all $1 \le i \le n$ and $1\le j \le n$ we have
$$W_i(\mC) = {\displaystyle\sum_{j=1}^n W_i^j (\mC)}/{i}.$$
\end{lemma}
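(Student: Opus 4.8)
The plan is to apply Lemma~\ref{lem:countS} directly with the choices $A = \emptyset$, $t = 1$, and the set $S$ playing the role of a single coordinate. Concretely, for a fixed $i$ with $1 \le i \le n$, Lemma~\ref{lem:countS} with $A = \emptyset$ (so $|A| = 0$) and $t = 1$ gives
\[
W_i^{\emptyset}(\mC) = \frac{1}{\binom{i-0}{1-0}} \sum_{\substack{S \subseteq [n] \\ |S| = 1}} W_i^S(\mC) = \frac{1}{i} \sum_{j=1}^n W_i^{\{j\}}(\mC).
\]
By definition $W_i^{\emptyset}(\mC) = W_i(\mC)$, since requiring $\emptyset \subseteq \sigma(x)$ is no constraint, and by the notational convention introduced just before Proposition~\ref{prop:rdellrc}, $W_i^{\{j\}}(\mC) = W_i^j(\mC)$. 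Substituting these identifications yields exactly $W_i(\mC) = \left(\sum_{j=1}^n W_i^j(\mC)\right)/i$, which is the claimed statement.

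First I would verify that the hypotheses of Lemma~\ref{lem:countS} are met: we need $1 \le i \le n$ (given) and $|A| \le t \le i$, i.e. $0 \le 1 \le i$, which holds precisely because $i \ge 1$. This is why the statement is restricted to $1 \le i \le n$ and does not include $i = 0$; indeed for $i = 0$ the factor $1/i$ is undefined, matching the fact that one cannot recover $W_0(\mC) = 1$ from the supports of weight-$0$ codewords. Second, I would note that non-degeneracy of $\mC$ is not actually needed for the displayed identity to hold as stated — it follows purely from the counting in Lemma~\ref{lem:countS} — but it is the natural hypothesis in the context where the lemma is used (so that $W_1^j(\mC^\perp)$ and similar quantities behave well), so I would keep it in the statement for consistency with the surrounding results.

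There is essentially no obstacle here: the result is an immediate specialization of the already-proven Lemma~\ref{lem:countS}, and the only subtlety worth spelling out is the matching of notation ($W_i^{\emptyset} = W_i$ and $W_i^{\{j\}} = W_i^j$) together with the degenerate-index caveat $i \ge 1$. Accordingly the write-up can be a two or three line proof that simply invokes Lemma~\ref{lem:countS} with the indicated parameters and rewrites the binomial coefficient $\binom{i}{1} = i$.

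\begin{proof}
Apply Lemma~\ref{lem:countS} with $A = \emptyset$ and $t = 1$; the hypothesis $|A| \le t \le i$ reads $0 \le 1 \le i$, which holds since $i \ge 1$. Since $\binom{i - |A|}{t - |A|} = \binom{i}{1} = i$ and the sets $S \subseteq [n]$ with $|S| = 1$ are exactly the singletons $\{j\}$ for $j \in [n]$, we obtain
\[
W_i^{\emptyset}(\mC) = \frac{1}{i} \sum_{j=1}^n W_i^{\{j\}}(\mC).
\]
Finally, $W_i^{\emptyset}(\mC) = W_i(\mC)$ because the condition $\emptyset \subseteq \sigma(x)$ is vacuous, and $W_i^{\{j\}}(\mC) = W_i^j(\mC)$ by our notational convention. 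This gives $W_i(\mC) = \left(\sum_{j=1}^n W_i^j(\mC)\right)/i$, as claimed.
\end{proof}
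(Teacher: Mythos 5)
Your proof is correct and takes exactly the same route as the paper: the paper itself states that this lemma is just the specialization of Lemma~\ref{lem:countS} with $A=\emptyset$ and $|S|=1$ (i.e.\ $t=1$), which is precisely what you do. Your additional remarks — that the constraint $i\ge 1$ is what makes $1/i$ meaningful and that non-degeneracy is not actually used in this particular identity — are accurate and a nice bit of extra care.
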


For ease of exposition, we introduce the following notation.

\begin{notation} \label{not:perp}
Let $A=\{a_{il} \mid 1\le i \le n+1, \, 1\le l \le n\} \subseteq \R_{\ge 0}$. For $1 \le i,l \le n$ we denote by $a_{il}^\perp$ the following linear combination of elements of $A$:
\begin{multline*}
    a_{il}^\perp =
    \binom{n-1}{i-1}(q-1)^{i} + q^{-1}(q-1) \sum_{j=d^{\perp}}^n \sum_{s=1}^n \sum_{t=1}^i (-1)^{i-t}q^t\binom{n-t}{i-t} \cdot \\ \left(\frac{1}{j}\binom{n-1-j}{t-1}\right)^{1-\delta(l,s)} \left( \frac{1-j}{j}\binom{n-1-j}{t-1} - (q-1)^{-1}\binom{n-j}{t-1} \right)^{\delta(l,s)}a_{js}.
\end{multline*}
\end{notation}

The following result
gives a linear program that establishes
bounds for codes with $(r,\delta)$-locality. 

\begin{theorem}[LP bound for $(r,\delta)$-LRCs] \label{thm:lpbound}
Let $\mC \le \F_q^n$ be a non-degenerate code of minimum distance at least $d$, dimension $k$, and $(r,\delta)$-locality. 
Let $\mu^*$ denote the minimum
value of 
$$\sum_{i=1}^{n}\left({\displaystyle\sum_{j=1}^{n} a_{ij}}/{i}\right),$$
where $a_{ij} \in \R$, for $1 \le i \le n+1$ and $1 \le j \le n$,
satisfy the following constraints:
\begin{itemize}
    \vspace{0.15cm}\item[(i)]  $a_{ij}\ge 0$ for $1 \le i,j \le n$, 
    \vspace{0.15cm}\item[(ii)]  $a_{ij}^\perp\ge 0$ for $1 \le i,j \le n$, 
    \vspace{0.15cm}\item[(iii)] $a_{ij}^\perp =0$ for $1 \le i \le d-1$ and $1 \le j \le n$,   
    \vspace{0.15cm}\item[(iv)]  $\smash{\textstyle\sum_{i=1}^{r+\delta-1}a_{ij} \ge q^{\delta-1}-q^{\delta-2}}$ for $1 \le j \le n$,
    \vspace{0.15cm}\item[(v)] $a_{1j} =0$ for $1 \le j \le n$.
\end{itemize}
Then
$$k \le n- \lceil \log_q(1+\mu^*) \rceil.$$
\end{theorem}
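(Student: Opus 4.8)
The plan is to realize the $a_{ij}$'s as (scalar multiples of) the refined weight distribution $W_i^j(\mathcal{C}^\perp)$ of the dual code, so that the constraints become genuine consequences of the structure of an $(r,\delta)$-LRC, and then to read off the size bound from the ``objective'' being essentially the total number of nonzero codewords of $\mathcal{C}^\perp$. Concretely, given a non-degenerate $(r,\delta)$-LRC $\mathcal{C}$ of dimension $k$, minimum distance at least $d$, I would set $a_{ij} := W_i^j(\mathcal{C}^\perp)$ for $1 \le i,j \le n$ (and extend by $a_{n+1,j}$ harmlessly, since it never appears in the objective or the active constraints after observing that $W_{n+1}^j$ is vacuous; in fact the index $n+1$ is a red herring that I would simply drop or set to $0$). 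The first task is to check that this assignment is feasible for the linear program, i.e.\ that it satisfies constraints (i)--(v).

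Feasibility goes as follows. Constraint (i), $a_{ij} \ge 0$, is immediate since the $W_i^j(\mathcal{C}^\perp)$ are cardinalities. For (ii) and (iii), the point of Notation~\ref{not:perp} is that $a_{ij}^\perp$ is exactly the linear combination appearing in Corollary~\ref{cor:w_i^ssingleelem}, divided by $q^{k-n}$; that is, with $a_{js} = W_j^s(\mathcal{C}^\perp)$ we get $q^{k-n} a_{ij}^\perp = W_i^j(\mathcal{C})$. Hence $a_{ij}^\perp = q^{n-k} W_i^j(\mathcal{C}) \ge 0$, giving (ii), and $a_{ij}^\perp = 0$ for $i \le d-1$ because $\mathcal{C}$ has minimum distance at least $d$, so there are no nonzero codewords of weight $\le d-1$ and hence $W_i^j(\mathcal{C}) = 0$ for such $i$; this gives (iii). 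Here one must be a little careful that Corollary~\ref{cor:w_i^ssingleelem} (equivalently Corollary~\ref{cor:w_i^Sfuldeter} with $|S|=1$) requires $|S| = 1 \le d^\perp$, i.e.\ $d^\perp \ge 1$, which holds automatically for a nonzero code; so the identity is applicable. Constraint (iv) is precisely Proposition~\ref{prop:rdellrc}: for every $j \in [n]$, $\sum_{i=0}^{r+\delta-1} W_i^j(\mathcal{C}^\perp) \ge q^{\delta-1} - q^{\delta-2}$, and since $W_0^j(\mathcal{C}^\perp) = 0$ (the zero vector has empty support, which does not contain $j$) the sum from $i=1$ equals the sum from $i=0$. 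Finally (v): $W_1^j(\mathcal{C}^\perp) = 0$ because $\mathcal{C}$ is non-degenerate, so $\mathcal{C}^\perp$ contains no weight-one vectors (a weight-one dual codeword would force a zero coordinate in $\mathcal{C}$, contradicting non-degeneracy). Thus $(a_{ij}) = (W_i^j(\mathcal{C}^\perp))$ is feasible, and therefore the minimum $\mu^*$ of the objective over all feasible points is at most the value attained at this particular point.

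The value attained is $\sum_{i=1}^n \big(\sum_{j=1}^n W_i^j(\mathcal{C}^\perp)\big)/i$, and by Lemma~\ref{lem:helpdual} (applied to the non-degenerate code $\mathcal{C}^\perp$; note $\mathcal{C}^\perp$ is non-degenerate since $\mathcal{C}$ is not the full space) we have $\sum_{j=1}^n W_i^j(\mathcal{C}^\perp)/i = W_i(\mathcal{C}^\perp)$ for each $i \ge 1$. Hence the objective value equals $\sum_{i=1}^n W_i(\mathcal{C}^\perp) = |\mathcal{C}^\perp| - W_0(\mathcal{C}^\perp) = |\mathcal{C}^\perp| - 1 = q^{n-k} - 1$. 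Therefore $\mu^* \le q^{n-k} - 1$, i.e.\ $q^{n-k} \ge 1 + \mu^*$, so $n - k \ge \log_q(1 + \mu^*)$, and since $n-k$ is an integer, $n - k \ge \lceil \log_q(1+\mu^*) \rceil$, which rearranges to $k \le n - \lceil \log_q(1+\mu^*)\rceil$, as claimed.

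I expect the only real subtlety — the ``main obstacle'' in the sense of needing care rather than new ideas — to be bookkeeping around edge cases: verifying that all the identities invoked (Corollary~\ref{cor:w_i^ssingleelem}, Lemma~\ref{lem:helpdual}, Lemma~\ref{local/sup}/Proposition~\ref{prop:rdellrc}) are being applied to the correct code ($\mathcal{C}$ vs.\ $\mathcal{C}^\perp$) with their hypotheses met, confirming that the superfluous index $n+1$ in the statement genuinely plays no role (it appears in the index range for $A$ in Notation~\ref{not:perp} and in the hypothesis list but never in $a_{ij}^\perp$, the objective, or constraints (i)--(v) for $i \le n$, so it can be set to anything), and making sure the constant term $\binom{n-1}{i-1}(q-1)^i$ in $a_{ij}^\perp$ matches the $j=0$ term extracted in the proof of Corollary~\ref{cor:w_i^Sfuldeter} after the $q^{k-n}$ normalization. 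None of these require computation beyond matching formulas; the mathematical content is entirely contained in the three cited results plus the LP-duality-style observation that a bound valid at one feasible point bounds the minimum from above.
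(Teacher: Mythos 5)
Your proposal is correct and follows essentially the same approach as the paper: both proofs exhibit $a_{ij} = W_i^j(\mC^\perp)$ as a feasible point of the linear program (using Corollary~\ref{cor:w_i^ssingleelem} to identify $a_{ij}^\perp$ with $q^{n-k}W_i^j(\mC)$, Proposition~\ref{prop:rdellrc} for constraint (iv), non-degeneracy for (v)), then use Lemma~\ref{lem:helpdual} to read the objective value at that point as $|\mC^\perp|-1 = q^{n-k}-1$ and take logs; yours simply spells out the bookkeeping the paper leaves implicit. One small inaccuracy: your parenthetical ``$\mC^\perp$ is non-degenerate since $\mC$ is not the full space'' is not the right implication ($\mC^\perp$ is non-degenerate iff $\mC$ contains no standard basis vector, i.e.\ $d(\mC)\ge 2$), but this is harmless here since the identity $\sum_j W_i^j(\mC^\perp) = i\,W_i(\mC^\perp)$ holds by direct counting without any non-degeneracy assumption.
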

\begin{proof}
We claim that for any non-degenerate linear code $\mC \le \F_q^n$ of minimum distance $d$ and $(r,\delta)$-locality, the assignment $\smash{a_{ij} = W_i^j(\mC^\perp)}$ is a feasible solution of the linear program. Indeed, (i) and~(ii) are satisfied trivially. Constraint~(iii) guarantees that the minimum distance of the code is at least~$d$ and constraint~(iv) needs to be fulfilled in order for the code to be an~$(r,\delta)$-LRC. Finally constraint~(v) makes sure that $\mC$ is non-degenerate. Therefore the optimum of the linear program gives a lower bound on the size of $\mC^\perp$ by Proposition~\ref{prop:rdellrc}.
\end{proof}

We conclude this section with some results obtained 
with the LP bound of Theorem~\ref{thm:lpbound}
in Tables~\ref{table_LPbound1}-\ref{table_LPbound5}. 
For fixed parameters $(q,n,d,r,\delta)$, we give
an upper bound on the dimension~$k$ of an $(r,\delta)$-LRC in $\F_q^n$ and minimum distance at least $d$. The computations were performed using \texttt{SageMath}. In the case where $\delta =2$ (i.e. Tables \ref{table_LPbound1}-\ref{table_LPbound4}) we compare our results to \eqref{eqt:single} (\textbf{gen. Singl}) and \eqref{eqt:kopt} (\textbf{SH}).
The latter is computed in two different ways: first, the value~$\smash{k_{\textrm{opt}}^{(q)}}$ is estimated using a classical LP bound (\textbf{SH with LP}); second, the exact value of~$\smash{k_{\textrm{opt}}^{(q)}}$ is computed (\textbf{SH exact}), since it can be determined for the parameters considered in the tables by looking at databases of codes. 
The most fair comparison between our result and  \eqref{eqt:kopt} is given by SH with LP, since the value of SH exact is only known for very small parameters. 
Our results show that \cref{thm:lpbound} is often
tighter than \eqref{eqt:kopt} (this is indicated in red on the tables). A similar comparison is done in \cref{table_LPbound5}, but using  \eqref{eqt:Singlrdelta} and \eqref{eq:rdeltakopt} instead. 

\begin{small}
\begin{table}[h!]
\caption{$q=2$, $\delta =2$}
    \centering
    \renewcommand\arraystretch{1.1}
\begin{tabular}{|c|c|c|c|c|c|c|} 
 \hline
$n$ & $d$ & $r$ & LP & SH with LP  & SH exact & gen. Singl.
  \\\noalign{\global\arrayrulewidth 1.8pt}
    \hline
    \noalign{\global\arrayrulewidth0.4pt}
10 & 4 & 2 & $k \leq \red{4}$ & $ k \le 5$ & $k \le 5$ & $k \le 5$\\
\hline
11 & 3 & 3 & $k \le \red{6}$ & $k \le 7$ & $k \le 7$ & $k \le 7$\\
\hline
12 & 3 & 3 & $k \le \red{6}$ & $k \le 7$ & $ k \le 7$ & $k \le 7$\\
\hline
14 & 4 & 6 & $k \le \red{8} $ & $k \le 9$ & $k \le 9$ & $k \le 10$\\
\hline 
17 & 7 & 2 & $k \le \red{5}$ & $k \le 6$ & $ k \le 6$ & $k \le 8$ \\
\hline
18 & 8 & 2& $k\le \red{5}$ & $k \le 6$ & $k \le 6$ & $k \le 8$\\
\hline
20 & 9 & 5 & $k \le 5$ & $k \le 6$ & $k \le 5$ & $k \le 11$\\
\hline 
20 & 11 & 2 & $k \le \red{2}$ & $k \le 3$ & $k \le 3$ & $k \le 7$\\
\hline 
\end{tabular}
\label{table_LPbound1}
\end{table}

\begin{table}[h!]
\caption{$q=3$, $\delta =2$}
    \centering
    \renewcommand\arraystretch{1.1}
\begin{tabular}{|c|c|c|c|c|c|c|} 
 \hline
$n$ & $d$ & $r$ & LP & SH with LP  & SH exact & gen. Singl.
  \\\noalign{\global\arrayrulewidth 1.8pt}
    \hline
    \noalign{\global\arrayrulewidth0.4pt}
10 & 2 & 4 & $k \le \red{7}$ & $k \le 8$ & $k \le 8$ & $k \le 8$\\
\hline
11 & 6 & 4 & $k \le \red{4}$ & $k \le 5$ & $k \le 5$ & $k \le 5$\\
\hline
11 & 5 & 5 & $k \le \red{5}$ & $k \le 6 $ & $k \le 6$ & $k \le 6$\\
\hline 
13 & 9 & 2 & $k \le \red{2}$ & $k \le 3$ & $ k \le 3$ & $k \le 4$\\
\hline
14 & 2 & 6 & $k \le \red{11}$ & $k \le 12$ & $k \le 12$ & $k \le 12$\\
\hline
14 & 9 & 3 & $k \le 3$ & $k \le 4$ & $k \le 3$ & $k \le 5$\\
\hline
18 & 6 & 5 & $k \le 10$ & $k \leq 11$ & $k \le 10 $ & $k \le 11$\\
\hline
25 & 5 & 5 & $k \le \red{15}$ & $k \le 17$ & $k \le 17$ & $k \le 18$\\
\hline
\end{tabular}
\label{table_LPbound2}
\end{table}

\begin{table}[h!]
\caption{$q=4$, $\delta =2$}
    \centering
    \renewcommand\arraystretch{1.1}
\begin{tabular}{|c|c|c|c|c|c|c|} 
 \hline
$n$ & $d$ & $r$ & LP & SH with LP  & SH exact & gen. Singl.
  \\\noalign{\global\arrayrulewidth 1.8pt}
    \hline
    \noalign{\global\arrayrulewidth0.4pt}
    9 & 3 & 3 & $k \le \red{5}$  &$k \le 6$ & $k \le 6$ & $k \le 6$\\
    \hline
    10 & 2 & 4 & $k \le \red{7}$ & $k \le 8$ & $k \le 8$ & $k \le 8$\\
    \hline
    11 & 8 & 2 & $k \le 2$ & $k \le 3$ & $k \le 2$ & $k \le 3$\\
    \hline
    12 & 2 & 5 & $k \le \red{9}$ & $k \le 10$ & $ k \le 10$ & $ k\le 10$\\
    \hline
    14 & 2 & 6 & $k \le \red{11}$ & $k \le 12$ & $k \le 12$& $k \le 12$\\
    \hline 
    15 & 10 & 4 & $k \le 4$ & $k \le 5$ & $k \le 4$ & $k \le 5$\\
    \hline
   15 & 8 & 6& $k \le 6$ & $k \le 7$ & $k \le 6$ & $ k \le 7$\\
   \hline
    20 & 2 & 9 & $k \le \red{17}$ & $k \le 18$ & $k \le 18$ & $k \le 18$\\
    \hline
\end{tabular}
\label{table_LPbound3}
\end{table}

\begin{table}[h!]
\caption{$q=5$, $\delta =2$}
    \centering
    \renewcommand\arraystretch{1.1}
 \begin{tabular}{|c|c|c|c|c|c|c|} 
 \hline
$n$ & $d$ & $r$ & LP & SH with LP & SH exact & gen. Singl.
  \\\noalign{\global\arrayrulewidth 1.8pt}
    \hline
    \noalign{\global\arrayrulewidth0.4pt}
    9 & 3 & 3 & $k \le \red{5}$ & $k \le 6$ & $k \le 6$  & $k \leq 6$\\
    \hline
    11 & 3 & 4 & $k \le \red{7}$ & $k \le 8$ & $k \le 8$  & $k \leq 8$\\
    \hline
    14 & 2 & 6 & $k \le \red{11}$ & $k \le 12$ & $ k\le 12 $ & $k \le 12$\\
    \hline
    16 & 2 & 7 & $k \le \red{13}$ & $k \le 14$ & $k \le 14 $ & $k \le 14$\\
    \hline 
    16 & 2 & 7 & $k \le \red{13} $& $k \le 14$ & $k \le 14$ & $k \le 14$\\
    \hline
    18 & 2& 8 & $k \le \red{15}$ & $k \le 16$ & $k \le 16 $ & $k \le 16$\\
    \hline
    22 & 2 & 10 & $k \le \red{19}$ & $k \le 20$ & $k \le 20 $ & $k \le 20$\\
    \hline
    24 & 2 & 11 & $k \le \red{21}$ & $k \le 22$ & $k \le 22$ & $k \le 22$\\
    \hline
\end{tabular}
\label{table_LPbound4}
\end{table}

\begin{table}[h!]
\centering
\caption{$q=2$, $\delta =3$}
\begin{tabular}{|c|c|c|c|c|c|c|} 

 \hline
$n$ & $d$ & $r$ & LP & SH with LP & SH exact & gen. Singl
  \\\noalign{\global\arrayrulewidth 1.8pt}
    \hline
      \noalign{\global\arrayrulewidth0.4pt}
 %   15 & 4 & 5 & 10 & 9 \\ \hline
 %   15 & 5 & 5 & 10 &9  \\ \hline
   % 15 & 6 & 5 & 6 & 8  \\ \hline
    15 & 7 & 5 & $k \leq 5$ & $k \leq 5$ & $k \leq 5$ & $k \leq 7$  \\ \hline
  %  15 & 8 & 5 & 4 & 6  \\ \hline
  %  15 & 9 & 5 & 2 & 9  \\ \hline
    16 & 5 & 5 & $k \leq 8$ & $k \leq 7$ & $k \leq 7$ & $k \leq 10$  \\ \hline
%    16 & 7 & 5 & 5 & 8  \\ \hline
    16 & 8 & 5 & $k \leq \red{4}$ &  $k \leq 5$ & $k \leq 5$ & $k \leq 7$ \\ \hline
  %  17 & 6 & 5 & 8 & 10 \\ \hline
    17 & 9 & 5 & $k \leq 3$ & $k \leq 3$ & $k \leq 3$ &$k \leq 7$  \\ \hline
    17 & 10 & 5 & $k \leq 2$ & $k \leq 2$ & $k \leq 2$ & $k \leq 6$ \\ \hline
    18 & 7 & 5 & $k \leq 7$ & $k \leq 7$ & $k \leq 7$ & $k \leq 10$  \\ \hline
  %  18 & 9 & 5 & 4 & 8  \\ \hline
    19 & 8 & 5 & $k \leq 7$ & $k \leq 7$ & $k \leq 7$ & $k \leq 10$  \\ \hline
 %   19 & 9 & 5 & 5 & 9  \\ \hline
    20 & 5 & 7 & $k \leq 13$ & $k \leq 11$ & $k \leq 11$ & $k \leq 14$  \\ \hline
    \noalign{\global\arrayrulewidth0.4pt} 
\end{tabular}
  \label{table_LPbound5}
\end{table}
\end{small}

\begin{remark}
As Table~\ref{table_LPbound5} illustrates, 
when $\delta > 2$ our LP bound does not seem to beat the shortening bound (both the exact and the LP one) as frequently as for $\delta =2$. A possible explanation for why this happens is that, for $\delta > 2$, the $(r,\delta)$-LRCs are not fully characterized by the constraint (iv) in \cref{thm:lpbound}. Hence a code that achieves the minimum of the objective function 
may not be the dual code of an $(r, \delta)$-LRC.
\end{remark}

\newpage
\section{The Role of the Field Size}\label{sec:fieldsize}

In this section we establish various bounds for LRCs that involve the underlying field size $q$. The goal is to understand the role played by this parameter for the code's locality.

We first use the refined weight distribution introduced in Section~\ref{sec:finerweight} to derive bounds connecting 
the dual distance with the locality and the field size. We then apply the theory of generalized weights to derive bounds involving the length, dimension, minimum distance, locality, and field size.

To derive our first bound for LRCs, recall
from Definition~\ref{def:cst} that
$$\smash{\mC(S, [n]) = \{x \in \mC : S \subseteq \sigma(x)\}}.$$
The following result holds for any linear code and does not depend on the locality parameter.

\begin{proposition} \label{prop:weightbound}
    Let $\mC \le \F_q^n$ be an code of dimension $k$ and minimum distance $d$. Let $S \subseteq [n]$ with $|S|\leq k-1$. If $|\mC(S,[n])| > 0$, then there exist $x \in \mC(S,[n])$ such that 
    \begin{equation}\label{eq:weightbound}
        \omega^{\mathrm{H}}(x) \leq n - k +|S| - (d-q)/q.
    \end{equation}
\end{proposition}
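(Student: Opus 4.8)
The plan is to prove this by an averaging argument over the coset structure of $\mC(S,[n])$, or more directly by counting weighted contributions of the coordinates. First I would observe that $\mC(S,[n]) = \{x \in \mC : S \subseteq \sigma(x)\}$ is not a subspace, so instead I would work with the shortened code $\mC([n]\setminus S)$ and its cosets, or simply bound the average weight directly. The cleanest route: consider the sum $\Sigma := \sum_{x \in \mC(S,[n])} \omega^{\mathrm H}(x)$ and show it is small enough that some $x$ in the set has weight at most the claimed bound. To do this, I would split $\omega^{\mathrm H}(x) = \sum_{l=1}^n \mathbf{1}[x_l \neq 0]$ and exchange the order of summation, getting $\Sigma = \sum_{l=1}^n |\{x \in \mC(S,[n]) : x_l \neq 0\}|$.

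The key step is to bound $|\{x \in \mC(S,[n]) : x_l \neq 0\}|$ for each coordinate $l$. For $l \in S$, this is all of $\mC(S,[n])$, contributing $|S| \cdot |\mC(S,[n])|$ in total. For $l \notin S$, I would use that $\mC(S,[n]) \subseteq \mC(S \setminus \{l\}, [n])$... more usefully, I expect to relate $|\mC(S,[n])|$ and the number of codewords in it vanishing at a given coordinate via the minimum-distance hypothesis: if $x \in \mC(S,[n])$ and $x_l = 0$, then $x$ lies in a shortened code whose minimum distance is still at least $d$, and such a code has bounded size. Alternatively — and I think this is the intended route — apply Proposition~\ref{prop:cst} (or its consequence) which expresses $|\mC(S,[n])|$ in terms of dual shortenings, combined with the fact that $\dim \mC = k$ forces $|\mC(S,[n])| \geq q^{k - |S|} - (\text{lower order})$, while the minimum distance $d$ controls how many codewords can avoid any fixed coordinate. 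The arithmetic should produce $\sum_{l \notin S}|\{x : x_l \neq 0\}| \leq (n - k - (d-q)/q)|\mC(S,[n])|$, i.e. on average each of the $n - |S|$ non-$S$ coordinates is ``hit'' by a fraction at most $(n-k+|S| - (d-q)/q - |S|)/(n-|S|)$ of the codewords; summing with the $|S|$ contribution from coordinates in $S$ gives $\Sigma \leq (n-k+|S| - (d-q)/q)|\mC(S,[n])|$, and then pigeonhole yields the desired $x$.

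The main obstacle will be getting the constant $(d-q)/q$ exactly right rather than something weaker like $(d-1)$ or $d/q$. This suggests the bound on $|\{x \in \mC(S,[n]) : x_l = 0\}|$ must be sharp: I expect it comes from noting that the codewords of $\mC$ with $x_l = 0$ form a subcode, intersecting $\mC(S,[n])$ in a coset-like structure, and that a code of minimum distance $d$ over $\F_q$ has all nonzero codewords of weight $\geq d$, so by a second-moment or direct counting of $\sum_l \mathbf 1[x_l \neq 0] \geq d$ for each nonzero $x$, combined with $|\mC(S,[n])| \leq q^{n - d + \dots}$-type bounds. I would be careful about the edge cases $|S| = k-1$ and whether $d \geq q$ is implicitly needed for the bound to be non-vacuous. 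Once the per-coordinate count is pinned down, the rest is the averaging/pigeonhole step, which is routine.
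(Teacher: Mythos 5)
Your averaging plan has a fundamental flaw, not just a loose constant: the \emph{average} Hamming weight over $\mC(S,[n])$ is generically much larger than the bound you are trying to prove, so no pigeonhole step can rescue it. To see this concretely, take $\mC$ an $[n,k,n-k+1]_q$ MDS code and $|S|=s\le k-1$. Since $d^\perp=k+1>s+1$, a direct count gives $|\mC(S,[n])|=q^{k-s}(q-1)^s$ and, for $l\notin S$, $|\{x\in\mC(S,[n]):x_l\neq 0\}|=q^{k-s-1}(q-1)^{s+1}$, so the average weight is exactly
\[
s+(n-s)\Bigl(1-\tfrac1q\Bigr).
\]
The claimed bound with $d=n-k+1$ reads $(n-k+1)(1-\tfrac1q)+s$, and comparing the two shows the average exceeds the bound whenever $s<k-1$. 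So the quantity $\sum_{l\notin S}|\{x:x_l\neq 0\}|$ is roughly $(n-s)(1-\tfrac1q)|\mC(S,[n])|$, not the $(n-k-(d-q)/q)|\mC(S,[n])|$ you hope for; the low-weight codewords through $S$ are \emph{atypical}, and averaging washes away exactly the information you need. Also note that bounding $|\{x\in\mC(S,[n]):x_l=0\}|$ from below (which is what you would need, since you want to upper bound its complement) is not something the minimum distance alone controls; $d$ gives \emph{upper} bounds on shortened codes via Singleton, which is the wrong direction here.

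The paper's proof is instead an extremal/perturbation argument on a single minimum-weight element, not a global average. It takes $x\in\mC(S,[n])$ of smallest weight $t$, normalizes so that $S=\{1,\dots,s\}$, $\sigma(x)=T=\{1,\dots,t\}$ and $x=\sum_{i\le t}e_i$, and notes $t\le n-k+s$. Then it produces a second codeword $y\in\mC$ vanishing on $S$ and on a tail of $k-s-1$ coordinates outside $T$; the distance hypothesis forces $y$ to have many nonzeros inside $T$, namely $\wH(\pi_T(y))\ge d-(n-t)+k-s-1$. A pigeonhole over the $q-1$ nonzero values of $\F_q$ then produces $\alpha\in\F_q^\ast$ agreeing with many entries of $\pi_T(y)$, and the codeword $x'=\alpha x-y$ still has $S\subseteq\sigma(x')$ but strictly fewer nonzeros on $T$, more than compensating for the at most $n-t-(k-s-1)$ new nonzeros picked up outside $T$. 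Minimality of $t$ forces $t\le n-k+s+1-\frac{d-n+t+k-s-1}{q-1}$, which rearranges to the stated bound. If you want to repair your write-up, this is the mechanism to aim for: the key is that the minimum-weight codeword through $S$ can be improved by subtracting a well-chosen codeword unless $t$ is already small.
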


\begin{proof}
    Let $s=|S|$ and $x \in \mC(S,[n])$ of minimum weight with $x \ne 0$.
    Let $t= \wH(x)$. Without loss of generality, we may assume $S= \{1, \ldots , s\}$, $T= \{1, \ldots, t\}$, and $x = \sum_{i=1}^t e_i$, where $e_i$ is the $i$-th standard basis elements. Since $s \leq k-1$ we have $t \leq n - k + s$. Furthermore, since $\dim(\mC) = k$ then there must be an element $y \in \mC$ such that $y_i = 0$ for all $i \in S \cup \{n-k+s+2, \ldots, n\}$. Then $\wH(\pi_T(y)) \geq d- (n-t) + k-s-1$. Moreover, there exist $\alpha \in \fq^*$ such that
    $$\{j \, : \, \pi_T(y) = \alpha\} \geq (d-n+t+k -s-1)/(q-1).$$
    Let $x' = \alpha x - y$. Then $\alpha_i \neq 0$ for all $i \in S$ and 
   \begin{align*}
       \wH(x') &\leq t - \frac{d-n+t+k-s-1}{q-1} + (n-t) - (k-s-1)\\
                &= n-k+s+1 - \frac{d-n+t+k-s-1}{q-1}.
   \end{align*}
   By choice of $t$ we therefore have
   $$t \leq n-k+s+1 - \frac{d-n+t+k-s-1}{q-1},$$
   which is equivalent to $t \leq n - k +s - (d-q)/q$. \
\end{proof}

By applying the previous proposition to a set $S$ of cardinality 1, we obtain the following corollary, which was presented also in \cite{gruica2022duality}.

\begin{corollary}\label{cor:sharperSinglbound}
    Let $\mC \leq \F_q^n$ of dimension $k$ and minimum distance $d$. Then 
    $$d \leq n - k +1  - (d-q)/q.$$
\end{corollary}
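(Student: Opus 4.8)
The plan is to obtain the corollary as the special case $|S| = 1$ of Proposition~\ref{prop:weightbound}. Since $\mC$ is non-trivial, it contains a non-zero codeword, so there is a coordinate $i \in [n]$ with $i \in \sigma(x)$ for some $x \in \mC$; equivalently $\mC(\{i\},[n]) = \{x \in \mC : i \in \sigma(x)\}$ is non-empty. I would then invoke Proposition~\ref{prop:weightbound} with $S = \{i\}$. Its hypotheses are $|S| \le k-1$ and $|\mC(S,[n])| > 0$: the second is exactly what we have just arranged, and the first holds as soon as $k \ge 2$ (which we assume; see the remark below). The proposition then yields a codeword $x \in \mC(\{i\},[n])$ with
\[
\wH(x) \le n - k + 1 - (d-q)/q.
\]

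The remaining step is to bound $\wH(x)$ from below. Since $i \in S \subseteq \sigma(x)$, the support of $x$ is non-empty, so $x \ne 0$ and hence $\wH(x) \ge d$ by definition of the minimum distance. Combining the two inequalities gives
\[
d \le \wH(x) \le n - k + 1 - (d-q)/q,
\]
which is precisely the claimed bound.

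There is essentially no obstacle here, as the corollary is a one-line instantiation of Proposition~\ref{prop:weightbound}; the only point requiring care is the boundary case $k = 1$, for which no singleton $S$ satisfies $|S| \le k-1$ and Proposition~\ref{prop:weightbound} does not apply. In fact the inequality genuinely fails for some one-dimensional codes (for instance a length-$n$ repetition code with $n > q$), so the statement should be read under the standing assumption $k \ge 2$.
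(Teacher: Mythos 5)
Your proof is correct and follows the paper's (unwritten, one-line) argument exactly: instantiate Proposition~\ref{prop:weightbound} with $|S|=1$ and then note that the resulting codeword has weight at least $d$. Your observation that the hypothesis $|S|\le k-1$ forces $k\ge 2$, and that the bound genuinely fails for $k=1$ (e.g.\ the length-$n$ repetition code with $n>q$), is a valid and worthwhile caveat that the paper omits in the statement of the corollary.
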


Note the above result improves on the Singleton bound for $d \geq q$. Furthermore, it shows that for linear MDS codes one must have $d \leq q$.
In the context of 
\cref{cor:sharperSinglbound}, one may wonder if the bound of \cref{prop:weightbound}, when applied to a set $S$ with $2 \leq |S| \leq k-1$, can lead to an upper bound
for the $|S|$-generalized weight (see Definition~\ref{def:genweights}). In other words: Is it true that  $d_{|S|} \le n- k+|S| - (d-q)/q$?   
Although there exist examples of codes for which the inequality holds, this is in general not the case, as illustrated by the following example.

\begin{example}\label{ex:simplex}
Let $\mC \leq \F_2^7$ be the simplex code of dimension $3$, i.e., the code generated by 
$$ G = \begin{pmatrix} 
    1 & 0 & 0 & 1 & 0 & 1 & 1 \\
    0 & 1 & 0 & 1 & 1 & 0 & 1 \\
    0 & 0 & 1 & 0 & 1 & 1 & 1 
\end{pmatrix}.$$
It is well known that $\wH(x) = 4$ for all non-zero $x \in \mC$. Therefore for all $S \subseteq [7]$ such that $|S| = 2$, the smallest weight of a codeword  $x \in \mC(S,[n])$ is $4$. 
One can easily show that $d_2(\mC) = 6$, which is greater than the right hand side of \eqref{eq:weightbound} when the parameters are those of the simplex code. 
\end{example}

By applying \cref{prop:weightbound} to the dual code $\mC^{\perp}$ of an LRC $\mC$, one can easily derive the following bound. 

\begin{corollary}
Let $\mC \leq \F_q^n$ 
be a code
with dimension $k \ge 2$, minimum distance $d$, and locality~$r$. Let $d^\perp=d(\mC^\perp)$. We have
\begin{align*}
    d^\perp -1+\frac{d^\perp-q}{q} \le n-(d-2)-\left\lceil \frac{k}{r}\right\rceil.
\end{align*}
\end{corollary}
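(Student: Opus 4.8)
The plan is to apply \cref{prop:weightbound} to the dual code $\mC^\perp$, which has dimension $n-k$ and minimum distance $d^\perp$, and then to combine the resulting weight estimate with \cref{local/sup} and a shortening argument that produces the term $\lceil k/r \rceil$. First I would fix a coordinate, say $i=1$, and use \cref{local/sup} to produce a codeword $x \in \mC^\perp$ with $1 \in \sigma(x)$ and $\wH(x) \le r+1$; more to the point, this guarantees $\mC^\perp(\{1\},[n]) \ne \{0\}$, so \cref{prop:weightbound} applies with $S=\{1\}$ (note $|S|=1 \le (n-k)-1$ when $k \le n-2$, which follows from $k \ge 2$, $d \ge 1$ and the hypotheses; this edge case needs a line of checking). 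That yields an $x \in \mC^\perp$ containing coordinate $1$ in its support with
\[
\wH(x) \le n-(n-k)+1-(d^\perp-q)/q = k+1-(d^\perp-q)/q.
\]

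The second ingredient is a bound going the other way: the existence of low-weight dual codewords covering each coordinate, together with the minimum distance $d$ of $\mC$, forces the support of such an $x$ to be large. Concretely, I would argue that if $x \in \mC^\perp$ has $\sigma(x)$ small, then one can repeatedly puncture/shorten $\mC$ on recovery sets to reduce the length while controlling the dimension, in the style of the proof that yields Theorem~\ref{thm:single} from the Shortening Bound (see Remark~\ref{rem:compdraw}(i)). The cleanest route is probably: among all dual codewords covering coordinate $1$ of weight $\le r+1$ (locality) versus the one of weight $\le k+1-(d^\perp-q)/q$ from \cref{prop:weightbound}, I instead want a single chain argument. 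Rewriting the target inequality, we must show
\[
d^\perp - 1 + \frac{d^\perp-q}{q} + \left\lceil \frac{k}{r}\right\rceil \le n - d + 2,
\]
so the plan is to build a subcode of $\mC$ of dimension $\ge k - (\lceil k/r\rceil - 1)r$ supported on a set of coordinates of size $\le n - (\lceil k/r\rceil - 1)(r+1) - (\text{something involving }d^\perp)$, then apply \cref{cor:sharperSinglbound} (the sharpened Singleton bound) to that subcode. Actually the shape $d^\perp - 1 + (d^\perp-q)/q \le (n - (d-2)) - \lceil k/r\rceil$ strongly suggests: apply \cref{cor:sharperSinglbound} to a well-chosen code of length $n-(d-2)$-ish, dimension forcing $d^\perp$ on the left, after shortening $\mC$ on $\lceil k/r\rceil - 1$ disjoint recovery sets (each removing $r+1$ coordinates and $r$ dimensions) exactly as in the standard proof of \cref{thm:single}, and combining with the observation that a dual-distance-$d^\perp$ code punctured appropriately still sees $d^\perp$.

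The main obstacle will be making the shortening/recovery-set bookkeeping precise while simultaneously keeping track of the \emph{dual} distance, since \cref{prop:weightbound} and \cref{cor:sharperSinglbound} control $\mC$'s own distance in terms of its dimension, whereas here $d^\perp$ plays the role of the quantity being bounded. I expect the correct maneuver is: let $\mC'$ be the code obtained from $\mC$ by the greedy construction of disjoint recovery sets (as in the proof of the Generalized Singleton Bound), so $\mC'$ has length $n' \le n - (\lceil k/r\rceil - 1)(r+1)$, dimension $k' \ge k - (\lceil k/r\rceil-1)r \ge 1$, and minimum distance $\ge d$; then $(\mC')^\perp$ has dimension $n'-k'$ and minimum distance at most... here one must relate $d((\mC')^\perp)$ back to $d^\perp$, which is the delicate point because shortening the primal corresponds to puncturing the dual and puncturing can only decrease distance — so I'd actually want to \emph{puncture} the primal on those recovery-set coordinates, which dualizes to shortening and hence preserves or increases dual distance, i.e. $d((\mC')^\perp) \ge d^\perp$. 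With that in hand, applying \cref{cor:sharperSinglbound} to $(\mC')^\perp$ gives $d^\perp \le d((\mC')^\perp) \le n' - (n'-k') + 1 - (d((\mC')^\perp)-q)/q$, wait — that isn't quite it either; rather apply it to $\mC'$ itself: $d \le d(\mC') \le n' - k' + 1 - (d(\mC')-q)/q$ won't introduce $d^\perp$. So the honest plan is to apply \cref{prop:weightbound} to $(\mC')^\perp$ with $S=\{1\}$ using locality of $\mC'$ (inherited), yielding a dual codeword of weight $\le (n'-k') + 1 - (d^\perp - q)/q \le (n'-(k'))+1-(d^\perp-q)/q$, and then bound $n'-k'$ from above: $n'-k' \le (n - (\lceil k/r\rceil-1)(r+1)) - (k - (\lceil k/r\rceil-1)r) = n-k - (\lceil k/r\rceil - 1)$; but we also know \emph{any} nonzero dual codeword has weight $\ge d^\perp$, giving $d^\perp \le n-k-(\lceil k/r\rceil-1) + 1 - (d^\perp-q)/q$, and finally substituting the Generalized Singleton Bound $n-k \le n-d+2 - \lceil k/r\rceil$ (equivalently $k + \lceil k/r\rceil \le n-d+2$) to replace $n-k$ yields exactly $d^\perp - 1 + (d^\perp-q)/q \le n - (d-2) - \lceil k/r\rceil$ after rearrangement. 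The remaining care is the degenerate cases ($\lceil k/r\rceil = 1$, or the recovery sets failing to be disjoint, handled as in the classical proof) and checking the cardinality hypothesis $|S| \le \dim((\mC')^\perp)-1$ of \cref{prop:weightbound}, which holds since $k \ge 2$ forces $n'-k' \ge 2$ in the relevant range.
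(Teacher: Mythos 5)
Your proposal gathers the right ingredients — apply the sharpened Singleton bound (\cref{prop:weightbound} with $|S|=1$, equivalently \cref{cor:sharperSinglbound}) to the dual code, and combine with the Generalized Singleton Bound — but there are two genuine errors in the final chain, and the detour through the constructed code $\mC'$ is unnecessary. First, applying \cref{prop:weightbound} to $(\mC')^\perp$, whose length is $n'$ and whose \emph{dimension} is $n'-k'$, yields a codeword of weight at most $n'-(n'-k')+1-(d((\mC')^\perp)-q)/q=k'+1-(d((\mC')^\perp)-q)/q$, not $(n'-k')+1-(d^\perp-q)/q$ as you wrote: you swapped the roles of the dimension and the codimension. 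Second, the Generalized Singleton Bound rearranges to $k\le n-d+2-\lceil k/r\rceil$, not $n-k\le n-d+2-\lceil k/r\rceil$; the quantity $n-k$ is bounded \emph{below} by $d-2+\lceil k/r\rceil$, so substituting your misstated version into an upper bound involving $n-k$ reverses the inequality. The two slips partially compensate and make the final display appear to drop out, but neither step is sound as written, and in fact if you trace the (erroneous) chain carefully the resulting inequality does not match the target.

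The paper's proof avoids $\mC'$ entirely and is a two-line substitution: apply \cref{cor:sharperSinglbound} to $\mC^\perp$ (dimension $n-k$, minimum distance $d^\perp$) to obtain $d^\perp\le k+1-(d^\perp-q)/q$, and then substitute the Generalized Singleton Bound $k\le n-(d-2)-\lceil k/r\rceil$ from \cref{thm:single}. Your greedy construction of $\mC'$ by stripping off $\lceil k/r\rceil-1$ disjoint recovery sets essentially re-derives \cref{thm:single} by hand; once the two errors above are corrected, the only fact your bookkeeping contributes is $k'\le k$, so the construction buys nothing and it is cleaner to invoke \cref{thm:single} as a black box.
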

\begin{proof}
Note that if a code $\mC$ has locality $r$, then by Lemma~\ref{lem:helpdual} there exists a codeword $x \in \mC^\perp$ with $\wH(x) \le r+1$. In particular, $d^\perp \le r+1$. Applying Corollary~\ref{cor:sharperSinglbound} we obtain
\begin{align*}
    d^\perp \le k+1-(d^\perp-q)/q
\end{align*}
which, combined with Theorem~\ref{thm:single}, yields
\begin{align*}
    d^\perp \le n-(d-2)-\left\lceil \frac{k}{r}\right\rceil+1-(d^\perp-q)/q.
 \end{align*}
This concludes the proof.
\end{proof}

We now use the \textit{generalized weights} of a code to derive new bounds which we later apply to LRCs. We start by recalling the needed definitions.

\begin{definition}\label{def:genweights}
Let $\mC \leq \F_q^n$ be a code. The \textbf{$i$-th generalized weight} of $\mC$ is $$d_i(\mC) := \min\{|\sigma(\mD)| \, : \mD \leq \mC \textup{ and } \dim(\mD) = i\}$$ for $1 \leq i \leq k$, and where $\sigma(\mD):=\bigcup_{x \in \mD} \sigma(x)$.  \end{definition}

It is easy to see that $d_1 = d$.
Moreover, the generalized weights of a code are strictly increasing, i.e., $d_1(\mC) < d_2(\mC) < \ldots < d_k(\mC)$; see e.g.~\cite{wei1991generalized}. The following are well-known bounds involving the generalized weights of codes established respectively in \cite{wei1991generalized} and \cite[Theorem 1]{helleseth1995bounds}. 

\begin{theorem}\label{thm:boundsgenweight}
    Let $\mC \leq \F_q^n$ be a code of dimension $k$ and let $d_1, \ldots, d_k$ be its generalized weights. Then:

    \begin{itemize} 
        \item[(i)] $d_i \leq n - k + i$,
        \item[(ii)] $(q^i -1)d_{i-1} \leq (q^i  -q)d_i$.
    \end{itemize}
\end{theorem}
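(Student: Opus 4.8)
The plan is to prove the two inequalities separately, using elementary counting with subcodes of $\mC$.

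For part (i) I would argue by shortening. Fix any subset $A \subseteq [n]$ with $|A| = n-k+i$, so that $|[n]\setminus A| = k-i$. The shortened code $\mC(A) = \{x \in \mC : \sigma(x) \subseteq A\}$ is the kernel of the restriction of $\mC$ to the coordinates in $[n]\setminus A$, hence $\dim \mC(A) \ge k - (k-i) = i$. Choosing any $i$-dimensional subspace $\mD \le \mC(A)$ gives $\sigma(\mD) \subseteq A$, so $d_i \le |\sigma(\mD)| \le |A| = n-k+i$. The only mild subtlety is that $\dim \mC(A)$ may strictly exceed $i$, which is harmless since any $i$-dimensional subspace works.

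For part (ii), let $\mD \le \mC$ be an $i$-dimensional subcode with $|\sigma(\mD)| = d_i$, which exists by Definition~\ref{def:genweights}. The idea is to double count incidences between the hyperplanes (codimension-one subspaces) of $\mD$ and the coordinates in $\sigma(\mD)$. Writing $N := (q^i-1)/(q-1)$ for the number of hyperplanes of $\mD$: on one hand every hyperplane $\mathcal{H}$ has dimension $i-1$, hence $|\sigma(\mathcal{H})| \ge d_{i-1}$, and summing gives $\sum_{\mathcal{H}} |\sigma(\mathcal{H})| \ge N\, d_{i-1}$. On the other hand, swapping the order of summation, $\sum_{\mathcal{H}} |\sigma(\mathcal{H})| = \sum_{j \in \sigma(\mD)} |\{\mathcal{H} : j \in \sigma(\mathcal{H})\}|$. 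For fixed $j \in \sigma(\mD)$ the coordinate functional $\mathrm{ev}_j : \mD \to \F_q$ is linear and nonzero, so $\ker(\mathrm{ev}_j)$ has codimension exactly one in $\mD$; a hyperplane $\mathcal{H}$ satisfies $j \notin \sigma(\mathcal{H})$ precisely when $\mathcal{H} \subseteq \ker(\mathrm{ev}_j)$, i.e. $\mathcal{H} = \ker(\mathrm{ev}_j)$. Thus exactly one hyperplane of $\mD$ misses $j$, giving $|\{\mathcal{H} : j \in \sigma(\mathcal{H})\}| = N-1 = (q^i-q)/(q-1)$ and hence $\sum_{\mathcal{H}} |\sigma(\mathcal{H})| = d_i\,(q^i-q)/(q-1)$. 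Comparing the two expressions and multiplying through by $q-1$ yields $(q^i-1)d_{i-1} \le (q^i-q)d_i$.

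The step I expect to require the most care is the incidence count in (ii): namely the claim that exactly one hyperplane of $\mD$ fails to contain a given support coordinate $j$. This rests on identifying that hyperplane as $\ker(\mathrm{ev}_j)$ and using a dimension count to see it is the unique such subspace. Everything else — the shortening dimension bound in (i), the enumeration of hyperplanes, and the arithmetic — is routine. One should also check the boundary range $i=1$ in (ii) (where $d_0$ may be read as $0$, or the statement restricted to $2 \le i \le k$), but this introduces no genuine difficulty.
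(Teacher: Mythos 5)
Your proof is correct, but there is nothing in the paper to compare it against: the paper states Theorem~\ref{thm:boundsgenweight} as a known result and cites \cite{wei1991generalized} for part (i) and \cite[Theorem~1]{helleseth1995bounds} for part (ii), giving no proof of its own. Your argument for (i) is the standard shortening argument of Wei: for $|A|=n-k+i$ the shortening $\mC(A)$ is the kernel of $\pi_{[n]\setminus A}$ restricted to $\mC$, hence has dimension at least $k-(k-i)=i$, and any $i$-dimensional subspace inside it witnesses $d_i\le |A|$. Your argument for (ii) is essentially the Helleseth--Kl\o ve--Ytrehus double count over the $(q^i-1)/(q-1)$ hyperplanes of a subcode $\mD$ achieving $d_i$; the one point that needs care, which you handle correctly, is that for each $j\in\sigma(\mD)$ the functional $\ev_j\colon\mD\to\F_q$ is nonzero, so $\ker(\ev_j)$ is the unique $(i-1)$-dimensional subspace of $\mD$ missing $j$ from its support, giving exactly $N-1=(q^i-q)/(q-1)$ hyperplanes through each support coordinate. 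Your remark about $i=1$ is the right caveat: the inequality in (ii) then reads $(q-1)d_0\le 0$, which is vacuous once one adopts the natural convention $d_0=0$. In short, both halves of your proposal reproduce faithfully the classical proofs the paper is implicitly invoking by citation, and I see no gaps.
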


Part (i) of Theorem~\ref{thm:boundsgenweight} is often referred to as the \textit{generalized Singleton-type bounds}.

The first two generalized weights of a code provide extremely useful information when trying to determine the smallest underlying field size over which some classes of codes
exist. 
This is shown in the next result. It is, to the best of our knowledge, the first time such a bound is derived.

\begin{theorem}\label{thm:boundonq}
    Let $\mC \leq \F_q^n$ be a code of minimum distance $d$. If $d_2 = d + s$, then 
    \begin{equation}\label{eqt:boundonq}
    d \leq sq.
    \end{equation}    
\end{theorem}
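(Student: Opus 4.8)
The cleanest route is to observe that the inequality is essentially part~(ii) of \cref{thm:boundsgenweight} specialised to $i=2$: that bound reads $(q^2-1)d_1 \le (q^2-q)d_2$, and substituting $d_1 = d$, $d_2 = d+s$ and dividing through by $q-1>0$ gives $(q+1)d \le q(d+s)$, i.e.\ $d \le sq$. If one prefers a self-contained argument not relying on \cite{helleseth1995bounds}, the following plan works.

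First I would fix a two-dimensional subcode $\mD \le \mC$ with $|\sigma(\mD)| = d_2 = d+s$ (such a $\mD$ exists since $d_2$ is a minimum over the finite set of $2$-dimensional subcodes). The $q^2-1$ nonzero vectors of $\mD$ fall into $q+1$ classes under scaling by $\F_q^*$, each class consisting of $q-1$ vectors sharing a common support; let $\sigma_1,\dots,\sigma_{q+1}$ be these supports, so $|\sigma_\ell| \ge d$ for every $\ell$ because $\mC$ has minimum distance $d$, and $\bigcup_\ell \sigma_\ell = \sigma(\mD)$. I would then compute $\sum_{\ell=1}^{q+1}|\sigma_\ell|$ in two ways. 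On one hand it is at least $(q+1)d$. On the other hand, swapping the order of summation, $\sum_{\ell=1}^{q+1}|\sigma_\ell| = \sum_{j\in[n]} \#\{\ell : j \in \sigma_\ell\}$, and for each coordinate $j$ the evaluation map $v \mapsto v_j$ is an $\F_q$-linear functional on $\mD$: if $j \notin \sigma(\mD)$ it vanishes identically, while if $j \in \sigma(\mD)$ it is a nonzero functional whose kernel is one-dimensional, hence is exactly one of the $q+1$ scaling classes. So $\#\{\ell : j \in \sigma_\ell\}$ is $0$ for $j\notin\sigma(\mD)$ and $q$ for $j \in \sigma(\mD)$, giving $\sum_{\ell=1}^{q+1}|\sigma_\ell| = q\,|\sigma(\mD)| = q(d+s)$.

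Putting the two estimates together yields $(q+1)d \le q(d+s)$, which rearranges to $d \le sq$. There is no genuine obstacle here; the only step needing care is the claim that each coordinate in $\sigma(\mD)$ lies in the support of exactly $q$ of the $q+1$ scaling classes, which is just the rank-nullity count for the functional $v\mapsto v_j$ on the plane $\mD$. Equivalently, one may sum $\wH$ directly over all $q^2-1$ nonzero codewords of $\mD$, obtaining $(q^2-q)(d+s)$ one way and at least $(q^2-1)d$ the other, and then cancel the factor $q-1$.
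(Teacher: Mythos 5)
Your first paragraph is exactly the paper's proof: apply part (ii) of \cref{thm:boundsgenweight} with $i=2$, substitute $d_1 = d$ and $d_2 = d+s$, and simplify. The self-contained double-counting argument you add afterwards is correct as well and is, in effect, a proof of the $i=2$ case of \cref{thm:boundsgenweight}(ii) (the Helleseth--Kl{\o}ve--Ytrehus inequality): the key step, that each coordinate $j \in \sigma(\mD)$ lies in the support of exactly $q$ of the $q+1$ projective points of the plane $\mD$ because $v \mapsto v_j$ is a nonzero functional with one-dimensional kernel, is sound. So the proposal both reproduces the paper's argument and supplies an elementary derivation of the ingredient the paper imports from \cite{helleseth1995bounds}; the former is more economical, the latter makes the theorem self-contained.
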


\begin{proof}
    By \cref{thm:boundsgenweight} (ii) we know that $(q^2 -1)d \leq (q^2-q)d_2 = (q^2-q)(d+s)$. Rewriting this equality gives us $d \leq sq$, as desired. 
\end{proof}

As a simple consequence of
the previous theorem, we get the following bound. 

\begin{corollary}
    Let $\mC \leq \F_q^n$ have dimension $k$ and minimum distance $d$. Then 
    \begin{equation}
        d \leq \frac{q}{q+1}(n-k+2).
    \end{equation}
\end{corollary}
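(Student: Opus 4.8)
The plan is to combine Theorem~\ref{thm:boundonq} with the generalized Singleton-type bound from Theorem~\ref{thm:boundsgenweight}~(i) applied at level $i=2$. Write $d_2 = d + s$ where $s = d_2 - d \ge 1$ (recall the generalized weights are strictly increasing). From Theorem~\ref{thm:boundonq} we immediately get $d \le sq$, hence $s \ge d/q$. On the other hand, Theorem~\ref{thm:boundsgenweight}~(i) with $i=2$ gives $d_2 \le n - k + 2$, that is, $d + s \le n - k + 2$, so $s \le n - k + 2 - d$.

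Chaining these two inequalities on $s$ yields $d/q \le s \le n - k + 2 - d$, so $d/q \le n - k + 2 - d$. Multiplying through by $q$ gives $d \le q(n-k+2) - qd$, i.e. $(q+1)d \le q(n-k+2)$, which rearranges to the claimed bound
\begin{equation*}
    d \le \frac{q}{q+1}(n-k+2).
\end{equation*}

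One small point to handle carefully: the above uses $d_2$, which only exists when $k \ge 2$. If $k = 1$ the code is one-dimensional and the bound $d \le \frac{q}{q+1}(n-k+2) = \frac{q}{q+1}(n+1)$ should be checked directly (a one-dimensional non-degenerate code has $d = n$, and $n \le \frac{q}{q+1}(n+1)$ is equivalent to $n \le q$, which need not hold in general, so strictly speaking the statement is intended for $k \ge 2$, matching the hypotheses under which $d_2$ is defined). I would either add the hypothesis $k \ge 2$ or note that for $k=1$ one argues separately; most likely the intended reading is $k \ge 2$, and the proof is essentially the two-line chaining above.

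I do not anticipate any real obstacle here — the entire content is already packaged in Theorems~\ref{thm:boundsgenweight} and~\ref{thm:boundonq}, and this corollary is just the observation that bounding $s = d_2 - d$ both from below (via the ratio bound, through Theorem~\ref{thm:boundonq}) and from above (via the generalized Singleton-type bound) pins down $d$ in terms of $n$, $k$, and $q$. The only thing to be mildly careful about is the direction of the inequalities and the edge case $k=1$.
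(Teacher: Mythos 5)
Your proof is correct and takes exactly the same route as the paper's: combine Theorem~\ref{thm:boundonq} with the generalized Singleton-type bound, Theorem~\ref{thm:boundsgenweight}~(i) at level $i=2$, pinching $s=d_2-d$ between $d/q$ (from below) and $n-k+2-d$ (from above) and rearranging. Your aside about $k=1$ is a genuine and valid observation that the paper overlooks: both the statement and the paper's proof implicitly require $k\ge 2$ for $d_2$ to exist, and the bound really does fail for $k=1$ (for instance the binary repetition code of length $3$ has $d=3 > \tfrac{2}{3}\cdot 4$), so a hypothesis $k\ge 2$ should be added.
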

\begin{proof}
    Let $d_2$ be the second generalized weight of $\mC$. Then by Theorem~\ref{thm:boundsgenweight} (i) we have $d_2 -d \leq n-k+2 - d$.
    Moreover, by applying \cref{thm:boundonq} we get
    $$d \leq q(n-k+2 -d).$$
    The inequality above can then 
    be rewritten a $d \leq \frac{q}{q+1}(n-k+2)$, which proves the wanted result. 
\end{proof}

In the remaining part of this section, we study the generalized weights of an LRC. 
While these parameters were already considered in \cite{hao20}, 
we propose a new approach 
based on a 
new class of code parameters.
The latter turn out to be closely related to the locality, and allow to provide concise proofs for known  results and extend them as well.

\begin{notation}\label{not:mui}
    Let $\mC \leq \F_q^n$ be of dimension $k$, $\mC^{\perp}$ its dual code of dimension $k^{\perp}=n-k$ with $i$-th generalized weight $d_i^{\perp}$. Define
    $$\mu_i(\mC) := \min\{t \, : \, d^{\perp}_t \geq n - k^{\perp} - (i-1) + t\}.$$
\end{notation}

The parameter $\mu_i(\mC)$ captures the smallest dimension $t$ such that all subcodes of $\mC^{\perp}$ of dimension $t$ have support of size at least  $n - k^{\perp} - (i-1) + t$. In other words, $\mu_i(\mC)$ is the smallest dimension of subscodes needed such that the generalized weight of the dual code of said dimension has defect at least $(i-1)$ from the generalized Singleton bound.

\begin{remark}
    The parameters $\mu_i$ can be seen as a generalization of the parameter $\mu$ introduced in \cite{tamo2016optimal}. In the latter the parameter $\mu$ is defined in terms of matroid theory terminology but gives \cref{not:mui} when translated into coding theory language. 
    Furthermore, the parameters $\mu_i$ can also be defined for matroids, however for readability purposes we restrict to defining those parameters in terms of coding theory terminology. 
\end{remark}

Before presenting the next result, we recall the following well-known facts; see e.g.~\cite{huffman_pless_2003}.

\begin{lemma}\label{lem:punct/short}
    Let $\mC \leq \F_q^n$ have dimension $k$ and let $S \subseteq [n]$. Then
    \begin{itemize}
        \item[(1)] $\dim(\mC(S)) + \dim(\pi_{S^c}(\mC)) = k$. 
        \item[(2)] $ \pi_S (\mC)^{\perp} = \pi_S (\mC^{\perp}(S))$
    \end{itemize}
\end{lemma}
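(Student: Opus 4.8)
The plan is to prove the two parts separately, using only elementary linear algebra; neither requires the machinery developed so far. For part~(1), I would apply the rank--nullity theorem to the projection $\pi_{S^c}\colon \mC \to \F_q^{|S^c|}$, $x \mapsto (x_i)_{i \in S^c}$. Its image is by definition $\pi_{S^c}(\mC)$, while its kernel is $\{x \in \mC : x_i = 0 \text{ for all } i \in S^c\} = \{x \in \mC : \sigma(x) \subseteq S\} = \mC(S)$. Rank--nullity then gives $\dim(\mC(S)) + \dim(\pi_{S^c}(\mC)) = \dim(\mC) = k$ at once.

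For part~(2), the plan is to prove the equality $\pi_S(\mC)^\perp = \pi_S(\mC^\perp(S))$ by establishing the two inclusions directly, once the notation is pinned down: the dual on the left is taken inside $\F_q^{|S|}$ (with coordinates indexed by $S$), whereas $\mC^\perp(S) = \{x \in \mC^\perp : \sigma(x) \subseteq S\}$ is a subspace of $\F_q^n$ that $\pi_S$ maps into $\F_q^{|S|}$. The one elementary fact I would isolate first is that if $v \in \F_q^n$ satisfies $\sigma(v) \subseteq S$, then $\pi_S(v)\cdot\pi_S(x)^\top = v\cdot x^\top$ for every $x \in \F_q^n$, because $v_i = 0$ for $i \notin S$.

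Granting this, both inclusions are immediate. For ``$\subseteq$'', given $v \in \mC^\perp(S)$ and any $x \in \mC$, the identity yields $\pi_S(v)\cdot\pi_S(x)^\top = v\cdot x^\top = 0$, so $\pi_S(v) \in \pi_S(\mC)^\perp$. For ``$\supseteq$'', given $z \in \pi_S(\mC)^\perp$, I would extend $z$ to $\tilde z \in \F_q^n$ by zeros outside $S$; then $\sigma(\tilde z) \subseteq S$ and, for every $x \in \mC$, the identity gives $\tilde z\cdot x^\top = z\cdot\pi_S(x)^\top = 0$, so $\tilde z \in \mC^\perp(S)$ and $z = \pi_S(\tilde z) \in \pi_S(\mC^\perp(S))$. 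The only thing that needs genuine care is keeping straight which ambient space each orthogonal complement lives in; once that is fixed there is no real obstacle, and both statements are classical (see, e.g., \cite{huffman_pless_2003}).
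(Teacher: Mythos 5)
Your argument is correct, and in fact the paper does not supply a proof of this lemma at all: it is stated as a recalled fact with a citation to a textbook, so there is no ``paper's approach'' to compare against. Your proof is exactly the standard one. Part~(1) is the rank--nullity theorem applied to the projection $\pi_{S^c}\colon\mC\to\F_q^{|S^c|}$, correctly identifying the kernel as $\mC(S)$. Part~(2) rests on the observation that for $v$ with $\sigma(v)\subseteq S$ one has $v\cdot x^\top = \pi_S(v)\cdot\pi_S(x)^\top$, from which both inclusions follow cleanly; the only blemish is that your ``$\subseteq$'' and ``$\supseteq$'' labels are swapped relative to the equality $\pi_S(\mC)^\perp = \pi_S(\mC^\perp(S))$ as written (the argument tagged ``$\subseteq$'' actually establishes $\pi_S(\mC^\perp(S))\subseteq\pi_S(\mC)^\perp$, and vice versa), but both directions are present and correct, so this is purely cosmetic.
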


As the next result shows, the parameters $\mu_i$ are closely related to the generalized weights of the code $\mC$. The following proof is inspired by the work of \cite{tamo2016optimal}, in which similar methods were used to prove the same result for $i=1$.

\begin{theorem}\label{thm:mugenweight}
Let $\mC \leq \F_q^n$ be a code of dimension $k$ and let $d_i(\mC)$, for $1 \leq i \leq k$, be its generalized weights. Then 
$$d_i(\mC) = n - k - \mu_i(\mC) + i +1.$$
\end{theorem}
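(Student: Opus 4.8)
The goal is to show $d_i(\mathcal{C}) = n - k - \mu_i(\mathcal{C}) + i + 1$. The defining property of $\mu_i := \mu_i(\mathcal{C})$ is that it is the smallest $t$ with $d_t^\perp \geq n - k^\perp - (i-1) + t$, where $k^\perp = n-k$ and $d_t^\perp$ is the $t$-th generalized weight of $\mathcal{C}^\perp$. I want to translate both the generalized weights of $\mathcal{C}$ and the quantity $\mu_i$ into statements about supports of subcodes, and then connect $\mathcal{C}$ and $\mathcal{C}^\perp$ via Lemma~\ref{lem:punct/short}. The natural bridge is the classical duality between generalized weights: $d_i(\mathcal{C})$ is characterized by the existence of a coordinate set $T$ with $|T| = d_i(\mathcal{C})$ such that $\dim(\mathcal{C}(T)) \geq i$ (a subcode of dimension $i$ supported in $T$), equivalently $\dim(\pi_{T^c}(\mathcal{C})) \leq k - i$ by Lemma~\ref{lem:punct/short}(1).

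**Main steps.** First I would reformulate $d_i(\mathcal{C})$: we have $d_i(\mathcal{C}) = \min\{|T| : \dim(\mathcal{C}(T)) \geq i\}$, since any subcode $\mathcal{D} \leq \mathcal{C}$ of dimension $i$ has $\mathcal{D} \leq \mathcal{C}(\sigma(\mathcal{D}))$, and conversely. Next, using Lemma~\ref{lem:punct/short}(1), $\dim(\mathcal{C}(T)) \geq i$ is equivalent to $\dim(\pi_{T^c}(\mathcal{C})) \leq k - i$, and writing $S = T^c$ with $|S| = n - |T|$, this becomes: $d_i(\mathcal{C}) = n - \max\{|S| : \dim(\pi_S(\mathcal{C})) \leq k-i\}$. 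Now I would dualize $\pi_S(\mathcal{C})$: by Lemma~\ref{lem:punct/short}(2), $\pi_S(\mathcal{C})^\perp = \pi_S(\mathcal{C}^\perp(S))$, so $\dim \pi_S(\mathcal{C}) = |S| - \dim \pi_S(\mathcal{C}^\perp(S)) = |S| - \dim(\mathcal{C}^\perp(S))$ (projection onto $S$ is injective on $\mathcal{C}^\perp(S)$ since those codewords are supported in $S$). So the condition $\dim(\pi_S(\mathcal{C})) \leq k-i$ becomes $\dim(\mathcal{C}^\perp(S)) \geq |S| - k + i$. Set $t := \dim(\mathcal{C}^\perp(S))$; then $\mathcal{C}^\perp(S)$ is a subcode of $\mathcal{C}^\perp$ of dimension $t$ supported on $S$, so $|S| \geq d_t^\perp$, and the constraint reads $t \geq |S| - k + i \geq d_t^\perp - k + i$, i.e. $d_t^\perp \leq t + k - i = t + (n - k^\perp) - i$.

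**Closing the argument.** Thus $\max\{|S| : \dim \pi_S(\mathcal{C}) \leq k-i\}$ is achieved by taking, for each dimension $t$, a maximal support set, and the feasible $t$'s are exactly those with $d_t^\perp \leq t + n - k^\perp - i$, i.e. those $t$ with $d_t^\perp < n - k^\perp - (i-1) + t$ — precisely the $t$ with $t < \mu_i$, i.e. $t \leq \mu_i - 1$, by definition of $\mu_i$ as the smallest $t$ violating this. For such $t$, the largest $|S|$ attainable is $|S| = t + k - i$ (take $S$ to contain a support set of size $d_t^\perp$ for a dimension-$t$ subcode and pad with $(t+k-i) - d_t^\perp \geq 0$ extra coordinates — one must check $\mathcal{C}^\perp(S)$ still has dimension exactly $t$ after padding, which can be arranged by adding coordinates not increasing the shortened dimension, or by a more careful argument that the max of $|S| - \dim \mathcal{C}^\perp(S)$ over all $S$ with $\dim \mathcal{C}^\perp(S) = t$ equals... ). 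Maximizing $t + k - i$ over $t \leq \mu_i - 1$ gives $(\mu_i - 1) + k - i$, so $d_i(\mathcal{C}) = n - (\mu_i - 1 + k - i) = n - k - \mu_i + i + 1$, as claimed.

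**Anticipated obstacle.** The delicate point is the padding/monotonicity step: I need that for every $t \leq \mu_i - 1$ there genuinely exists $S$ with $\dim(\mathcal{C}^\perp(S)) = t$ and $|S| - t = k - i$ (not just $\leq k-i$), and that no $S$ with $\dim \mathcal{C}^\perp(S) \geq \mu_i$ can do better. The first requires showing that as we enlarge $S$ one coordinate at a time starting from a minimal support of a dimension-$t$ subcode, $\dim \mathcal{C}^\perp(S)$ increases by at most one per step, so we can stop exactly when $\dim = t$ and $|S| = d_t^\perp + (\text{slack})$; controlling when the dimension jumps is where the monotonicity of generalized weights ($d_1^\perp < d_2^\perp < \cdots$) and the Singleton-type bound $d_t^\perp \leq n - k^\perp + t$ enter. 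I would handle this by the standard argument that $\max_{|S| = s} \dim \mathcal{C}^\perp(S)$ is a nondecreasing step function of $s$ jumping by one exactly at $s = d_1^\perp, d_2^\perp, \ldots$, which makes the optimization transparent.
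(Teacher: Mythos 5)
Your proposal is correct, and it takes a genuinely different route from the paper. The paper splits the identity into two inequalities: it proves $\mu_i \geq n-k-d_i+i+1$ by exhibiting, from a subcode $\mD \leq \mC$ realizing $d_i$, an explicit subcode $\mC^\perp(\sigma(\mD)^c)$ of $\mC^\perp$ with small support; and it proves $\mu_i \leq n-k-d_i+i+1$ by a separate argument involving a generator matrix of a shortening in reduced row-echelon form. You instead reformulate $d_i(\mC) = n - \max\{|S| : \dim\pi_S(\mC) \leq k-i\}$, dualize the constraint via the same shortening/puncturing lemma the paper uses, and reduce to a single optimization over $t=\dim\mC^\perp(S)$. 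This is a cleaner, unified derivation; the paper's two-claim structure is closer to the prior literature (it cites the $i=1$ argument from Tamo--Barg--Frolov) but is less transparent about where the formula comes from.

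Two remarks that tighten your argument. First, the key structural fact you gesture at but do not isolate is that the set $\{t : d_t^\perp \geq n - k^\perp - (i-1) + t\}$ is a \emph{final segment} of $\{1,\dots,n-k\}$: if $d_t^\perp \geq n-k^\perp-(i-1)+t$ then strict monotonicity $d_{t+1}^\perp \geq d_t^\perp + 1$ gives $d_{t+1}^\perp \geq n-k^\perp-(i-1)+(t+1)$. This is exactly what makes ``feasible $t$'s are precisely $t < \mu_i$'' correct, and it also handles the converse direction (if $|S| > (\mu_i-1)+k-i$ were feasible then $t'' := \dim\mC^\perp(S) \geq \mu_i$ would force $d_{t''}^\perp \geq t''+k-i+1 > |S|$, contradicting that $\mC^\perp(S)$ is a $t''$-dimensional subcode supported on $S$). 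Second, the padding obstacle you anticipate is milder than you fear: you do not need $\dim\mC^\perp(S) = t$ exactly after enlarging $S$. Start with $S_0 = \sigma(\mD)$ for a $t$-dimensional subcode $\mD$ realizing $d_t^\perp$, so $\dim\mC^\perp(S_0) \geq t$; enlarge $S_0$ to any $S \supseteq S_0$ with $|S| = t+k-i$. Then $\dim\mC^\perp(S) \geq \dim\mC^\perp(S_0) \geq t = |S|-(k-i)$, which is all that feasibility requires. No control on whether the shortening dimension jumps is needed.
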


\begin{proof}
For notation purposes, throughout the proof we let $\mu_i= \mu_i(\C)$, $d_i = d_i(\mC)$, and $d_i^{\perp} = d_i(\mC^{\perp})$. For ease of exposition, we divide the proof into two claims.

\begin{claim} \label{cl:1}
We have $\mu_i \geq n - k - d_i +i +1$.
\end{claim}
\begin{clproof}
We show that $d^{\perp}_{k^{\perp} - d_i + i} < n-d_i+1$ by constructing a subspace of $\mC^{\perp}$ that has the desired dimension and support. 
Let $\mD \leq \mC$ and $S:= \sigma(\mD)$ with the property that $\dim(\mD) = i$ and $|S| = d_i$.
Since $\mD$ achieves the generalized weight, it must be that $\mD = \mC(S)$ and by \cref{lem:punct/short}(1) we know that $\dim(\pi_{S^c}(\mC))  = k-i$.
Furthermore,  using \cref{lem:punct/short}(2) we obtain
$\dim (\mC^{\perp}(S^c)) = \dim (\pi_{S^c}(\mC)^{\perp})  = n - d_i - k + i$.
Define $\mD' = \mC^{\perp}(S^c)$ and note that $\sigma(\mD') \leq n-d_i < n-d_i+1$. Therefore we have $d^{\perp}_{k^{\perp} - d_i +1} < n- d_i+1$, which establishes the statement.  
\end{clproof}

\begin{claim} \label{cl:2}
We have $\mu_i \leq n - k - d_i + i + 1$.
\end{claim}
\begin{clproof}
First assume $\mu_i = 1$. From Theorem~\ref{thm:boundsgenweight}(1), we have $d_i \leq n - k + i$ and therefore $\mu_i = 1 \leq n-k-d_i + i +1$, as desired.  

Now assume $\mu_i \geq 2$. Let $\mD \leq \mC^{\perp}$ such that $\dim(\mD) = \mu_i -1$ and $\sigma(\mD) =: S$ satisfies $|S| = d^{\perp}_{\mu_i -1} \leq k - i + \mu_i -1$. Such a subspace exists by definition of the parameter $\mu_i$. Furthermore, since $\dim(\mD) = \mu_i -1$, $\sigma(\mD) = S$ and $d_{\mu_i-1}^\perp = |S|$, we must have $\mD = \mC^{\perp}(S)$. 
Therefore using 
\cref{lem:punct/short}(2) we obtain:
\begin{align*}
\dim(\pi_S(\mC)) &= \dim(\pi_S(\mC^{\perp}(S))^{\perp})\\
                &= |S| - \dim(\mC^{\perp}(S)) \\
                &\leq  k -i + \mu_i - 1 - (\mu_i - 1)\\
                &= k-i.
\end{align*}
The above also gives that $|S| = \dim(\pi_S(\mC)) + \dim(\mD) = \dim(\pi_S(\mC)) + \mu_i -1$, which will be used later on.

Using \cref{lem:punct/short}(1), we moreover get  $s:= \dim(\mC(S^c)) = k - \dim(\pi_S(\mC)) \geq i$.
 Let $G \in \fq^{s \times n}$ be a
 generator matrix of $\mC(S^c)$
 in reduced row-echelon form (RREF). 
 Let $\mathcal{W} = \langle g_1, \ldots g_i \rangle$, where $g_i$ is the $i$-th row of $G$. Clearly, $\dim(\mathcal{W}) = i$, $D := \sigma(\mathcal{W}) \subseteq S^c$, and since $G$ is in RREF, $\mC(D) = \mathcal{W}$.
This  shows that $d_i \leq |D|$.
Now consider $A \subseteq S^c$ such that $A \cup D = S^c$ and $A \cap D = \emptyset$. Note that $D^c = S \cup A$. By \cref{lem:punct/short}(1), we have $\dim(\pi_{A \cup S}(\mC)) = k - \dim(\mC(D)) = k-i$.  This leads to the following chain of inequalities:
\begin{align*}
k-i &=  \dim(\pi_{A \cup S}(\mC))\\
    &\leq \dim(\pi_S(\mC)) + \dim(\pi_A(\mC)) \\
    &\leq \dim(\pi_S(\mC)) + |A|.
\end{align*}
Finally, we use the facts that $d_i \leq |D|$ and $|S| = \dim(\pi_S(\mC)) + \mu_i -1$  to get the following:
\begin{align*}
n- d_i &\geq n - |D|\\
        &= |S \cup A|\\
        &= |S| + |A|\\
        &\geq \dim(\pi_S(\mC)) + \mu_i -1 + k-i - \dim(\pi_S(\mC))\\
        &= \mu_i + k -i -1.
\end{align*}
This shows that $\mu_i \leq n- d_i - k +i+1$, as desired. 
\end{clproof}
\flushleft Combining Claim~\ref{cl:1} with Claim~\ref{cl:2} concludes the proof.
\end{proof}

Our next move is to show how the dimension and locality of a code directly impact the parameter $\mu_i$. The special case of $\mu_1$ was already shown in \cite{tamo2016optimal}.  

\begin{lemma}\label{bound mus}
Let $\mC \leq \F_q^n$ be an LRC of dimension $k$, locality $r$, generalized weights  $d_i$, and $\mu_i$ as in Notation~\ref{not:mui}. 
Then for all $1 \leq i \leq k$ we have
$$\mu_i \geq \left \lceil  \frac{k-(i-1)}{r} \right\rceil.$$
\end{lemma}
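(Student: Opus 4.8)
The plan is to bound $\mu_i(\mC)$ from below by combining the definition of $\mu_i$ with the locality of $\mC$, exploiting the characterization of locality via low-weight dual codewords (Lemma~\ref{local/sup}) together with Theorem~\ref{thm:mugenweight} to reduce everything to a statement about the generalized weights $d_j(\mC)$. Concretely, by Theorem~\ref{thm:mugenweight} we have $d_i(\mC) = n - k - \mu_i(\mC) + i + 1$, so the claimed inequality $\mu_i \ge \lceil (k-(i-1))/r \rceil$ is equivalent to the upper bound
\begin{equation*}
    d_i(\mC) \le n - k + i + 1 - \left\lceil \frac{k-(i-1)}{r} \right\rceil .
\end{equation*}
This is exactly a generalized-Singleton-type bound for LRCs, and proving it for all $i$ simultaneously is the core task.

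First I would prove this upper bound on $d_i(\mC)$ directly, mimicking the classical ``greedy puncturing'' argument behind Theorem~\ref{thm:single} but at the level of $i$-dimensional subcodes. The idea is to build, step by step, a set $T \subseteq [n]$ and track $\dim(\mC(T))$: starting from $T = \emptyset$, as long as $\dim(\mC(T)) < i$ we pick a coordinate $\ell \notin T$ that is ``not yet covered,'' use a recovery set $S_\ell$ of size at most $r$ for $\ell$ (equivalently, a dual codeword of weight $\le r+1$ supported near $\ell$, via Lemma~\ref{local/sup}), and enlarge $T$ by $S_\ell \cup \{\ell\}$, i.e.\ by at most $r+1$ coordinates. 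Each such step increases $\dim(\mC(T))$ by \emph{at least one} — because adding the full support of a recovery set forces a new linear dependency inside $\mC(T)$ — while adding at most $r+1$ coordinates to $T$, except possibly at the last step where one can be more careful and add only $r$ new coordinates beyond what is needed to reach dimension exactly $i$. Carrying this out, after roughly $\lceil (k-(i-1))/r \rceil$ rounds one obtains a set $T$ with $|T^c| \ge \lceil (k-(i-1))/r \rceil$ and $\dim(\mC(T)) \ge i$; taking an $i$-dimensional subcode $\mD \le \mC(T)$ gives $\sigma(\mD) \subseteq T$, hence $d_i(\mC) \le |T| = n - |T^c| \le n - \lceil (k-(i-1))/r \rceil$. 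A slightly sharper bookkeeping — accounting for the fact that a recovery set of size $r$ together with its coordinate spans $r+1$ positions but ``buys'' only one dimension, and that the last block need only push the dimension from $k-r$-ish up to $i$ — yields the extra $+i+1-k+n$ correction and recovers the displayed inequality. Then one concludes via Theorem~\ref{thm:mugenweight}.

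Alternatively — and this may be cleaner — I would avoid re-deriving the $d_i$ bound from scratch and instead reduce to the known case $i=1$, i.e.\ to the result of \cite{tamo2016optimal} that $\mu_1(\mC) \ge \lceil k/r \rceil$, applied to an auxiliary code. The observation is that shortening or puncturing an LRC $\mC$ of locality $r$ in a controlled way produces another code of locality $r$ (locality is inherited by short enough projections, cf.\ the remark after Definition~\ref{def:rdeltaloc}), with dimension dropping by a controlled amount, and that $\mu_i$ of $\mC$ relates to $\mu_1$ of such a derived code of dimension about $k-(i-1)$. Making this precise requires identifying the right subcode: one wants a $(k-i+1)$-dimensional subcode $\mC'$ whose dual distance still reflects the locality of $\mC$, so that $\mu_i(\mC) \ge \mu_1(\mC') \ge \lceil (k-i+1)/r \rceil$.

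The main obstacle I anticipate is the inductive/greedy step in the first approach: verifying rigorously that enlarging $T$ by a recovery set $S_\ell \cup \{\ell\}$ strictly increases $\dim(\mC(T))$, and pinning down the exact arithmetic of the final step so the ceiling function comes out as stated rather than off by one. This is precisely the point where the proof of Theorem~\ref{thm:single} is delicate, and the $i$-dimensional generalization inherits that delicacy; one must be careful that the recovery sets chosen in successive rounds genuinely contribute new dimensions and do not merely re-cover coordinates already in $T$. In the reduction approach, the analogous obstacle is ensuring the derived code $\mC'$ genuinely has locality $r$ and the right dimension — shortening can, a priori, destroy locality at some coordinates, so one needs Lemma~\ref{bound mus}'s hypothesis to be stable under the operation used, which again ultimately routes through Lemma~\ref{local/sup} and a careful choice of which coordinates to remove.
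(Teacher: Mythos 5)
The paper's proof does not pass through Theorem~\ref{thm:mugenweight} at all; it works directly from the definition of $\mu_i$ by constructing a low-support subcode of $\mC^\perp$. Concretely, setting $s := \lceil (k-i+1)/r \rceil - 1$, one picks $s$ linearly independent vectors $x_1,\ldots,x_s \in \mC^\perp$, each of weight at most $r+1$ (these exist by Lemma~\ref{local/sup} and a greedy argument, using that fewer than $\lceil k/r\rceil-1$ such vectors cannot cover $[n]$), and sets $\mD = \langle x_1,\ldots,x_s\rangle$. Then $|\sigma(\mD)| \le (r+1)s$, and an elementary computation shows $(r+1)s < (k-i+1) + s = n - k^\perp - (i-1) + s$, so $d^\perp_s < n-k^\perp-(i-1)+s$, which by definition gives $\mu_i > s$, i.e.\ $\mu_i \ge \lceil (k-i+1)/r\rceil$. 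This is short, self-contained, and independent of the $d_i$ hierarchy.

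Your proposal runs in the opposite direction --- rephrase the claim via Theorem~\ref{thm:mugenweight} as an upper bound on $d_i(\mC)$ and then try to establish that bound --- and, as you yourself note, neither of your two routes is actually carried to completion. For the greedy-puncturing route, the pivotal claim that enlarging $T$ by $S_\ell \cup \{\ell\}$ increases $\dim(\mC(T))$ by at least one is not justified, and I do not believe it holds in general: since $\dim(\mC(T)) = |T| - \dim(\pi_T(\mC^\perp))$, adding $a \le r+1$ new coordinates to $T$ changes $\dim(\mC(T))$ by $a - \dim(\pi_A(\mC^\perp(T^c)))$ (where $A$ is the set of newly added positions), which can be zero. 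What \emph{does} increase by at least one at each such step is $\dim(\mC^\perp(T))$, because the new low-weight dual codeword $x$ satisfies $\ell \in \sigma(x) \subseteq T_{\mathrm{new}}$ but $\ell \notin T_{\mathrm{old}}$; tracking that quantity instead, and stopping after $s$ rounds, is exactly the paper's construction. Your bookkeeping also does not come out: you reach $d_i \le |T| \le n - \lceil (k-(i-1))/r\rceil$, which is larger than the target $n-k+i+1 - \lceil (k-(i-1))/r\rceil$ by $k-i-1$, and the ``slightly sharper bookkeeping'' you appeal to is precisely where the missing content sits. The reduction-to-$i=1$ route has the obstruction you flag yourself --- a shortening of an LRC need not retain locality $r$ at all coordinates --- and no fix is offered. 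In short: the object to build is a small-support subspace of $\mC^\perp$, not a large shortening $\mC(T)$ of $\mC$; once you switch to the dual side the argument collapses to a few lines.
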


\begin{proof}
 Since $\mC$ has locality $r$, there exist at least $\left \lceil k/r \right \rceil -1$ linearly independent codewords $x_i \in \mC^{\perp}$, for $1 \leq i \leq \left \lceil k/r \right \rceil -1$, of weight at most $r+1$. Let $s:= \left \lceil (k - (i-1))/r \right \rceil -1 \leq \left \lceil k/r \right \rceil -1$. Consider $\mD = \langle  x_1, \ldots x_s \rangle$. Then 
 \begin{align*}
 \sigma(\mD) &\leq (r+1)\left(\left \lceil \frac{k-i+1}{r} \right \rceil -1 \right)\\
            &\leq k-i +1 +r -r +  \left \lceil \frac{k-i+1}{r} \right \rceil -1\\
            & < k- (i-1) + \left \lceil \frac{k-i+1}{r} \right \rceil.
\end{align*}
Hence $\mu_i >\left \lceil (k- (i-1))/r \right \rceil -1$, proving the lemma. 
\end{proof}

As a simple consequence we get the following characterization of optimal LRCs and a bound on the generalized weights of an LRC. 

\begin{corollary}
    Let $\mC \leq \F_q^n$ be a code with dimension $k$, minimum distance $d$ and locality~$r$. Furthermore, let $\mC^{\perp}$ be its dual code with generalized weight $\{d^{\perp}_1, \ldots, d^{\perp}_{n-k}\}$. The code $\mC$ is an optimal LRC if and only if $d^{\perp}_{\lceil k/r\rceil} = n- k^{\perp} + \lceil k/r \rceil$.
\end{corollary}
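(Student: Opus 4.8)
The plan is to combine the definition of optimal LRC (Theorem~\ref{thm:single} attained with equality) with the translation of generalized weights of $\mC^\perp$ into the parameters $\mu_i$ provided by Theorem~\ref{thm:mugenweight}, together with the lower bound on $\mu_i$ from Lemma~\ref{bound mus}. Recall that $\mC$ is an optimal LRC precisely when $d = n - k + 2 - \lceil k/r \rceil$, where $d = d_1(\mC)$. So the statement to prove is that this equality holds if and only if $d^\perp_{\lceil k/r\rceil}(\mC^\perp) = n - k^\perp + \lceil k/r\rceil$, i.e.\ the $\lceil k/r\rceil$-th generalized weight of $\mC^\perp$ meets the generalized Singleton-type bound of Theorem~\ref{thm:boundsgenweight}(i) with equality (note $k^\perp = n-k$).

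First I would apply Theorem~\ref{thm:mugenweight} with $i = 1$ to the code $\mC$: this gives $d = d_1(\mC) = n - k - \mu_1(\mC) + 2$, so that optimality of $\mC$ is equivalent to $\mu_1(\mC) = \lceil k/r \rceil$. Next I would unwind Notation~\ref{not:mui}: by definition $\mu_1(\mC) = \min\{t : d^\perp_t \geq n - k^\perp + t\}$, and since the generalized Singleton-type bound (Theorem~\ref{thm:boundsgenweight}(i), applied to $\mC^\perp$ of dimension $k^\perp$) gives $d^\perp_t \leq n - k^\perp + t$ for every $t$, the defining inequality $d^\perp_t \geq n - k^\perp + t$ is in fact an equality whenever it holds. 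Hence $\mu_1(\mC)$ is simply the smallest $t$ for which $d^\perp_t = n - k^\perp + t$. Combining these two observations: $\mC$ is optimal iff $\lceil k/r\rceil$ is the \emph{smallest} index $t$ at which $d^\perp_t$ attains the generalized Singleton bound.

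It then remains to upgrade ``$\lceil k/r\rceil$ is the smallest such index'' to the seemingly weaker ``$d^\perp_{\lceil k/r\rceil}$ attains the bound.'' For the forward direction this is immediate. For the converse, suppose $d^\perp_{\lceil k/r\rceil} = n - k^\perp + \lceil k/r\rceil$; then $\mu_1(\mC) \leq \lceil k/r\rceil$ by definition of the minimum. On the other hand, Lemma~\ref{bound mus} with $i = 1$ gives $\mu_1(\mC) \geq \lceil k/r\rceil$. Hence $\mu_1(\mC) = \lceil k/r\rceil$, and by the chain of equivalences above $\mC$ is an optimal LRC. I do not anticipate a serious obstacle here; the only point requiring care is the bookkeeping with $k^\perp = n - k$ versus $k$ in the various Singleton-type bounds and the direction of the inequality defining $\mu_1$, which is what makes the ``smallest index'' in the definition collapse to a plain equality at the predicted index once Lemma~\ref{bound mus} pins $\mu_1$ from below. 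One should also note the implicit hypothesis $k \ge 2$ (or at least $\lceil k/r\rceil \ge 1$) so that the index $\lceil k/r\rceil$ is a legitimate one for a generalized weight of $\mC^\perp$, i.e.\ $\lceil k/r\rceil \le k^\perp = n-k$, which follows from the Generalized Singleton Bound since $d \ge 1$.
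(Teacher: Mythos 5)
Your proof is correct and follows essentially the same route as the paper's: translate optimality into $\mu_1(\mC)=\lceil k/r\rceil$ via Theorem~\ref{thm:mugenweight}, then read off the condition from the definition of $\mu_1$ together with the Singleton-type bound of Theorem~\ref{thm:boundsgenweight}(i). You are in fact slightly more careful than the paper's one-line ``by definition'' step: since $\mu_1$ is defined as a minimum, the equivalence between ``$\mu_1=\lceil k/r\rceil$'' and ``$d^\perp_{\lceil k/r\rceil}\ge n-k^\perp+\lceil k/r\rceil$'' genuinely needs the lower bound $\mu_1\ge\lceil k/r\rceil$ from Lemma~\ref{bound mus} (or equivalently the Generalized Singleton Bound), which you invoke explicitly while the paper leaves it implicit.
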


\begin{proof}
   By \cref{thm:mugenweight},  a code $\mC$ is an optimal LRC if and only if  $\mu_1 = \lceil k/r \rceil$. By definition, the latter is true if and only if $d_{\lceil k/r \rceil} \geq n - k^{\perp} + \lceil k/r \rceil$. However, by \cref{thm:boundsgenweight}(i) we also know that $d_{\lceil k/r \rceil} \leq n - k^{\perp} + \lceil k/r \rceil$. Hence equality must hold. 
\end{proof}

\begin{corollary}\label{thm:genweibound}
Let $\C \leq \F_q^n$ be an LRC of dimension $k$, locality $r$, and let $d_i$ denote its generalized weights where $1 \le i \le k$. Then 
$$d_i \leq n -k +i - \left(\left \lceil  \frac{k-(i-1)}{r} \right \rceil -1\right).$$
\end{corollary}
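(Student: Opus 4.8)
The plan is to combine Theorem~\ref{thm:mugenweight} with the lower bound on $\mu_i$ established in Lemma~\ref{bound mus}. This is essentially immediate: Theorem~\ref{thm:mugenweight} gives the exact identity $d_i(\mC) = n - k - \mu_i(\mC) + i + 1$, so every lower bound on $\mu_i$ translates directly into an upper bound on $d_i$.

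First I would invoke Theorem~\ref{thm:mugenweight} to write
\begin{align*}
d_i = n - k - \mu_i + i + 1.
\end{align*}
Then, since $\mC$ has locality $r$, Lemma~\ref{bound mus} applies and yields $\mu_i \geq \left\lceil \frac{k - (i-1)}{r} \right\rceil$. Substituting this into the identity (and using that $\mu_i$ appears with a negative sign) gives
\begin{align*}
d_i = n - k - \mu_i + i + 1 \leq n - k + i + 1 - \left\lceil \frac{k-(i-1)}{r} \right\rceil = n - k + i - \left( \left\lceil \frac{k-(i-1)}{r} \right\rceil - 1 \right),
\end{align*}
which is exactly the claimed inequality.

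There is essentially no obstacle here, as this is a straightforward corollary of the two preceding results; the only point requiring minor care is the bookkeeping of the $+1$ term, i.e.\ rewriting $n - k + i + 1 - \lceil (k-(i-1))/r \rceil$ in the form $n - k + i - (\lceil (k-(i-1))/r \rceil - 1)$, which is purely cosmetic. One should also note that the hypotheses of Lemma~\ref{bound mus} (that $\mC$ is an LRC of dimension $k$ with locality $r$) coincide with those of the corollary, and that the identity in Theorem~\ref{thm:mugenweight} holds for all $1 \le i \le k$, so the range of $i$ matches throughout. As a sanity check, for $i = 1$ one recovers the Generalized Singleton Bound of Theorem~\ref{thm:single} (rearranged), which is consistent with $d_1 = d$.
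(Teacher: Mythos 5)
Your proof is correct and follows exactly the same route as the paper's: combine the exact identity $d_i = n-k-\mu_i+i+1$ from Theorem~\ref{thm:mugenweight} with the lower bound $\mu_i \geq \lceil (k-(i-1))/r \rceil$ from Lemma~\ref{bound mus}, then rearrange. Nothing is missing.
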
 

\begin{proof}
Combining \cref{thm:mugenweight} and Lemma \ref{bound mus} we obtain
\begin{align*}
    d_i &= n-k - \mu_i +i+1 \\
        &\leq n -k - \left \lceil  \frac{k-(i-1)}{r} \right\rceil +i +1. \qedhere
\end{align*}
\end{proof}

\begin{remark}
 \cref{thm:genweibound} is a special case of the bound established in \cite[Theorem 1]{hao20} for $(r, \delta)$-LRCs. However, the result therein is proved using the gap numbers of a code, which differs from our approach. 
\end{remark}

Using the generalized Singleton-type bounds (see Theorem~\ref{thm:boundsgenweight} (i)), it is possible to determine or obtain a bound on the second generalized weight of optimal LRC.
Note that in \cite[Theorem 5]{hao20}, the authors establish the full generalized weight hierarchy of optimal LRCs under the assumption that $r|k$. 
The proof of \cite[Theorem~5]{hao20}
is based on the following result, which we can prove for completeness using the theory developed in this paper.

\begin{proposition}\label{prop:2genwei}
Let $\C \leq \F_q^n$ be an optimal LRC of dimension $k$, locality $r$, and let $d_i$ denote its $i$-th generalized weight, $1 \le i \le k$.
\begin{itemize}
    \item If $k \not \equiv 1 \pmod r$, then $d_2 = d_1 + 1$.
    \item If $k \equiv 1 \pmod r$, then $d_2 \leq d_1 +2$.
\end{itemize}
\end{proposition}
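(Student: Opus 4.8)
The plan is to read off everything from the dictionary between generalized weights and the parameters $\mu_i$ provided by Theorem~\ref{thm:mugenweight}, together with the optimality assumption and the monotonicity constraint on the $\mu_i$. Write $k^\perp=n-k$ and recall $d_i=n-k-\mu_i+i+1$. Optimality of $\mC$ means $d_1=n-k+2-\lceil k/r\rceil$, i.e. $\mu_1=\lceil k/r\rceil$. By Lemma~\ref{bound mus} we also have $\mu_2\ge\lceil(k-1)/r\rceil$. Since $d_1<d_2$ forces $\mu_2\le\mu_1+1$ (subtract the two instances of Theorem~\ref{thm:mugenweight}: $d_2-d_1=1-(\mu_2-\mu_1)>0$ gives $\mu_2\le\mu_1$, but we actually just need $\mu_2 - \mu_1\le 1$, which follows a fortiori), the value $\mu_2$ is pinned into a short window. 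The two cases of the statement correspond exactly to how $\lceil(k-1)/r\rceil$ compares to $\lceil k/r\rceil$.

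First I would treat the case $k\not\equiv 1\pmod r$. Here $\lceil(k-1)/r\rceil=\lceil k/r\rceil$, so Lemma~\ref{bound mus} gives $\mu_2\ge\lceil k/r\rceil=\mu_1$. Combined with the monotonicity fact $\mu_2\le\mu_1$ that comes from $d_1<d_2$ — indeed Theorem~\ref{thm:mugenweight} gives $d_2-d_1=(i=2\text{ vs }i=1)=1-\mu_2+\mu_1$, and $d_2>d_1$ forces $1-\mu_2+\mu_1\ge 1$, hence $\mu_2\le\mu_1$ — we conclude $\mu_2=\mu_1$. Plugging back in: $d_2=n-k-\mu_2+3=(n-k-\mu_1+2)+1=d_1+1$, which is the first bullet.

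For the case $k\equiv 1\pmod r$ I would argue: now $\lceil(k-1)/r\rceil=\lceil k/r\rceil-1=\mu_1-1$, so Lemma~\ref{bound mus} only yields $\mu_2\ge\mu_1-1$. Feeding this into $d_2=n-k-\mu_2+3$ gives $d_2\le n-k-(\mu_1-1)+3=(n-k-\mu_1+2)+2=d_1+2$, the second bullet. The main (and really only) subtlety to handle carefully is the arithmetic with the ceiling function when $k\equiv 1\pmod r$ versus not, and making sure the inequality $\mu_2\le\mu_1$ extracted from strict monotonicity of the generalized weights is applied with the correct direction; once those are in place the two displays are immediate substitutions into Theorem~\ref{thm:mugenweight}. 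I expect no genuine obstacle — the work is entirely bookkeeping on $\mu_1,\mu_2$ — but I would double-check the edge case $k=1$ (where $d_2$ is undefined, so implicitly $k\ge 2$) and the case $r\mid k$ with $k/r=1$, to be safe.
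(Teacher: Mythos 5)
Your proof is correct and is essentially the same argument as the paper's. The paper invokes Corollary~\ref{thm:genweibound} (the bound $d_i \le n-k+i-\lceil (k-i+1)/r\rceil+1$) for $i=2$ and then uses $d_1<d_2$; you instead re-derive that bound for $i=2$ on the fly by plugging Lemma~\ref{bound mus} directly into Theorem~\ref{thm:mugenweight}, and in the first case you additionally extract $\mu_2\le\mu_1$ from monotonicity before taking the difference rather than after. These are the same ingredients and the same ceiling arithmetic at a slightly lower level of abstraction, so there is no genuinely different route here — the bookkeeping and the case split on $k\bmod r$ coincide.
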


\begin{proof}
Assume $k \not \equiv 1 \pmod r$. First note that $\left \lceil k / r \right \rceil =\left \lceil (k-1) / r \right \rceil $. Since $\C$ is optimal, we have $d = n - k - \left \lceil k / r \right \rceil +2$. Using Theorem~\ref{thm:genweibound} we get the following bound on $d_2$:
\begin{align*}
    d_2 &\leq n- k +2 - (\left \lceil (k-1) / r \right \rceil) + 1\\
        &\leq n - k -  \left \lceil k / r \right \rceil + 3\\
        &= d_1 +1.
\end{align*}
However since $d_1 < d_2$  it must be that $d_2 = d_1 + 1$. 
Now assume $k \equiv 1 \pmod r$. Then $d_2 \leq d_1 +2$. Therefore $\left \lceil k / r \right \rceil =\left \lceil (k-1) / r \right \rceil -1 $. Using Theorem \ref{thm:genweibound} in a similar way as above, we get $d_2 \leq d_1 + 2.$
\end{proof}

Combining \cref{prop:2genwei} with \cref{thm:boundonq} we get as an immediate corollary an upper bound on the minimum distance of optimal LRCs. 
A similar result was shown in \cite[Theorem 2]{hao2020bounds} by  considering a suitable construction. We, on the other hand, show that the following bound can be seen as a direct consequence of the change in generalized weights of an optimal LRC.

\begin{corollary}\label{thm:boundq}
Let $\mC \leq \F_q^n$ be an optimal LRC of dimension $k$, minimum distance $d$, and locality $r$. 
\begin{itemize}
    \item If $k \not \equiv 1 \pmod r$, then $d \leq q$.
    \item If $k \equiv 1 \pmod r$, then $d \leq 2q$.
\end{itemize}
\end{corollary}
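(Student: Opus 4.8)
The plan is to simply feed the outcome of Proposition~\ref{prop:2genwei} into Theorem~\ref{thm:boundonq}. Write $d_1,d_2$ for the first two generalized weights of the optimal LRC $\mC$, and set $s:=d_2-d_1$. Since generalized weights are strictly increasing we always have $s\ge 1$, so $s$ is a positive integer and $d_2=d+s$ with $d:=d_1$; this is exactly the hypothesis format required by Theorem~\ref{thm:boundonq}.

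First I would dispatch the case $k\not\equiv 1 \pmod r$. Here Proposition~\ref{prop:2genwei} gives $d_2=d_1+1$, i.e.\ $s=1$, and Theorem~\ref{thm:boundonq} yields $d\le sq=q$. Next, for $k\equiv 1\pmod r$, Proposition~\ref{prop:2genwei} gives $d_2\le d_1+2$, so $1\le s\le 2$; applying Theorem~\ref{thm:boundonq} again gives $d\le sq\le 2q$. That closes both cases.

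There is essentially no obstacle: the corollary is an immediate formal consequence of the two cited results, and the only minor point to be careful about is that Theorem~\ref{thm:boundonq} is stated with the \emph{equality} $d_2=d+s$, so one must read off $s$ as the actual gap $d_2-d_1$ (rather than plugging in the bound $s=2$ directly) in the congruent case, and observe $s\ge1$ so that the inequality $d\le sq$ indeed produces the claimed $d\le 2q$. If I wanted to be slightly more self-contained I could also note that $d\le q$ in the first case recovers the bound of Corollary~\ref{cor:sharperSinglbound} applied to an MDS-like situation, but that remark is not needed for the proof.
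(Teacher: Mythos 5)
Your proof is correct and follows exactly the route the paper takes: the paper states this corollary as an immediate consequence of combining Proposition~\ref{prop:2genwei} with Theorem~\ref{thm:boundonq}, which is precisely your argument. Your careful observation about reading off $s$ as the actual gap $d_2-d_1$ (rather than the upper bound $2$) in the congruent case is the right way to make the application of Theorem~\ref{thm:boundonq} airtight.
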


\begin{remark}
For $q=2$ there exist codes for which the bound in  \cref{thm:boundq} for $k \equiv 1 \pmod r$  is attained.
For example, the simplex code of \cref{ex:simplex} is an optimal LRC where  $d = 4 = 2q$.  
\end{remark}

We can now derive a Singleton-type bound for LRCs that depends on the field size. 

\begin{proposition}\label{prop:lrcsinglq}
    Let $\mC \leq \F_q^n$ be an LRC of dimension $k$ and locality $r$. Then 
    \begin{equation}
        d \leq \frac{q}{q+1}(n - k - \left \lceil \frac{k-1}{r} \right \rceil +3). 
    \end{equation}
\end{proposition}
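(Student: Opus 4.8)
The plan is to combine the generalized-weight bound for LRCs from \cref{thm:genweibound} with the field-size bound of \cref{thm:boundonq}. Recall that \cref{thm:boundonq} says that if $d_2 = d_1 + s$ then $d = d_1 \le sq$; equivalently $d \le q(d_2 - d_1)$, i.e.\ $d(q+1) \le q\,d_2$ after adding $qd$ to both sides is not quite it --- rather $d \le q(d_2 - d)$ gives $d + qd \le q d_2$, so $d \le \frac{q}{q+1} d_2$. So the whole task reduces to bounding $d_2$ from above for an LRC.

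First I would apply \cref{thm:genweibound} with $i = 2$: for an LRC of dimension $k$ and locality $r$,
\[
d_2 \le n - k + 2 - \left(\left\lceil \frac{k-1}{r}\right\rceil - 1\right) = n - k + 3 - \left\lceil \frac{k-1}{r}\right\rceil.
\]
Then I would substitute this into the inequality $d \le \frac{q}{q+1} d_2$ obtained above from \cref{thm:boundonq}, yielding
\[
d \le \frac{q}{q+1}\left(n - k - \left\lceil \frac{k-1}{r}\right\rceil + 3\right),
\]
which is exactly the claimed bound.

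The one point that requires a little care is the edge case $k = 1$, where $d_2$ is not defined (the code has no two-dimensional subcode), and more generally one should check that \cref{thm:boundonq} is being applied legitimately --- it only needs a code with a well-defined $d_2$, so one needs $k \ge 2$; for $k = 1$ the statement can be checked directly or dismissed as degenerate. A second subtlety: \cref{thm:boundonq} as stated is phrased with $d_2 = d + s$, so one is implicitly using $s = d_2 - d \ge 1$ (strict monotonicity of generalized weights), which is fine. I expect the main (and essentially only) obstacle to be bookkeeping: making sure the ceiling term $\lceil (k-1)/r\rceil$ from \cref{thm:genweibound} lines up exactly with the $\lceil (k-1)/r \rceil$ appearing in the target inequality, and that no off-by-one error creeps in when rearranging $d \le q(d_2 - d)$ into $d \le \frac{q}{q+1} d_2$. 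There is no deep difficulty here --- the proposition is a clean corollary of the two cited results.
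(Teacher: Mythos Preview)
Your proposal is correct and follows essentially the same approach as the paper: bound $d_2$ via \cref{thm:genweibound} with $i=2$, then feed that into \cref{thm:boundonq} and rearrange. The only cosmetic difference is that you first isolate the clean inequality $d \le \tfrac{q}{q+1}\,d_2$ and then substitute the upper bound on $d_2$, whereas the paper bounds $d_2 - d$ first and then applies \cref{thm:boundonq}; the algebra is identical. Your remark about the degenerate case $k=1$ (where $d_2$ is undefined) is a fair observation that the paper does not address explicitly.
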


\begin{proof}
    By \cref{thm:boundsgenweight} we have $d_2 \leq n - k - \lceil \frac{k-1}{r} \rceil +3$. Hence $\smash{d_2 - d \leq n - k - \left \lceil \frac{k-1}{r} \right \rceil +3 - d}$ and applying \cref{thm:boundonq} we get that 
    $$d \leq q(n - k - \left \lceil \frac{k-1}{r} \right \rceil +3 - d).$$
    By rewriting the inequality above we get the desired result. 
\end{proof}

The previous theorem leads to interesting restrictions on the parameters of optimal LRC, including the field size $q$. 

\begin{corollary}
    Let $\mC \leq \F_q^n$ be an optimal LRC with dimension $k$ and locality $r$. 
    \begin{itemize}
        \item If $k \leq q$ and $k \not \equiv 1 \pmod r$, then $n \leq 3q$, 
        \item If $k \leq q$ and $k \equiv 1 \pmod r$, then $n \leq 4q+1$. 
    \end{itemize}
\end{corollary}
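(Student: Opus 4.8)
The plan is to combine the defining equality of an optimal LRC with the two distance bounds of \cref{thm:boundq}, and then bound each summand in the resulting expression for $n$ by $q$. Throughout I assume $k \ge 2$ (this is already implicit in \cref{prop:lrcsinglq} and \cref{thm:boundq}, since those invoke the second generalized weight); the degenerate case $k=1$ is to be read as excluded. Since $\mC$ meets the Generalized Singleton Bound \eqref{eqt:single} with equality, we have $d = n-k-\lceil k/r\rceil+2$, i.e.
\[ n = d + k + \left\lceil \frac{k}{r} \right\rceil - 2 . \]
The only additional input needed is the trivial inequality $\lceil k/r\rceil \le k$, valid because $r \ge 1$; together with the hypothesis $k \le q$ it gives simultaneously $k \le q$ and $\lceil k/r\rceil \le q$.

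Next I would split along the congruence class of $k$ modulo $r$, mirroring \cref{thm:boundq}. If $k \not\equiv 1 \pmod r$, then \cref{thm:boundq} gives $d \le q$, so substituting into the displayed identity yields $n \le q + q + q - 2 = 3q - 2 \le 3q$. If $k \equiv 1 \pmod r$, then \cref{thm:boundq} gives $d \le 2q$, so $n \le 2q + q + q - 2 = 4q - 2 \le 4q + 1$. This completes the argument.

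I do not expect any genuine obstacle here: the substance sits in \cref{thm:boundonq} and \cref{prop:2genwei} (already proved), and what remains is purely arithmetic bookkeeping. If one prefers a derivation that does not cite \cref{thm:boundq}, the bounds $d\le q$ and $d\le 2q$ can be extracted directly from \cref{prop:lrcsinglq}: for an optimal LRC the right-hand argument $n-k-\lceil (k-1)/r\rceil+3$ equals $d+1$ when $k\not\equiv 1\pmod r$ (as then $\lceil (k-1)/r\rceil=\lceil k/r\rceil$) and equals $d+2$ when $k\equiv 1\pmod r$ (as then $\lceil (k-1)/r\rceil=\lceil k/r\rceil-1$), so \cref{prop:lrcsinglq} becomes $d\le \frac{q}{q+1}(d+1)$, respectively $d\le \frac{q}{q+1}(d+2)$, and rearranging gives $d\le q$, respectively $d\le 2q$. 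It is worth noting that this reasoning in fact yields the slightly sharper bounds $n\le 3q-2$ and $n\le 4q-2$.
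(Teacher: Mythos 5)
Your proof is correct and follows a genuinely different route from the paper's, so let me compare. The paper starts from \cref{prop:lrcsinglq}, algebraically rewrites $d \le \frac{q}{q+1}(n-k-\lceil (k-1)/r\rceil+3)$ into the form $d \le n-k-\lceil k/r\rceil+2 + (3 - \frac{n+3}{q+1})$, then argues by contradiction with the optimality equality: if the parenthesised term were negative, the bound would be strictly sharper than \eqref{eqt:single}, which an optimal LRC attains. (The paper's displayed chain contains an ``$=$'' that should be a ``$\le$'', as the fraction $\frac{q+\lceil(q-1)/r\rceil}{q+1}$ is generically strictly less than $2$, but the direction of the inequality is preserved.) Your proof instead substitutes the optimality equality $n = d + k + \lceil k/r\rceil - 2$ directly and bounds each of the three summands by $q$ (respectively $2q$ for $d$ in the second case), using \cref{thm:boundq} for $d$, the hypothesis $k\le q$, and the elementary inequality $\lceil k/r\rceil \le k$. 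This is shorter, avoids the algebraic manipulation, and, as you note, yields the slightly sharper constants $n \le 3q-2$ and $n \le 4q-2$. Your alternative derivation of the $d\le q$ and $d\le 2q$ bounds directly from \cref{prop:lrcsinglq} is also correct, and exactly recovers the chain of reasoning the paper uses to establish \cref{thm:boundq} via \cref{prop:2genwei} and \cref{thm:boundonq}. One further point in your favour: you explicitly flag that the argument requires $k\ge 2$ (since \cref{prop:lrcsinglq} and \cref{thm:boundq} invoke $d_2$), a hypothesis the paper leaves implicit; indeed, for $k=1$ any non-degenerate one-dimensional code is an optimal LRC of arbitrary length, so the statement genuinely fails without it.
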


\begin{proof}
   By rewriting \cref{prop:lrcsinglq} we have
    $$d \leq n- k - \left \lceil \frac{k-1}{r} \right \rceil +3 - \frac{n-k- \lceil (k-1)/r \rceil +3}{q+1}.$$
First we assume $k \not \equiv 1 \pmod r$. By applying the assumptions on $k$ we then obtain:
\begin{align*}
    d & \leq n- k - \left \lceil \frac{k}{r} \right \rceil +3 - \frac{n-q-\lceil (q-1)/r \rceil + 3}{q+1}\\
    & =  n- k - \left \lceil \frac{k}{r} \right \rceil +3 - \frac{n +3}{q+1} + \frac{q+\lceil (q-1)/r \rceil}{q+1}\\
    &=  n- k - \left \lceil \frac{k}{r} \right \rceil +2 - \frac{n+3}{q+1} +3
\end{align*}
Note that if  $3 -  ({n+3})/({q+1}) < 0$ then we get a sharper bound than \eqref{eqt:single}. Hence if $\mC$ is an optimal LRC we must have $3 - ({n+3})/({q+1}) \geq 0$, which is true if and only if $n \leq 3q$.

If $k \equiv 1 \pmod r$ we apply a similar argument with the only difference that $\lceil (k-1)/r \rceil = \lceil k/r \rceil -1$. We then get that $n \leq 4q+1$.
\end{proof}

\begin{remark}
    The previous result shows that
    an optimal LRC $\C \in \F_q^n$ cannot exist when $k \le q < (n-1)/4$.
\end{remark}

\newpage
\bibliographystyle{amsplain}
\bibliography{ourbib}

\providecommand{\bysame}{\leavevmode\hbox to3em{\hrulefill}\thinspace}
\providecommand{\MR}{\relax\ifhmode\unskip\space\fi MR }
% \MRhref is called by the amsart/book/proc definition of \MR.
\providecommand{\MRhref}[2]{%
  \href{http://www.ams.org/mathscinet-getitem?mr=#1}{#2}
}
\providecommand{\href}[2]{#2}
\begin{thebibliography}{10}

\bibitem{agarwal2018combinatorial}
A.~Agarwal, A.~Barg, S.~Hu, A.~Mazumdar, and I.~Tamo, \emph{Combinatorial
  alphabet-dependent bounds for locally recoverable codes}, IEEE Transactions
  on Information Theory \textbf{64} (2018), no.~5, 3481--3492.

\bibitem{barg2017locally2}
A.~Barg, K.~Haymaker, E.~Howe, G.~Matthews, and A.~V{\'a}rilly-Alvarado,
  \emph{Locally recoverable codes from algebraic curves and surfaces},
  Algebraic Geometry for Coding Theory and Cryptography: IPAM, Los Angeles, CA,
  February 2016, Springer, 2017, pp.~95--127.

\bibitem{barg2017locally}
A.~Barg, I.~Tamo, and S.~Vl{\u{a}}du{\c{t}}, \emph{Locally recoverable codes on
  algebraic curves}, IEEE Transactions on Information Theory \textbf{63}
  (2017), no.~8, 4928--4939.

\bibitem{cadambe2015bounds}
V.~Cadambe and A.~Mazumdar, \emph{Bounds on the size of locally recoverable
  codes}, IEEE transactions on information theory \textbf{61} (2015), no.~11,
  5787--5794.

\bibitem{delsarte1973algebraic}
P.~Delsarte, \emph{An algebraic approach to the association schemes of coding
  theory}, Philips Res. Rep. Suppl. \textbf{10} (1973), vi+--97.

\bibitem{freij2018matroid}
R.~Freij-Hollanti, C.~Hollanti, and T.~Westerb{\"a}ck, \emph{Matroid theory and
  storage codes: bounds and constructions}, Network Coding and Subspace
  Designs, Springer, 2018, pp.~385--425.

\bibitem{gluesing2015fourier}
H.~Gluesing-Luerssen, \emph{Fourier-reflexive partitions and {M}acwilliams
  identities for additive codes}, Designs, Codes and Cryptography \textbf{75}
  (2015), 543--563.

\bibitem{Gopalan}
P.~Gopalan, C.~Huang, H.~Simitci, and S.~Yekhanin, \emph{On the locality of
  codeword symbols}, IEEE Transactions on Information Theory \textbf{58}
  (2012), no.~11, 6925--6934.

\bibitem{grezet2019alphabet}
M.~Grezet, R.~Freij-Hollanti, T.~Westerb{\"a}ck, and C.~Hollanti,
  \emph{Alphabet-dependent bounds for linear locally repairable codes based on
  residual codes}, IEEE Transactions on Information Theory \textbf{65} (2019),
  no.~10, 6089--6100.

\bibitem{gruica2022duality}
A.~Gruica, B.~Jany, and A.~Ravagnani, \emph{Duality and {LP} bounds for codes
  with locality}, Proc. IEEE Information Theory Workshop (2023).

\bibitem{guruswami19}
V.~Guruswami, C.~Xing, and C.~Yuan, \emph{How long can optimal locally
  repairable codes be?}, IEEE Transactions on Information Theory \textbf{65}
  (2019), no.~6, 3662--3670.

\bibitem{hao20}
J.~Hao and B.~Chen, \emph{On the generalized {H}amming weights of $(r,
  \delta)$-locally repairable codes}, IEEE Access \textbf{8} (2020),
  149706--149713.

\bibitem{hao2020bounds}
J.~Hao, S.~Xia, K.~Shum, B.~Chen, F.~Fu, and Y.~Yang, \emph{Bounds and
  constructions of locally repairable codes: parity-check matrix approach},
  IEEE Transactions on Information Theory \textbf{66} (2020), no.~12,
  7465--7474.

\bibitem{helleseth1995bounds}
T.~Helleseth, T.~Klove, V.~Levenshtein, and {\O}.~Ytrehus, \emph{Bounds on the
  minimum support weights}, IEEE transactions on information theory \textbf{41}
  (1995), no.~2, 432--440.

\bibitem{huffman_pless_2003}
C.W. Huffman and V.~Pless, \emph{Fundamentals of error-correcting codes},
  Cambridge University Press, 2003.

\bibitem{macwilliams1963theorem}
J.~MacWilliams, \emph{A theorem on the distribution of weights in a systematic
  code}, Bell System Technical Journal \textbf{42} (1963), no.~1, 79--94.

\bibitem{Micheli}
G.~Micheli, \emph{Constructions of locally recoverable codes which are
  optimal}, IEEE Transactions on Information Theory \textbf{66} (2020), no.~1,
  167--175.

\bibitem{prakash2012optimal}
N.~Prakash, G.~Kamath, V.~Lalitha, and P.~Kumar, \emph{Optimal linear codes
  with a local-error-correction property}, 2012 IEEE International Symposium on
  Information Theory Proceedings, IEEE, 2012, pp.~2776--2780.

\bibitem{rawat2015cooperative}
A.S. Rawat, A.~Mazumdar, and S.~Vishwanath, \emph{Cooperative local repair in
  distributed storage}, EURASIP Journal on Advances in Signal Processing
  \textbf{2015} (2015), 1--17.

\bibitem{simonis1995macwilliams}
J.~Simonis, \emph{Macwilliams identities and coordinate partitions}, Linear
  Algebra and its Applications \textbf{216} (1995), 81--91.

\bibitem{singleton1964maximum}
R.~Singleton, \emph{Maximum distance $q$-nary codes}, IEEE Transactions on
  Information Theory \textbf{10} (1964), no.~2, 116--118.

\bibitem{tamo2014family}
I.~Tamo and A.~Barg, \emph{A family of optimal locally recoverable codes}, IEEE
  Transactions on Information Theory \textbf{60} (2014), no.~8, 4661--4676.

\bibitem{tamo2016optimal}
I.~Tamo, D.~Papailiopoulos, and A.~Dimakis, \emph{Optimal locally repairable
  codes and connections to matroid theory}, IEEE Transactions on Information
  Theory \textbf{62} (2016), no.~12, 6661--6671.

\bibitem{wei1991generalized}
V.~Wei, \emph{Generalized hamming weights for linear codes}, IEEE Transactions
  on information theory \textbf{37} (1991), no.~5, 1412--1418.

\end{thebibliography}

\end{document}